\newcommand{\appn}{\;\shortstack{$ \approx $ \\ $\scriptstyle{ n \to \infty}$}\;}
\newcommand{\preceqnh}{\;\shortstack{$\preceq$ \\ $\scriptstyle{n\to \infty}$ \\ $\scriptstyle{h\to 0}$}\;} 
\newcommand{\preceqh}{\;\shortstack{$ \approx $ \\ $\scriptstyle{ h \to 0}$}\;}
\newtheorem*{assumption*}{\assumptionnumber}
\providecommand{\assumptionnumber}{}
\newenvironment{assumption}[2]
 {%
  \renewcommand{\assumptionnumber}{Assumption #1#2}%
  \begin{assumption*}%
  \protected@edef\@currentlabel{\textbf{#1#2}}%
 }
 {%
  \end{assumption*}
 }
\newtheorem{theorem}{Theorem}[section]
\newtheorem{remark}{Remark}[section]
\newtheorem{lemma}{Lemma}[section]
\def\car{\mbox{\rm{1\hspace{-0.08 cm }I}}}
\def\W{\mathcal{W}}
\def\R{\mathbb{R}}
\def\Rd{\mathbb{R}^d}
\def\R+{\mathbb{R}^+}
\def\E{\mathbb{E}}
\title{Spatio - Temporal Weighted Regression Model with Fractional-Colored
Noise:  Parameter estimation and consistency}
\author{
  Lisandro J. Fermín\\ 
  Instituto de Ingeniería Matemática\\
  Facultad de Ingeniería\\
  Universidad de Valparaíso\\
  \texttt{lisandro.fermin@uv.cl} \\
   \And
 Silfrido Goméz \\
 Instituto de Estadística\\
 Facultad de Ciencias\\
  Universidad de Valparaíso\\
  \texttt{silfrido.gomez@uv.cl} \\
  \And
 Soledad Torres \\
  Instituto de Ingeniería Matemática\\
  Facultad de Ingeniería\\
  Universidad de Valparaíso\\
  \texttt{soledad.torres@uv.cl} \\
   \And
Héctor Araya \\
  Facultad de Ingeniería y Ciencias\\
  Universidad Adolfo Ibáñez\\
  \texttt{hector.araya@uai.cl} \\
   \And
Tania Roa \thanks{corresponding author} \\
  Facultad de Ingeniería y Ciencias\\
  Universidad Adolfo Ibáñez\\
  Data Observatory\\
  \texttt{tania.roa@uai.cl} \\
}
\begin{document}
\maketitle
\begin{abstract} 
Geographical and Temporal Weighted Regression (GTWR) model is an important local technique for exploring spatial heterogeneity in data relationships, as well as temporal dependence due to its high fitting capacity when it comes to real data. In this article, we consider a GTWR model driven by a spatio-temporal noise, colored in space and fractional in time. Concerning the covariates, we consider that they are correlated, taking into account two interaction types between covariates, weak and strong interaction. Under these assumptions, Weighted Least Squares Estimator (WLS) is obtained, as well as its rate of convergence. In order to evidence the good performance of the estimator studied, it is provided a simulation study of four different scenarios, where it is observed that the residuals oscillate with small variation around zero. The STARMA package of the R software allows obtaining a variant of the $R^{2}$ coefficient, with values very close to 1, which means that most of the variability is explained by the model.
\end{abstract}

\keywords{Geographically and Temporally Weighted Regression \and Fractional Colored Noise \and Consistency} 

\textbf{MSC:} Primary 62M30, Secondary 62M10.

\section{Introduction}

Spatio-temporal weighted regression models have been  widely used to analyze and  visualize geo-referenced information in many research areas. Some examples, can be evidenced in the exploration of spatio-temporal patterns of human behavior \cite{Chen2011,Kwan2000} , modeling the variation of housing prices as a function of their georeferencing \cite{Foth2003}, criminal activities \cite{Brundson2007,Naka2010}; , disease outbreaks \cite{Taka2008} and, in methods for analyzing and visualizing data in space and time \cite{Andrei2010,Demsar2010,Rey2009}. 
Within the theory of geospatial statistics, these models have allowed the deepening  of  environmental variables analysis such as  the temperature present in certain locations and soil moisture, among others,  through satellite images captured over the earth's surface at different moments in time and which, by means of strategically located temperature sensors, allow modeling the spatio-temporal behavior of the ground surface temperature. Such is the case of the work done by the authors in \cite{Peng2019}; where they propose a new algorithm based on a geographically and temporally weighted regression model, for the spatial downscaling of the radiometric spectrum of moderate resolution images from 1000 to 100 meters, in data related to ground surface temperature. It is worth mentioning that the use and implementation of these spatio-temporal weighted regression models, is largely due to the high fitting capacity it has with respect to real data, both globally and locally. Furthermore, the recommended use of spatio-temporal weighted regression models on georeferenced data is always advisable when the data present heterogeneity and stationarity at the spatio-temporal level. For example, the authors in \cite{Shol2017}; where economic growth is compared between regions in India, using two different models, a global spatio-temporal regression model and a spatio-temporal weighted regression model. Thus, the researchers show through the results obtained a better fit by the spatio-temporal weighted regression model than that obtained with the global model.

To study these models, it is necessary to understand the complexity of the spatio-temporal covariance structure between the explanatory variables, and the behavior of the error considered within the model.  Thus, given a spatio-temporal weighted regression model, which is specifically inspired by the geographically and temporally weighted regression model proposed by the authors in \cite{Crespo2015}, we state the regression model: 

\begin{equation}\label{model}
    Y_i=\beta_0(z_i) + \sum_{j=1}^p \beta_j(z_i)X_{i,j} + \epsilon_i,\quad i=1,\ldots,n \, ; \,\,\,\,\,p \in \mathbb{N}
\end{equation}

 where $z_i=(t_i, u_i)$ denotes the coordinates of the 
observation at point $z_i$, in the space $u_i  \in \mathbb{R}^d$ space at a time $t_i$ in $t_i \in \mathbb{R}^+$, $\beta_0(z_i)$ denotes the value of the intercept, $\beta_j(z_i)$ 
denotes the parameter associated with the $j_{th}$ covariate $X_j$ at point $z_i$, and $\epsilon_i$ is the colored fractional noise at point $z_i$, defined in \cite{Man1968}, i.e., $\epsilon=(\epsilon_i)_{i=1,\ldots,n}$ is a Gaussian noise that behaves like a fractional Brownian motion (fBm) in time and has white or colored spatial covariance. Basically, $\epsilon_i$ by presenting these characteristics, it intuitively gives us an idea of the level of irregularity or variability that can present the spatio - temporal information that is known, in relation to  what we want to estimate.

In this work, our main result, proves the strong consistency of the spatio-temporal weighted least squares estimator (WLSE), under certain H\"older-type regularity conditions on the continuity of the spatio-temporal trajectories described by the covariates. This estimator is expressed as: 
\begin{eqnarray*}
\hat{\beta}(z_{i}) &=&(X^{T}\W(z_i)X)^{-1}X^{T}\W(z_i)(X\beta(z_{i})+\epsilon),\\
\end{eqnarray*}
where $z_i$ denotes the $z_i^{th}$-th spatio-temporal observation, $X$ is a $n \times (p+1)$ - order matrix corresponding to the covariate entries, $Y$ is the $n$-dimensional vector of spatio - temporal observations, $\W(z_i)$ is a positive definite symmetric $n \times n$ - order matrix, known as the weights matrix, and $\epsilon = (\epsilon_l)_{l=1: n}$ has as its associated covariance function:
\begin{eqnarray}
\mathbb{E}(\epsilon_l \epsilon_{l'}) &=& \frac{1}{2}\left( \int_{t^{-}_l}^{t_l^+} \int_{t_{l'}^-}^{t_{l'}^+} 2H(2H-1)|t-t'|^{2H-2} dt' dt
\right) \nonumber \\
& \times & \left(\int_{\Rd} \int_{\mathbb{R}^d} \car_{V(u_l)}(u) \gamma_{\alpha,d}\|u-v\|^{-d+\alpha} \car_{V(u_{l'})}(v) du dv\right). \label{colored_cova}
\end{eqnarray}

The above expression \eqref{colored_cova}, is derived from the work \cite{Torres2014}, where the Riesz kernel function of order $\alpha$, given by $\gamma_{\alpha,d}\|u - v\|^{-d+\alpha}$, is considered as the spatial covariance of the noise. Our main result is the convergence in $L^{2}$ and in probability of the spatio -temporal weighted least squares estimator. Finally, in this work we perform a simulation study on possible scenarios that can be considered for estimating under regularity conditions assigned to the Hurst $H$ index chosen for the spatial and temporal covariance. These considered cases are accompanied by a graphical display of the stability of the Mean Squared Error (MSE) of the spatio -temporally weighted least squares estimator in each situation.

The organization of the present work was structured as follows: Section 2 presents the weighted regression model considered in this work; the fractional colored noise, as well as the spatio-temporal point measure of the noise over the observations $z_{i}$ . We also derive the explicit form of the spatio-temporal colored fractional noise covariance function, and the correlation type between the explanatory variables along with the assumptions to be considered, are defined. The spatio-temporal weighted least squares estimator of the proposed model is shown. In Section 3, the convergence in quadratic norm of the weighted least-squares estimator is proven, via an auxiliary lemma that proves the convergence of the least - squares estimator in probability. Section 4 presents the results and simulation work performed based on the different scenarios considered and finally Section 5 includes an appendix showing the details of the proof for the auxiliary lemma that is considered in the proof of the paper’s most important result.


\section{The model}

\subsection{Weighted regression model}\label{wrm}

The geographically and temporally weighted regression (GTWR) model is a spatio-temporal varying coefficient regression approach for exploring spatial nonstationarity and temporal dependence of a regression relationship for spatio-temporal data. The GTWR model can be expressed as follows:

\begin{equation}\label{model}
    Y_i=\beta_0(z_i) + \sum_{j=1}^p \beta_j(z_i)X_{i,j} + \epsilon_i,\quad i=1,\ldots,n \, ; \,\,\,\,\,p \in \mathbb{N}
\end{equation}
where $z_i=(t_i, u_i)$ denotes the coordinates of the observation point $z_i$, in space $u_i\in\mathbb{R}^d$ at time $t_i\in \mathbb{R}^+$, $\beta_0(z_i)$ indicates the intercept value, $\beta_j(z_i)$ indicates the parameter associated with the $j_{th}$ covariate $X_j$ at point $z_i$, and $\epsilon_i=\Delta W^H(z_i)$ is the fractional colored noise at $z_i$; i.e. $\epsilon=(\epsilon_i)_{i=1,\ldots,n}$ is a Gaussian noise which behaves like fractional Brownian motion (fBm) in time and has white or colored spatial covariance in space.

\begin{assumption}{M}{1}\label{M1}
The noise $\epsilon=(\epsilon_i)_{i=1,\ldots,n}$, is independent of the covariates $(X_1,\ldots, X_p)$, where {$X_j \in \mathbb{R}^{n}$} for every $j=1, \ldots , p$.
\end{assumption}

\subsection{Fractional-Colored Noise}

We begin by describing the spatial covariance of the noise. Let us recall the frame-work from \cite{Torres2014}. Let $\mu$ be a non-negative tempered measure on $\mathbb{R}^d$, i.e. a non-negative which satisfies the following condition

\begin{assumption}{N}{1}\label{N1}
$\int_{\mathbb{R}^d}(1+\|\xi\|^2)^{-\ell} \mu(d\xi) < \infty, \quad \mbox{ for some } \quad \ell >0$.
\end{assumption}

Let $f:\mathbb{R}^d \rightarrow \mathbb{R}^+$ be the Fourier transform of $\mu$ in $\mathcal{S}(\mathbb{R}^d)$ (Schwarz space of rapidly decreasing $C^{\infty}$ functions on $\mathbb{R}^{d}$, see. \cite{tudor, tudor2} for details), i.e.
\begin{equation}\label{def_Fourier}
    \int_{\mathbb{R}^d} f(u)\varphi(u)du = \int_{\mathbb{R}^d} \mathcal{F}\varphi(\xi)\mu(d\xi), \quad \mbox{ for all } \quad \varphi\in \mathcal{S}(\mathbb{R}^d).
\end{equation}

Let the Hurst parameter $H$ be fixed in $(1/2,1)$. On a complete probability space $(\Omega,\mathcal{F},\mathbb{P})$, we consider a zero-mean Gaussian field $W^H = \left\{W_t^H(A): t \geq 0, A \in \mathcal{B}_b(\mathbb{R}^d)\right\}$,  defined on the set of bounded Borel measurable functions $\mathcal{B}_b(\mathbb{R}^d)$, with covariance

\begin{eqnarray}\label{cov_noise_0}
\mathbb{E}\left( W^H_t(A) W^H_s(B) \right)
&=& R_H(t,s) \int_{\mathbb{R}^d} \int_{\mathbb{R}^d} \car_{A}(u)f(u-v)\car_{B}(v)du dv\\
& := & \left< \car_{[0,t]\times A}, \car_{[0,s]\times B} \right>_{\mathcal{H}}, \nonumber 
\end{eqnarray}
where $R_H$ is the covariance of the fBm 

\begin{equation}\label{cov_fbm}
R_H(t,s)
=  \frac{1}{2} \left( t^{2H}+s^{2H}-|t-s|^{2H}\right),  \quad \mbox{ for } \quad s,t\geq 0,
\end{equation}

and $\mathcal{H}$ is the canonical Hilbert space associated with the Gaussian process $W$ is defined as the closure of the linear span generated by the indicator functions $\car_{[0,t]\times A}$,   $t \in [0,T]$, $A \in \mathcal{B}_b(\mathbb{R}^d)$ with respect to inner product given by the right hand side of \eqref{cov_noise_0}.

 This can be extended to a Gaussian noise measure on $\mathcal{B}_b(\mathbb{R^+} \times \mathbb{R}^d)$  by setting 

\begin{equation}
    W^H ((s, t] \times A) := W_t^H (A) - W_s^H (A).
\end{equation}

We suppose that the spatial covariance is given by a Riesz kernel $f$ of order $\alpha$ satisfying the following condition
\begin{assumption}{N}{2}\label{N2}
We consider $f$ as following   $f_{\alpha}(u):=\gamma_{\alpha,d}\|u\|^{-d+\alpha}$,  for  $-d<\alpha < d$ and $\gamma_{\alpha,d}=2^{d-\alpha}\pi^{d/2}\Gamma((d -\alpha)/2) / \Gamma(\alpha/2)$. In this case, $\mu(d\xi)=\|\xi\|^{-\alpha}d\xi$.
\end{assumption}

\begin{remark}
Under \ref{N2} condition \ref{N1} is satisfied for $d-\alpha<2\ell$. The special case of white noise in space is identical to the particular case of condition \ref{N2} with $\alpha=0$, in which case $\mu$ is the Lebesgue measure. 
\end{remark}

Fractional colored noise at observation point $z_l$ is defined as $\epsilon_l=\Delta W^H(z_l)$, this represents the noise measured in the neighborhood 
$$V(z_l)=\left\{ z\in \mathbb{R}^{+}\times \mathbb{R}^{d}: \|z-z_l\|=\max\{|t-t_l|,\|u-u_l\|\}\leq \delta_n\right\},$$
where $\delta_n$ is such that the volume of $V(z_l)$ is $1/n$; i.e.,
\begin{eqnarray*}
 \lambda(V(z_l)) &=&\int_{\mathbb{R}^+} \int_{\mathbb{R}^d} \car_{\{|t-t_l|\leq \delta_n\}} \car_{\{\|u-u_l\|\leq \delta_n\}} dt du \\
 &=& 2\delta_n(\delta_n)^d\lambda(\mathcal{S}^{d-1})=\frac{1}{n},
\end{eqnarray*}
with $\lambda(\mathcal{S}^{d-1})$ the volume of a d-dimensional hypersphere of unit radius.

We can rewrite $V(z_l)=(t_l^-,t_l^+]\times V(u_l)$, where $t_l^{\pm}=t_l\pm \delta_n$ and $V(u_l)= \{u\in \mathbb{R}^d : \|u-u_l\|\leq \delta_n\}$. Then,
\begin{equation}\label{def_noise}
    \epsilon_l =\Delta W^H(z_l):= W^H(V(z_l))= W^H_{t_l^+}(V(u_l)) - W^H_{t_l^-}(V(u_l)). 
\end{equation}

\begin{remark}\label{grilla}
In this paper we consider the discrete grid in $\mathbb{R}^{d}$  of distance $\delta_{n}$, i.e., each $u_{i}$  in the grid has $2^{d}$ neighbors that are at distance less than or equal to $\delta_{n}$.
\end{remark}

Next, we show an important result related to the covariance of the noise 

\begin{lemma}\label{cov_noise} The covariance function of fractional colored noise $\epsilon=(\epsilon_l)_{l=1:n}$ is given by
\begin{eqnarray}
\mathbb{E}(\epsilon_l \epsilon_{l'})
& =&\frac{1}{2}\left( \int_{t^{-}_l}^{t_l^+} \int_{t_{l'}^-}^{t_{l'}^+} 2H(2H-1)|t-t'|^{2H-2} dt' dt \right) \nonumber \\
& \times &\left(\int_{\Rd} \int_{\mathbb{R}^d} \car_{V(u_l)}(u) \gamma_{\alpha,d}\|u-v\|^{-d+\alpha} \car_{V(u_{l'})}(v) du dv\right) \label{cov_epsilon}
\end{eqnarray}

and the variance is 
$\mathbb{E}(\epsilon_l^2)  = \sigma^2 2^{2H} (\delta_n)^{2H+d+\alpha}$,
with
$\sigma^2=Var\left(W^H\left(\car_{ \{0\leq t \leq 1, \|u\|\leq 1\}}\right)\right)$.
\end{lemma}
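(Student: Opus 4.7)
The plan is to unfold $\epsilon_l = W^H_{t_l^+}(V(u_l)) - W^H_{t_l^-}(V(u_l))$ on both factors of $\mathbb{E}(\epsilon_l \epsilon_{l'})$, apply the covariance identity \eqref{cov_noise_0}, and then reduce the resulting alternating sum of values of $R_H$ to the double time integral appearing on the right-hand side of \eqref{cov_epsilon}.

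First I would expand
\[
\mathbb{E}(\epsilon_l \epsilon_{l'}) = \mathbb{E}\bigl[\bigl(W^H_{t_l^+}(V(u_l)) - W^H_{t_l^-}(V(u_l))\bigr)\bigl(W^H_{t_{l'}^+}(V(u_{l'})) - W^H_{t_{l'}^-}(V(u_{l'}))\bigr)\bigr]
\]
into the four pairwise covariances. By \eqref{cov_noise_0} together with \ref{N2}, each of them splits as a temporal factor $R_H(\cdot,\cdot)$ times the common spatial factor
\[
\int_{\mathbb{R}^d}\int_{\mathbb{R}^d} \car_{V(u_l)}(u)\,\gamma_{\alpha,d}\|u-v\|^{-d+\alpha}\,\car_{V(u_{l'})}(v)\,du\,dv,
\]
which factors out. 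What remains in the temporal part is the alternating sum
\[
\Delta R := R_H(t_l^+, t_{l'}^+) - R_H(t_l^+, t_{l'}^-) - R_H(t_l^-, t_{l'}^+) + R_H(t_l^-, t_{l'}^-).
\]

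The next step is to rewrite $\Delta R$ as the double time integral in \eqref{cov_epsilon}. Using the explicit form \eqref{cov_fbm}, the univariate terms $\tfrac{1}{2}t^{2H}$ and $\tfrac{1}{2}s^{2H}$ in $R_H(t,s)$ cancel in the alternating sum, leaving
\[
\Delta R = -\tfrac{1}{2}\bigl[|t_l^+ - t_{l'}^+|^{2H} - |t_l^+ - t_{l'}^-|^{2H} - |t_l^- - t_{l'}^+|^{2H} + |t_l^- - t_{l'}^-|^{2H}\bigr].
\]
A direct computation yields $\partial_t\partial_{t'}(-|t-t'|^{2H}) = 2H(2H-1)|t-t'|^{2H-2}$ off the diagonal. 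Since $H \in (1/2,1)$ the singularity at $t=t'$ is locally integrable, so applying the fundamental theorem of calculus twice gives
\[
\Delta R = \tfrac{1}{2}\int_{t_l^-}^{t_l^+}\int_{t_{l'}^-}^{t_{l'}^+} 2H(2H-1)|t-t'|^{2H-2}\,dt'\,dt,
\]
which is exactly the temporal factor in \eqref{cov_epsilon}.

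For the variance, I would specialize to $l=l'$. A direct evaluation of $\Delta R$ collapses to $(2\delta_n)^{2H} = 2^{2H}\delta_n^{2H}$, and the spatial factor, rescaled by $u \mapsto \delta_n u$, $v \mapsto \delta_n v$, produces a multiplicative $\delta_n^{d+\alpha}$ because the Riesz kernel is $(-d+\alpha)$-homogeneous. Identifying the remaining normalized integrals with the variance of $W^H(\car_{\{0\leq t\leq 1,\,\|u\|\leq 1\}})$ recovers the constant $\sigma^2$, and the formula $\mathbb{E}(\epsilon_l^2) = \sigma^2 2^{2H}\delta_n^{2H+d+\alpha}$ follows. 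The main subtlety, in my view, lies precisely in the passage from $\Delta R$ to the iterated integral: although $|t-t'|^{2H-2}$ diverges on the diagonal, it is Lebesgue integrable because $2H-2 > -1$, so the argument is made rigorous by first integrating on the strip $|t-t'| \geq \eta$ and then letting $\eta \downarrow 0$ by dominated convergence.
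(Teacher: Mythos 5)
Your proposal is correct and follows the same skeleton as the paper's proof: expand the rectangular increment $\epsilon_l=W^H_{t_l^+}(V(u_l))-W^H_{t_l^-}(V(u_l))$ into four covariances via \eqref{cov_noise_0}, note that the spatial factor is common to all four so only the alternating sum $\Delta R$ of $R_H$-values survives in time, identify $\Delta R$ with $\tfrac12\int\int 2H(2H-1)|t-t'|^{2H-2}\,dt'\,dt$ (equivalently, with $\tfrac12[|t_l-t_{l'}+2\delta_n|^{2H}+|t_l-t_{l'}-2\delta_n|^{2H}-2|t_l-t_{l'}|^{2H}]$, which is the form the paper writes), and then specialize to $l=l'$. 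The one step where you genuinely diverge is the spatial scaling: you rescale the real-space double integral directly, using the $(-d+\alpha)$-homogeneity of the Riesz kernel to extract $(\delta_n)^{d+\alpha}$, whereas the paper passes to the spectral side via \eqref{def_Fourier}, computes the Fourier transforms of the indicators of the rescaled balls, and uses the homogeneity of $\mu(d\xi)=\|\xi\|^{-\alpha}d\xi$. The two computations agree, and yours is more elementary, for $0<\alpha<d$, where the Riesz kernel is a positive locally integrable function; the paper's Fourier route is the one that remains meaningful over the full range $-d<\alpha<d$ of \ref{N2}, where for $\alpha\leq 0$ the kernel is only a distribution and the double integral in \eqref{cov_epsilon} must be read through the spectral measure. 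Your $\eta$-truncation argument for the diagonal singularity of $|t-t'|^{2H-2}$ is a rigor point the paper passes over silently, and it is valid since $2H-2>-1$ for $H\in(1/2,1)$.
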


\begin{proof}
The proof of Lemma \ref{cov_noise} is left in Appendix, Section \ref{ap-cov-noise}.
\end{proof}

\subsection{Correlated covariates}

We assume that the covariates $X_j$, for $j=1,\ldots,p$, of the regression model are centered and locally correlated. The covariance function is $\chi(z,z')=\left(\chi_{jk}(z,z')\right)_{j,k=1:p}$, with
\begin{equation}\label{cov_chi}
    \chi_{jk}(z,z')=\E\left(X_j(z)X_k(z')\right).
\end{equation}

\begin{assumption}{C}{1}\label{C1}
\begin{enumerate}
\item[i)] The covariance function $\chi$ is positive definite.
\item[ii)] $\chi$ is $\alpha_{\chi}-$Hölder continuous; i.e. there exist $C_{\chi}>0$ such that
     $$|\chi(z,z') - \chi(z_l,z_{l'})|\leq C_{\chi}(\|z-z_l\|+\|z'-z_{l'}\|)^{\alpha_{\chi}}.$$
\end{enumerate}
\end{assumption}

We consider the covariance function $\Gamma(z,z')=\left(\Gamma_{jk}(z,z')\right)_{j,k=1:n}$ defined by

\begin{equation}\label{cov_gamma}
    \Gamma_{jk}(z,z')= Cov\left(X_{j}(z)X_{k}(z), X_{j}(z')X_{k}(z')\right).
\end{equation}

We suppose that $\Gamma$ satisfy the following conditions.
\begin{assumption}{C}{2}\label{C2} 
\begin{enumerate}
\item[i)] The covariance function $\Gamma$ is positive definite.
\item[ii)] $\Gamma$ is $\alpha_{\Gamma}-$Hölder continue; i.e. there exist $C_{\Gamma}>0$ such that
     $$|\Gamma(z,z') - \Gamma(z_l,z_{l'})|\leq C_{\Gamma}(\|z-z_l\|+\|z'-z_{l'}\|)^{\alpha_{\Gamma}}.$$
\item[iii)] Furthermore, $\Gamma$ is such that 
$$\left|\Gamma(z,z')\right| \leq C_{k,d,\theta}\delta^{d+1+\theta},$$ for $z,z'\in\mathbb{R}^{d+1}$ such that $\|z-z'\|>k\delta$, for $\delta >0$ and some $k\in\mathbb{N}$, $-(d+1)<\theta<d+1$, and $C_{k,d,\theta}\geq 0$.
\item[iv)] $\Gamma$ is such that 
$$\left|\Gamma(z,z)\right| \leq C_{k,d},$$ for $z\in\mathbb{R}^{d+1}$ and  some $k\in\mathbb{N}$.
\end{enumerate}
\end{assumption}

\begin{remark}
Under assumption \ref{C1} we consider two interaction types between covariates $X_j's$:
\begin{description}
    \item[-] \textbf{Weak interaction:} when the parameter $\theta\leq 0$. For instance, the independent case is obtained for $\theta=0$, the $k$-dependent covariates case correspond to $C_{k,d,\theta}=0$.    
    \item[-] \textbf{Strong interaction:} when the parameter $\theta>0$, then the spectral density of covariance function $\Gamma$ is singular at zero, so $\Gamma$ has heavy tails. The fractional time dependence correspond to $\theta=2H-1$ and this has long-range dependence when $\theta>0$ i.e. if $H>\frac{1}{2}$. The fractional-colored spatial-temporal dependence corresponds to $\theta=2H-1+\alpha$.
\end{description}
\end{remark}

\subsection{The weighted least square estimator}

For a given data set, the local parameters of weighted regression model \eqref{model} are estimated using the weighted least square procedure.
Let be $\beta(z_i)$ the vector of the local parameters for the space-time point $z_i$,
\begin{equation}\label{parameters}
    \beta(z_i)=\left( \beta_0(z_i), \beta_1(z_i), \ldots, \beta_p(z_i)\right)^T.
\end{equation}
Here, the superscript $T$ represents the transpose of a vector or matrix.

The local parameters $\beta(z_i)$ at point $z_i$ is estimated by
\begin{equation}\label{estimator}
    \hat{\beta}(z_i)=[X^T\W(z_i)X]^{-1}X^T\W(z_i)Y,
\end{equation}
where $X$ is the $n\times (p+1)$ matrix of input covariables, $Y$ is the $n$-dimensional vector of output observed variable, and $\W(z_i)$ is an $n\times n$ weighting matrix of the form
\begin{equation}\label{weights_matrix}
    \W(z_i)=diag(\W_{i1}, \ldots, \W_{in}).
\end{equation}
The weights $\W_{ij}$, for $j = 1, \ldots, n$, are obtained through an adaptive kernel function $K$ in terms of the proximity of each data point to the point $z_i$;
i.e.
\begin{equation}\label{weights}
    \W_{il}= K_h\left(z_l-z_i\right),
\end{equation}
with $K_h(z)=K(z/h)$. Here, $K:\mathbb{R}^{d+1} \rightarrow \mathbb{R}$ is  positive, symmetric such that $\int_{\mathbb{R}^{d+1}}K(z)dz = 1$, and  $h$ is nonnegative parameter known as bandwidth, which produces a decay of influence with distance. The observations $z_l$ near $z_i$ have the
largest influence on the estimate of the local parameters at point $z_i$. 

We suppose that the kernel $K$ satisfies the additional following conditions:

\begin{assumption}{K}{1}\label{K1} 
\ \\
\begin{enumerate}
\item[i)] The kernel $K$ is bounded, i.e. $\|K\|_{\infty} <\infty$.  
\item[ii)] $K$ is $\alpha_K$-Hölder continuous, i.e. there exist $C_K>0$ such that 
     $$|K(z)-K(z')| < C_K \|z-z'\|^{\alpha_K}.$$
\item[iii)] $\int_{\mathbb{R}^{d+1}} \max{\left(\|z\|^{\alpha_{K}}, \|z\|^{\alpha_{\chi}}, \|z\|^{\alpha_{\Gamma}} \right)}  K(z) dz <\infty$.
\item[iv)]  If  $\Vert z \Vert  \geq \delta_n$, then  $$K(z) =  f_{K}\left( \Vert z \Vert \right)= \mathcal{O}(n^{-\gamma} L(n) ),$$
with L a slowly varying function at infinity and $\gamma >0$.
\end{enumerate}
\end{assumption}

Under condition \ref{K1}, the kernel $K$ is such that $\int_{\mathbb{R}^{d+1}} z K(z)dz = 0$. 

The most commonly used adaptive kernel is the Gaussian function $K(z)=\frac{1}{\sqrt{2\pi}}e^{-(d^{s,t}(z))^2/2}$, where the space-time  distance $d^{s,t}$  is given as a function of the temporal distance $d^t=|t|$ and the spatial distance $d^u=\|u\|$; for instance, $(d^{s,t}(z))^2=\mu^t(d^t)^2+\mu^s(d^u)^2$ where $\mu^t$ and $\mu^s$ are temporal and spatial scale factors respectively. 

\section{Consistency}
We study the consistency  for the local weighted least square estimator $\hat{\beta}(z_{i})$ obtained in \eqref{estimator} from \eqref{model}. If we substitute $Y=X\beta(z_i)+\epsilon$ on \eqref{estimator} we obtained that:  
\begin{eqnarray*}
\hat{\beta}(z_{i})&=&(X^{T}\W(z_i)X)^{-1}X^{T}\W(z_i)(X\beta(z_{i})+\epsilon)\\
&=& (X^{T}\W(z_i)X)^{-1}(X^{T}\W(z_i)X)\beta(z_{i})+(X^{T}\W(z_i)X)^{-1}X^{T}\W(z_i)\epsilon\\
&=& \beta(z_{i})+(X^{T}\W(z_i)X)^{-1} X^{T}\W(z_i)\epsilon.
 \end{eqnarray*}
Then,
\begin{eqnarray*}
\E\left(\hat{\beta}(z_{i}) \right) &=&\beta(z_{i})+ \E\left( X^{T}\W(z_i)X)^{-1} X^{T}\W(z_i) \E(\epsilon | X) \right) = \beta(z_{i}),
 \end{eqnarray*}
since from assumption \ref{M1}  we have  $\E(\epsilon | X)= \E(\epsilon ) =0 $. Thus, the estimator $\hat{\beta}(z_{i})$ is unbiased, and the estimation error is written as:
\begin{equation}\label{residual}
 \hat{\beta}(z_{i}) -\beta(z_{i}) =(X^{T}\W(z_i)X)^{-1}(X^{T}\W(z_i)\epsilon).
\end{equation}

\begin{remark} \label{defis}
We define the following notation
\begin{enumerate}
\item $f_{n,h}\approx\tilde{f} _h$, which is equivalent to $\lim_{h\to 0}\lim_{n\to \infty} f_{n,h} = \lim_{h\to 0}\tilde{f}_h$, i.e. for $n$ large enough  and $h$ small enough,  $f_{n,h}$ is approximately equal to $\tilde{f}_h$. \label{def1}

\item $f_{n,h} \preceq \tilde{f}_h$, which is equivalent to $\lim_{h\to 0} \lim_{n\to \infty} f_{n,h} \leq \lim_{h\to 0} \tilde{f}_h$. Particularly, we write
$f_{n,h} \preceq C$ to state that $C$ is a bound for the sequence $f_{n,h}$, for $n$ large enough and $h$ small enough. \label{def2}

\item $f_{h}\approx\tilde{f}$, which is equivalent to $\lim_{h\to 0} f_{h} = \tilde{f}$, $f_{n}\approx\tilde{f}$ when $\lim_{n\to \infty} f_{n} = \tilde{f}$, and
 $\tilde{f}_h \preceq C$ to state that $C$ is a bound for the sequence $\tilde{f}_{h}$, for $h $small enough. \label{def3}
\end{enumerate}
This notation will be used along our work.
\end{remark}
 
In order to study the consistency of the estimator $\hat{\beta}(z_{i})$ given by \eqref{estimator}, we will prove that there exists an appropriated normalization sequence $(b_{n,h})_{n\geq1,h>0}$ of positive constants with $b_{n,h} \to \infty$ as $n\to \infty$ and $h\to 0$, and such that   
  
\begin{enumerate}
\item[i)] $b_{n,h}^{-1}(X^{T}\W(z_i)X) \rightarrow \chi(z_i,z_i)=\mathbb{E}[X^{T}\W(z_i)X]$, as $n \to +\infty$ and $h\to 0$.
\item [ii)] $b_{n,h}^{-1}(X^{T}\W(z_{i})\epsilon)\rightarrow 0$, $n \to +\infty$ and $h\to 0$.
\end{enumerate}

To prove $i)$ we need an auxiliary lemma related to the almost sure convergence of the term  $(X^{T}\W(z_i)X)$ in \eqref{residual}. 

\begin{lemma}\label{conv_denom}
Under assumptions \ref{C1}-\ref{C2} and \ref{K1}, $\theta>0$ and $\gamma>\frac{\theta}{d+1}$, we have that
\begin{equation*}
 \frac{1}{nh^{d+1}}(X^{T}\W(z_i)X) \xrightarrow[n \to \infty]{a.s.}  \chi(z_i,z_i)=\mathbb{E}[X^{T}\W(z_i)X].
\end{equation*}
\end{lemma}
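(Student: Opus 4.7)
The plan is to establish, for each fixed $h>0$, almost-sure convergence of $\frac{1}{nh^{d+1}}(X^{T}\W(z_i)X)_{jk}$ as $n\to\infty$ to the integral $\tilde{\chi}_{jk,h}(z_i):=\int K(u)\,\chi_{jk}(z_i+hu,z_i+hu)\,du$, and then observe that $\tilde{\chi}_{jk,h}(z_i)\to\chi_{jk}(z_i,z_i)$ as $h\to 0$, which gives the iterated-limit statement in the sense of Remark \ref{defis}. Writing
$$S_{jk}:=(X^{T}\W(z_i)X)_{jk}=\sum_{l=1}^{n}K_h(z_l-z_i)\,X_{l,j}X_{l,k},$$
the work decomposes into a mean computation, a variance estimate, and a Borel--Cantelli upgrade.

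For the mean, $\E[X_{l,j}X_{l,k}]=\chi_{jk}(z_l,z_l)$ by \eqref{cov_chi}, hence $\E[S_{jk}]=\sum_l K_h(z_l-z_i)\chi_{jk}(z_l,z_l)$. I interpret this as a Riemann sum on the $\delta_n$-grid of Remark \ref{grilla}, where each cell has volume $\delta_n^{d+1}\approx 1/n$. Using the H\"older bounds in \ref{C1}(ii) and \ref{K1}(ii) to control the discretization error, and \ref{K1}(iv) to discard grid points outside the effective kernel support, yields
$$\frac{\E[S_{jk}]}{nh^{d+1}}\;\longrightarrow\;\frac{1}{h^{d+1}}\int K_h(z-z_i)\,\chi_{jk}(z,z)\,dz=\int K(u)\,\chi_{jk}(z_i+hu,z_i+hu)\,du=\tilde{\chi}_{jk,h}(z_i),$$
and the convergence $\tilde{\chi}_{jk,h}(z_i)\to\chi_{jk}(z_i,z_i)$ as $h\to 0$ follows from H\"older continuity of $\chi$ together with the integrability \ref{K1}(iii).

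For the variance,
$$\mathrm{Var}(S_{jk})=\sum_{l,l'}K_h(z_l-z_i)\,K_h(z_{l'}-z_i)\,\Gamma_{jk}(z_l,z_{l'}),$$
which I split according to whether $\|z_l-z_{l'}\|\leq k\delta_n$ (close pairs) or not (far pairs). On close pairs, \ref{C2}(iv) gives $|\Gamma_{jk}|\leq C_{k,d}$ and each site has $O(k^{d+1})$ close neighbors, so a Riemann-sum estimate $\sum_l K_h(z_l-z_i)=O(nh^{d+1})$ makes the close contribution $O(nh^{d+1})$. On far pairs, \ref{C2}(iii) gives $|\Gamma_{jk}|\leq C_{k,d,\theta}\delta_n^{d+1+\theta}$, and the far contribution is $O((nh^{d+1})^2\delta_n^{d+1+\theta})$. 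Using $\delta_n^{d+1}\approx 1/n$ and dividing by $(nh^{d+1})^2$ yields
$$\mathrm{Var}\!\left(\frac{S_{jk}}{nh^{d+1}}\right)=O\!\left(\frac{1}{nh^{d+1}}\right)+O\!\left(n^{-\theta/(d+1)}\right),$$
both vanishing under $\theta>0$ and $nh^{d+1}\to\infty$.

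The main obstacle is promoting this $L^{2}$ bound to almost-sure convergence, since the $n^{-\theta/(d+1)}$ term is not summable for small $\theta$. I would handle this by applying Chebyshev--Borel--Cantelli along a polynomial subsequence $n_m=m^{q}$ with $q>(d+1)/\theta$, on which $\sum_m\mathrm{Var}(S_{jk}(n_m)/(n_m h^{d+1}))<\infty$, and then control the oscillation of the normalized sum on each block $[n_m,n_{m+1}]$ by monotonicity in $n$ of the partial sums combined with the H\"older regularity already in place. Throughout, the role of the hypothesis $\gamma>\theta/(d+1)$ in \ref{K1}(iv) is to balance the kernel's polynomial decay against the far-pair covariance decay $\delta_n^{d+1+\theta}$: it is precisely the condition that keeps the tail corrections in the Riemann-sum estimates subdominant to the main term, so that the mean and variance computations close.
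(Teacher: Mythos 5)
Your decomposition of the mean and of the variance matches the paper's (Riemann-sum limit for the expectation; a split of $\mathrm{Var}(S_{jk})$ by the distance $\|z_l-z_{l'}\|$ relative to $k\delta_n$), but two quantitative slips in the variance estimate break your almost-sure upgrade. First, for the far pairs you record the normalized contribution as $\mathcal{O}(n^{-\theta/(d+1)})$; since $\delta_n^{d+1}\asymp 1/n$, one actually has $\delta_n^{d+1+\theta}\asymp n^{-1-\theta/(d+1)}$, so this term is \emph{summable} for $\theta>0$ and needs no subsequence device. Second, and more seriously, your close-pair bound $\mathcal{O}(1/(nh^{d+1}))$ is not summable (even at fixed $h$ it is only $\mathcal{O}(1/n)$), and this is exactly where the paper needs the kernel tail-decay hypothesis \ref{K1}(iv), which you never invoke in the variance computation: splitting the diagonal sum according to $\|z_l-z_i\|<\delta_n$ (only $2^{d+1}$ such sites by Remark \ref{grilla}) versus $\|z_l-z_i\|\geq\delta_n$ (where $K=\mathcal{O}(n^{-\gamma}L(n))$ gains a factor $L(n)n^{-\gamma}$) refines this piece to $\mathcal{O}(L(n)n^{-1-\gamma}h^{-(d+1)})$, and with the coupling $h^{d+1}=L(n)n^{-(\gamma-\theta/(d+1))}$ --- which is precisely where $\gamma>\theta/(d+1)$ enters --- every piece of the normalized variance becomes $\mathcal{O}(n^{-1-\nu})$ with $\nu=\theta/(d+1)>0$. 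Chebyshev plus Borel--Cantelli then gives a.s. convergence along the full sequence, which is the paper's route.

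Your proposed repair --- Borel--Cantelli along $n_m=m^q$ plus oscillation control on the blocks $[n_m,n_{m+1}]$ ``by monotonicity in $n$ of the partial sums'' --- does not close the gap: the summands $K_h(z_l-z_i)X_{lj}X_{lk}$ are not sign-definite for $j\neq k$ (the covariates are centered), so the partial sums are not monotone and the block oscillation is not controlled by the endpoint values. The diagonal entries $j=k$ would be fine, but the lemma is a matrix statement. The correct fix is to sharpen the variance bound as above rather than to patch the subsequence argument.
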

\begin{proof}
The proof of  Lemma \ref{conv_denom} is left in Appendix, Section \ref{ap-conv-denom}.
\end{proof}
We are ready to present our main result.

\begin{theorem} Assume that the regression model \eqref{model} satisfies the hypothesis \ref{M1}, \ref{N1}, \ref{N2}, \ref{C1}-\ref{C2}and  \ref{K1}. Then, the local weighted least square estimator $\hat{\beta}(z_{i})$ obtained in \eqref{estimator} is 
strongly consistent for $2H+\alpha > 1$, $\theta >0$, and $\frac{\theta}{d+1}<\gamma<1+\frac{\theta}{d+1}$ that is
\begin{equation*}
\hat{\beta}(z_i) - \beta(z_i) \xrightarrow[n \to \infty]{a.s.}  0
\end{equation*}
and, for $2H+d+\alpha >0$ and $d+1+\theta>0$ the convergence in probability is ensured.
\end{theorem}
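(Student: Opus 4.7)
The plan is to factor the normalized residual as
$$\hat{\beta}(z_i) - \beta(z_i) = A_{n,h}^{-1}\, B_{n,h}, \qquad A_{n,h} := \frac{X^{T}\W(z_i) X}{n h^{d+1}}, \qquad B_{n,h} := \frac{X^{T}\W(z_i)\,\epsilon}{n h^{d+1}},$$
so that Lemma \ref{conv_denom} (available under \ref{C1}, \ref{C2}, \ref{K1} together with $\theta>0$ and $\gamma > \theta/(d+1)$) yields $A_{n,h}\to \chi(z_i,z_i)$ almost surely, the limit being invertible by \ref{C1}(i). By Slutsky, strong consistency will follow from almost sure convergence $B_{n,h}\to 0$, and the weaker second statement from $B_{n,h}\to 0$ in probability.

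For $B_{n,h}$ I would condition on $X$: under \ref{M1} and the Gaussianity of $\epsilon$, $B_{n,h}\mid X$ is a centered Gaussian vector, so
$$\mathbb{E}\|B_{n,h}\|^{2} = \frac{1}{n^{2}h^{2(d+1)}}\sum_{l,l'=1}^{n} \W_{il}\W_{il'}\,\mathrm{tr}\,\chi(z_l,z_{l'})\,\mathbb{E}(\epsilon_l \epsilon_{l'}),$$
and I would split this into diagonal and off-diagonal contributions. The diagonal uses the variance formula $\mathbb{E}(\epsilon_l^2)=\sigma^2 2^{2H}\delta_n^{2H+d+\alpha}$ together with the volume identity $\delta_n \sim n^{-1/(d+1)}$ and the kernel-support estimate $\sum_l \W_{il}^{2}\leq C n h^{d+1}$, yielding a bound of order $(nh^{d+1})^{-1}\,n^{-(2H+d+\alpha)/(d+1)}$, which vanishes provided $2H+d+\alpha>0$ and $nh^{d+1}\to\infty$. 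The off-diagonal piece is treated by freezing $\chi(z_l,z_{l'})$ near $\chi(z_i,z_i)$ using the Hölder regularity \ref{C1}(ii) at cost $O(h^{\alpha_\chi})$, and then comparing the remaining discrete sum with the double integral
$$\iint_{V\times V} |t-t'|^{2H-2}\,\|u-v\|^{-d+\alpha}\,dt\,dt'\,du\,dv,$$
localized by the kernel to a ball of radius $h$. The condition $2H+\alpha>1$ is precisely what ensures local integrability of this mixed fractional-Riesz kernel on $(d+1)$-dimensional balls, so the off-diagonal sum also vanishes.

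Convergence in probability of $B_{n,h}$ then follows from Markov's inequality; combined with Lemma \ref{conv_denom} this proves the weaker half of the theorem, where $2H+d+\alpha>0$ keeps the noise variance shrinking and $d+1+\theta>0$ is what the $\Gamma$ side of Lemma \ref{conv_denom} needs. To upgrade to almost sure convergence I would exploit hypercontractivity of Gaussian quadratic forms conditional on $X$: for every integer $q$, $\mathbb{E}\|B_{n,h}\|^{2q}\leq C_q \bigl(\mathbb{E}\|B_{n,h}\|^{2}\bigr)^{q}$ after taking expectation over $X$ and using boundedness of the covariate moments granted by \ref{C2}. Picking $q$ large makes the resulting sequence summable in $n$, and a Borel-Cantelli argument delivers $B_{n,h}\to 0$ almost surely.

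I expect the main technical hurdle to be the off-diagonal bookkeeping, where three localisations have to be balanced simultaneously: the kernel $K_h$ restricting to a ball of radius $h$, the Hölder regularity of $\chi$ letting us replace $\chi(z_l,z_{l'})$ by $\chi(z_i,z_i)$ at cost $h^{\alpha_\chi}$, and the mixed singular factor $|t-t'|^{2H-2}\|u-v\|^{-d+\alpha}$. The tail condition \ref{K1}(iv) combined with the decay exponent $\gamma$ is needed to dispose of the contribution from points $z_l$ outside the natural support $\{\|z_l - z_i\|\leq \delta_n\}$, and the upper bound $\gamma<1+\theta/(d+1)$ is the matching condition to the lower bound used in Lemma \ref{conv_denom}, ensuring that neither the denominator nor the numerator is driven to pathology by too sharp, or too flat, a kernel.
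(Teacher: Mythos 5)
Your skeleton matches the paper's: Lemma \ref{conv_denom} plus Slutsky for the denominator, and a second-moment computation for $B_{n,h}=(nh^{d+1})^{-1}X^{T}\W(z_i)\epsilon$ split into diagonal and off-diagonal parts, with the diagonal bound of order $(nh^{d+1})^{-1}(\delta_n)^{2H+d+\alpha}$ exactly as in the paper. However, there are two genuine problems in your treatment of the off-diagonal part and of the almost-sure upgrade. First, you attribute the condition $2H+\alpha>1$ to ``local integrability of the mixed fractional--Riesz kernel'' $|t-t'|^{2H-2}\|u-v\|^{-d+\alpha}$. That identification is incorrect: since the kernel factors into a temporal and a spatial part, local integrability on balls requires $2H-2>-1$ \emph{and} $-d+\alpha>-d$, i.e.\ $H>1/2$ and $\alpha>0$ separately, which is strictly stronger than $2H+\alpha>1$. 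In the paper the condition $2H+\alpha>1$ plays a different role: the covariance $\E(\epsilon_l\epsilon_{l'})$ is bounded \emph{uniformly} over off-diagonal pairs by $C(\delta_n)^{2H+d+\alpha}$ (via crude mean-value and Cauchy--Schwarz bounds, no integrability of the singular kernel is invoked), which yields $\E\|B_{n,h}\|^{2}\preceq C(\delta_n)^{2H+d+\alpha}\sim Cn^{-(1+\nu')}$ with $\nu'=(2H+\alpha-1)/(d+1)$; the condition $2H+\alpha>1$ is precisely what makes this sequence summable.

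Second, and consequently, your hypercontractivity step is both unnecessary and unsupported by the hypotheses. Once the variance bound $\E\|B_{n,h}\|^{2}\leq Cn^{-(1+\nu')}$ with $\nu'>0$ is in hand, Chebyshev plus Borel--Cantelli already gives $B_{n,h}\to 0$ almost surely; this is exactly what the paper does. Your proposed inequality $\E\|B_{n,h}\|^{2q}\leq C_q(\E\|B_{n,h}\|^{2})^{q}$ does not follow from conditional Gaussianity alone: conditioning on $X$ gives $\E[\|B_{n,h}\|^{2q}\mid X]\leq C_q V^{q}$ with $V=\E[\|B_{n,h}\|^{2}\mid X]$ a quadratic form in the covariates, and bounding $\E[V^{q}]$ by $(\E V)^{q}$ is a reverse-H\"older statement requiring control of high moments of $X$. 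Assumption \ref{C2} only controls the covariance of the products $X_jX_k$ (essentially fourth moments of $X$), so for general $q$ this step fails. The fix is simply to track the rate of the second moment as the paper does and dispense with higher moments altogether.
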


\begin{proof}
By Lemma \ref{conv_denom}, it remains to study the asymptotic behavior of $ \left( X^{T}\W(z_{i})\epsilon \right)$ as $n\to\infty$.
The $j_{th}$ component of $ \left( X^{T}\W(z_{i})\epsilon \right)$ is 
\begin{equation}
\left( \left( X^{T}\W(z_{i})\epsilon \right) \right)_j =  \sum_{l=1}^n X_{lj}\W_{il}\epsilon_l.
\end{equation}
It is quite easy to see, from assumption \ref{M1}, that $\E\left[  \left( \left( X^{T}\W(z_{i})\epsilon \right) \right)_j  \right]  = 0$.
Let us compute the variance of $\left( \left( X^{T}\W(z_{i})\epsilon \right) \right)_j $,
\begin{eqnarray}\label{split_var}
\E\left(\left( \left( X^{T}\W(z_{i})\epsilon \right) \right)_j^2\right)  &=&  \sum_{l,l'=1}^n \chi_{jj}(z_l,z_{l'})\W_{il}\W_{il'}\E(\epsilon_l \epsilon_{l'} ) \nonumber\\
    &=&  \sum_{l=1}^n \chi_{jj}(z_l,z_l)\W_{il}^2\E(\epsilon_l^2) \nonumber\\
    &+&   \sum_{\substack{1\leq l\neq l'\leq n \\ \|z_l-z_{l'}\|\leq 3\delta_n}} \chi_{jj}(z_l,z_{l'})\W_{il}\W_{il'}\E(\epsilon_l \epsilon_{l'} ) \nonumber\\ 
    &+&  \sum_{\substack{1\leq l\neq l'\leq n \\ \|z_l-z_{l'}\| > 3\delta_n}} \chi_{jj}(z_l,z_{l'})\W_{il}\W_{il'}\E(\epsilon_l \epsilon_{l'} ) \nonumber\\
    &:=& A^{(1)}_{j,n}(z_i) + A^{(2)}_{j,n}(z_i)  + A^{(3)}_{j,n}(z_i),
\end{eqnarray}
where we split the sum into three terms associated with the distance between the observed points $z_l$ and $z_{l'}$. 

First, we study the term $A^{(1)}_{j,n}(z_i) $ in \eqref{split_var}
\begin{equation}\label{A1}
\begin{split}
\lefteqn{ \frac{1}{nh^{d+1} (\delta_n)^{2H+d+\alpha}} A^{(1)}_{j,n}(z_i) }\\
& =  \frac{1}{nh^{d+1}(\delta_n)^{2H+d+\alpha}}\sum_{l=1}^n \chi_{jj}(z_l,z_l)\W_{il}^2\E(\epsilon_l^2)\\ 
& =   \frac{ 2^{2H}\sigma^2}{nh^{d+1}} \sum_{l=1}^n \chi_{jj}(z_l,z_l)K^2_h\left(z_l-z_i\right)\\
& =  2^{2H} \sigma^2 \int_{\mathbb{R}^{d+1}} \sum_{l=1}^n \chi_{jj}(z_l,z_l)\frac{1}{h^{d+1}} K^2_h\left(z_l-z_i\right) \car_{V(z_l)}(z) dz\\
& \preceq  2^{2H} \sigma^2 \int_{\mathbb{R}^{d+1}} \chi_{jj}(z,z)\frac{1}{h^{d+1}}K^2_h\left(z-z_i\right) dz\\
& =   2^{2H} \sigma^2 \int_{\mathbb{R}^{d+1}} \chi_{jj}(z_i+hz,z_i+hz)K^2\left(z\right) dz\\
& \approx \, C_1(H)\chi_{jj}(z_i,z_i) + \mathcal{O}\left(|h|^{\alpha_{\chi}}\right),
\end{split}
\end{equation}
where $C_1(H)= 2^{2H}\sigma^2 \|K\|^2_2$. The last inequality comes from the regularity of $\chi$ from Condition \ref{C1} and notations defined in Remark \ref{defis}.  

Secondly, we consider the term $A^{(2)}_{j,n}(z_i)$ in \eqref{split_var}, i.e. when $0<\|z_l-z_{l'}\|\leq 3\delta_n$

\begin{equation}\label{A2}
\begin{split}
\lefteqn{ \frac{1}{nh^{d+1} (\delta_n)^{2H+d+\alpha}} A^{(2)}_{j,n}(z_i) }\\
 & =   \frac{1}{nh^{d+1} (\delta_n)^{2H+d+\alpha}} \sum_{\substack{1\leq l\neq l'\leq n \\ \|z_l-z_{l'}\|\leq 3\delta_n}} \chi_{jj}(z_l,z_{l'})\W_{il}\W_{il'}\E(\epsilon_l \epsilon_{l'} ) \\
 & =   \frac{1}{nh^{d+1} (\delta_n)^{2H+d+\alpha}} \sum_{\substack{1\leq l\neq l'\leq n \\ \|z_l-z_{l'}\|\leq 3\delta_n}} \chi_{jj}(z_l,z_{l'})K_h(z_l-z_i)K_h(z_{l'}-z_i)\E(\epsilon_l \epsilon_{l'} ) \\
\end{split}
\end{equation}

We can bound the covariance term $\E(\epsilon_l \epsilon_{l'} )$ when $0< \|z_l-z_{l'}\|\leq 3\delta_n$ by
\begin{equation}\label{bound_cov_epsilon_c2}
\begin{split}
\E(\epsilon_l \epsilon_{l'} ) & = 
 \frac{1}{2}\left(|t_l-t_{l'}+2\delta_n|^{2H} + |t_l-t_{l'}-2\delta_n|^{2H}- 2|t_l-t_{l'}|^{2H}\right) (\delta_n)^{d+\alpha}\\
 & \times Cov\left( W^{H} \left(\car_{\{\|u-u_l/\delta_n\|\leq 1\}} \right), W^{H} \left( \car_{\{\|u-u_{l'}/\delta_n\|\leq 1\}} \right) \right)\\
 & \leq \left(2\delta_n\right)^{2H}(\delta_n)^{d+\alpha} Var^{1/2} \left( W^{H} \left( \car_{\|u-u_l/\delta_n\|\leq 1} \right) \right) Var^{1/2} \left( W^{H} \left( \car_{\|u-u_{l'}/\delta_n|\leq 1} \right) \right) \\
  & \leq 2^{2H}(\delta_n)^{2H+d+\alpha} Var \left( W^{H} \left( \car_{\|u\|\leq 1} \right) \right)\\
  &= 2^{2H}\sigma^2 (\delta_n)^{2H+d+\alpha}.
\end{split}
\end{equation}

Plugging inequality \eqref{bound_cov_epsilon_c2} into the equation \eqref{A2} yields 
\begin{equation}\label{A2_0}
\begin{split}
\lefteqn{ \frac{1}{nh^{d+1} (\delta_n)^{2H+d+\alpha}} A^{(2)}_{j,n}(z_i) }\\
 & \leq  \frac{ 2^{2H}\sigma^2}{nh^{d+1}}   \sum_{\substack{1\leq l\neq l'\leq n \\ \|z_l-z_{l'}\|\leq 3\delta_n}} \chi_{jj}(z_l,z_{l'})K_h(z_l-z_i)K_h(z_{l'}-z_i)\\
 & = \frac{ 2^{2H}\sigma^2 }{nh^{d+1}}  \left[ \sum_{\substack{1\leq l\neq l'\leq n \\ \|z_l-z_{l'}\|\leq 3\delta_n}} \chi_{jj}(z_l,z_{l})K_h(z_l-z_i)K_h(z_{l}-z_i)\right.\\
 & + \sum_{\substack{1\leq l\neq l'\leq n \\ \|z_l-z_{l'}\|\leq 3\delta_n}}\chi_{jj}(z_l,z_{l}) K_h(z_l-z_i)\left(K_h(z_{l'}-z_i) - K_h(z_{l}-z_i)\right)\\
& + \sum_{\substack{1\leq l\neq l'\leq n \\ \|z_l-z_{l'}\|\leq 3\delta_n}} \left(\chi_{jj}(z_l,z_{l'})-\chi_{jj}(z_l,z_l)\right)K_h(z_l-z_i)K_h(z_{l}-z_i)\\
&+ \left.\sum_{\substack{1\leq l\neq l'\leq n \\ \|z_l-z_{l'}\|\leq 3\delta_n}} \left(\chi_{jj}(z_l,z_{l'})-\chi_{jj}(z_l,z_l)\right)K_h(z_l-z_i)\left(K_h(z_{l'}-z_i)-K_h(z_{l}-z_i)\right)\right]
\end{split}
\end{equation}

From assumption  \ref{C1} and \ref{A2}
\begin{equation}\label{A2_1}
\begin{split}
\lefteqn{ \frac{1}{nh^{d+1} (\delta_n)^{2H+d+\alpha}} A^{(2)}_{j,n}(z_i) }\\
& \leq \frac{ 2^{2H}\sigma^2 }{nh^{d+1}}  \left[ \sum_{\substack{1\leq l\neq l'\leq n \\ \|z_l-z_{l'}\|\leq 3\delta_n}} \chi_{jj}(z_l,z_{l})K^2_h(z_l-z_i)\right.
 \\
 & +  C_K\sum_{\substack{1\leq l\neq l'\leq n \\ \|z_l-z_{l'}\|\leq 3\delta_n}}\chi_{jj}(z_l,z_{l}) K_h(z_l-z_i)\left\|z_{l} - z_{l'}\right\|^{\alpha_{K}}\\
& + C_{\chi} \sum_{\substack{1\leq l\neq l'\leq n \\ \|z_l-z_{l'}\|\leq 3\delta_n}} \left\|z_l-z_{l'}\right\|^{\alpha_{\chi}}K_h^2(z_l-z_i)\\
& +C_k C_{\chi}\left. \sum_{\substack{1\leq l\neq l'\leq n \\ \|z_l-z_{l'}\|\leq 3\delta_n}} \left\|z_l-z_{l'}\right\|^{\alpha_{\chi}}K_h(z_l-z_i)\left\|z_{l} - z_{l'}\right\|^{\alpha_{K}}\right]\\
&= A^{(2,1)}_{j,n}+A^{(2,2)}_{j,n}+A^{(2,3)}_{j,n}+A^{(2,4)}_{j,n}.
\end{split}
\end{equation}

Note that 
\begin{equation}\label{volume_V_zl}
\frac{1}{n}\sum_{l'=1}^n\car_{\{0<\|z_l-z_{l'}\|\leq 3 \delta_n\}}\quad \appn  \quad 3^{d+1}\int_{\mathbb{R}^{d+1}}\car_{V(z_l)}(z')dz'=3^{d+1}\lambda(V(z_l))=\frac{3^{d+1}}{n},
\end{equation}
then using \eqref{A2_1} and \eqref{volume_V_zl} we have
\begin{equation}\label{A21}
\begin{split}
A^{(2,1)}_{j,n} & = \frac{ 2^{2H}\sigma^2 }{nh^{d+1}}  \sum_{l=1}^n \chi_{jj}(z_l,z_{l})K^2_h(z_l-z_i)\sum_{l'=1}^n\car_{\{0<\|z_l-z_{l'}\|\leq 3\delta_n\}}\\
& \preceq \frac{  2^{2H}3^{d+1}\sigma^2}{h^{d+1}} \int_{\mathbb{R}^{d+1}} \chi_{jj}(z,z)K^2_h(z-z_i)dz\\
&=   2^{2H}3^{d+1} \sigma^2\int_{\mathbb{R}^{d+1}} \chi_{jj}(z_i+hz,z_i+hz)K^2(z)dz\\
& \approx \,  2^{2H}3^{d+1}\sigma^2 \chi_{jj}(z_i,z_i) \|K\|_2^2 + \mathcal{O}(|h|^{\alpha_{\chi}}).
\end{split}
\end{equation}
As before, the last inequality comes from the regularity of $\chi$ from Condition \ref{C1} and notations defined in Remark \ref{defis}. 

\begin{equation}\label{A22}
\begin{split}
\frac{1}{(\delta_n)^{\alpha_k}}A^{(2,2)}_{j,n}& = \frac{2^{2H}\sigma^2  C_K}{nh^{d+1}(\delta_n)^{\alpha_k}}\sum_{l=1}^n \chi_{jj}(z_l,z_{l}) K_h(z_l-z_i)\sum_{l'=1}^n\left\|z_{l} - z_{l'}\right\|^{\alpha_{K}}\car_{\{0<\|z_l-z_{l'}\|\leq 3\delta_n\}}\\
&\leq \frac{  2^{2H}\sigma^2 C_K (3\delta_n)^{\alpha_k}}{nh^{d+1}(\delta_n)^{\alpha_k}}\sum_{l=1}^n \chi_{jj}(z_l,z_{l}) K_h(z_l-z_i)\sum_{l'=1}^n\car_{\{0<\|z_l-z_{l'}\|\leq 3\delta_n\}}\\
& \preceq  \frac{  2^{2H}3^{\alpha_K+d+1}\sigma^2C_K}{h^{d+1}} \int_{\mathbb{R}^{d+1}} \chi_{jj}(z,z) K_h(z-z_i) dz\\
& =  2^{2H}3^{\alpha_K+d+1}\sigma^2C_K\int_{\mathbb{R}^{d+1}} \chi_{jj}(z_i+hz,z_i+hz) K(z) dz\\
& \approx  2^{2H}3^{\alpha_K+d+1}\sigma^2 C_K \chi_{jj}(z_i,z_i) +
\mathcal{O}(|h|^{\alpha_{\chi}}).
\end{split}
\end{equation}
Again, the last inequality comes from the regularity of $\chi$ from Condition \ref{C1} and notations defined in Remark \ref{defis}. 

\begin{equation}\label{A23}
\begin{split}
\frac{1}{(\delta_n)^{\alpha_{\chi}}} A^{(2,3)}_{j,n} & = \frac{  2^{2H}\sigma^2C_{\chi}}{nh^{d+1}(\delta_n)^{\alpha_{\chi}}} \sum_{l=1}^n K_h^2(z_l-z_i) \sum_{l'=1}^n \left\|z_{l} - z_{l'}\right\|^{\alpha_{\chi}}\car_{\{0<\|z_l-z_{l'}\|\leq 3\delta_n\}}\\
&\leq \frac{  2^{2H}\sigma^2C_{\chi}(3\delta_n)^{\alpha_{\chi}} }{nh^{d+1}(\delta_n)^{\alpha_{\chi}}} \sum_{l=1}^n K_h^2(z_l-z_i) \sum_{l'=1}^n \car_{\{0<\|z_l-z_{l'}\|\leq 3\delta_n\}}\\
& \preceq \frac{ 2^{2H}3^{\alpha_{\chi}+d+1}\sigma^2C_{\chi}}{h^{d+1}} \int_{\mathbb{R}^{d+1}} K^2_h(z-z_i) dz\\
&=  2^{2H}3^{\alpha_{\chi}+d+1}\sigma^2C_{\chi} \|K\|_2^2
\end{split}
\end{equation}

\begin{equation}\label{A24}
\begin{split}
\frac{1}{(\delta_n)^{\alpha_{\chi}+\alpha_k}} A^{(2,4)}_{j,n}& = \frac{  2^{2H}\sigma^2C_K C_{\chi}}{nh^{d+1}(\delta_n)^{\alpha_{\chi}+\alpha_k}} \sum_{l=1}^n K_h(z_l-z_i) \sum_{l'=1}^n \left\|z_{l} - z_{l'}\right\|^{\alpha_{\chi}+\alpha_k}\car_{\{0<\|z_l-z_{l'}\|\leq 3\delta_n\}}\\
&\leq \frac{  2^{2H}\sigma^2C_K C_{\chi}(3\delta_n)^{\alpha_{\chi}+\alpha_K} }{nh^{d+1}(\delta_n)^{\alpha_{\chi}+\alpha_K}} \sum_{l=1}^n K_h(z_l-z_i) \sum_{l'=1}^n \car_{\{0<\|z_l-z_{l'}\|\leq 3\delta_n\}}\\
& \preceq \frac{ 2^{2H}3^{\alpha_{\chi}+\alpha_K+d+1}\sigma^2C_K C_{\chi}
}{h^{d+1}} \int_{\mathbb{R}^{d+1}} K_h(z-z_i) dz\\
&=  2^{2H}3^{\alpha_{\chi}+\alpha_K+d+1}\sigma^2C_K C_{\chi}.
\end{split}
\end{equation}

Thus, from \eqref{A21}, \eqref{A22}, \eqref{A23}, and \eqref{A24} we have
\begin{equation}\label{A2_2}
\begin{split}
\lefteqn{\frac{1}{nh^{d+1} (\delta_n)^{2H+d+\alpha}} A^{(2)}_{j,n}(z_i) }\\
&\preceq 2^{2H}3^{d+1}\sigma^2\chi_{jj}(z_i,z_i) \|K\|_2^2 
  + \mathcal{O}\left(|h|^{\alpha_{\chi}}\vee (\delta_n)^{\alpha_K}\vee (\delta_n)^{\alpha_{\chi}}\vee (\delta_n)^{\alpha_{\chi}+\alpha_K}\right)\\
 & = C_2(H)\chi_{jj}(z_i,z_i) + \mathcal{O}\left(|h|^{\alpha_{\chi}}\vee (\delta_n)^{\alpha_K}\vee (\delta_n)^{\alpha_{\chi}}\right),
 \end{split}
\end{equation}
where $C_2(H)= 2^{2H}3^{d+1}\sigma^2 \|K\|_2^2 $.

Finally we consider the case $\|z_l-z_{l'}\|>3\delta_n$, and we split the term $A^{(3)}_{j,n}(z_i)$ in three term:
\begin{equation}\label{A3}
\begin{split}
\lefteqn{ \frac{1}{n^2h^{2(d+1)} (\delta_n)^{2H+d+\alpha}} A^{(3)}_{j,n}(z_i) }\\
 & =  \frac{1}{n^2h^{2(d+1)}(\delta_n)^{2H+d+\alpha}} \sum_{1\leq l\neq l'\leq n} \chi_{jj}(z_l,z_{l'})K_h(z_l-z_i)K_h(z_{l'}-z_i)\\
 & \times \E(\epsilon_l\epsilon_{l'})\car_{\{ |t_l-t_{l'}| \leq 3\delta_n, \|u_l-u_{l'}\|> 3\delta_n\}}\\
 & + \frac{1}{n^2h^{2(d+1)}(\delta_n)^{2H+d+\alpha}} \sum_{1\leq l\neq l'\leq n} \chi_{jj}(z_l,z_{l'})K_h(z_l-z_i)K_h(z_{l'}-z_i)\\
 & \times \E(\epsilon_l\epsilon_{l'})\car_{\{|t_l-t_{l'}|> 3\delta_n, \|u_l-u_{l'}\|\leq 3\delta_n\}}\\
 & +
 \frac{1}{n^2h^{2(d+1)}(\delta_n)^{2H+d+\alpha} } \sum_{1\leq l\neq l'\leq n} \chi_{jj}(z_l,z_{l'})K_h(z_l-z_i)K_h(z_{l'}-z_i)\\
 &\times \E(\epsilon_l\epsilon_{l'})\car_{\{|t_l-t_{l'}|> 3\delta_n, \|u_l-u_{l'}\| > 3\delta_n\}}\\
 &= A^{(3,1)}_{j,n} + A^{(3,2)}_{j,n} + A^{(3,3)}_{j,n}.
\end{split}
\end{equation}

 In the case $|t_l-t_{l'}| \leq 3\delta_n$ and  $\|u_l-u_{l'}\|> 3\delta_n$ we can bond the covariance $\E(\epsilon_l \epsilon_{l'})$  as follows
\begin{equation}\label{bound_cov_epsilon_c3}
\begin{split}
\E(\epsilon_l \epsilon_{l'} ) & = 
 \frac{1}{2}\left(|t_l-t_{l'}+2\delta_n|^{2H} + |t_l-t_{l'}-2\delta_n|^{2H}- 2|t_l-t_{l'}|^{2H}\right)\\
 & \times \int_{V(u_l)}\int_{V(u_{l'})} \gamma_{\alpha,d}\|u-u'\|^{-d+\alpha}dudu'\\
 & \leq \left(2\delta_n\right)^{2H}\gamma_{\alpha,d}(\delta_n)^{-d+\alpha}\lambda(V(u_l))\lambda(V(u_{l'})) \\
 & \leq 2^{2H}\gamma_{\alpha,d} \lambda^2(\mathcal{S}^{d-1})\left(\delta_n\right)^{2H+d+\alpha}.
\end{split}
\end{equation}

Then, from \eqref{A3} and \eqref{bound_cov_epsilon_c3} we have
\begin{equation}\label{A31}
\begin{split}
A^{(3,1)}_{j,n} & = \frac{1}{n^2h^{2(d+1)}(\delta_n)^{2H+d+\alpha}} \sum_{1\leq l\neq l'\leq n} \chi_{jj}(z_l,z_{l'})K_h(z_l-z_i)K_h(z_{l'}-z_i)\\
& \times \E(\epsilon_l\epsilon_{l'})\car_{\{ |t_l-t_{l'}| \leq 3\delta_n, \|u_l-u_{l'}\|> 3\delta_n\}}\\
& \leq \frac{2^{2H}\gamma_{\alpha,d} \lambda^2(S^{d-1}) }{n^2h^{2(d+1)}} \sum_{1\leq l\neq l'\leq n} \chi_{jj}(z_l,z_{l'})K_h(z_l-z_i)K_h(z_{l'}-z_i)\\
& \preceq \frac{2^{2H} \gamma_{\alpha,d}\lambda^2(\mathcal{S}^{d-1}) }{h^{2(d+1)}} \int_{\mathbb{R}^{d+1}} \int_{\mathbb{R}^{d+1}} \chi_{jj}(z,z')K_h(z-z_i)K_h(z'-z_i)dzdz'\\
& = 2^{2H} \gamma_{\alpha,d} \lambda^2(\mathcal{S}^{d-1}) \int_{\mathbb{R}^{d+1}} \int_{\mathbb{R}^{d+1}} \chi_{jj}(z_i+hz,z_i+hz')K(z)K(z')dzdz'\\
& \approx 2^{2H}\gamma_{\alpha,d}\lambda^2(\mathcal{S}^{d-1})\chi_{jj}(z_i,z_i) + 2^{2H}\gamma_{\alpha,d}\lambda^2(\mathcal{S}^{d-1})|h|^{\alpha_{\chi}}\\
& \times \int_{\mathbb{R}^{d+1}} \int_{\mathbb{R}^{d+1}} (\|z\|+\|z'\|)^{\alpha_{\chi}}K(z)K(z')dzdz'\\
&= 2^{2H}\gamma_{\alpha,d} \lambda^2(\mathcal{S}^{d-1})\chi_{jj}(z_i,z_i) + \mathcal{O}(|h|^{\alpha_{\chi}}). 
\end{split}
\end{equation}
Again, the last inequality comes from the regularity of $\chi$ from Condition \ref{C1} and notations defined in Remark \ref{defis}. Now, we study the case $|t_l-t_{l'}|> 3\delta_n$ and  $\|u_l-u_{l'}\|\leq 3\delta_n$. From \eqref{cov_epsilon} and \eqref{cov_colored}  we bond the covariance $\E(\epsilon_l \epsilon_{l'})$  as follows

\begin{equation}\label{bound_cov_epsilon_c4}
\begin{split}
\E(\epsilon_l \epsilon_{l'} ) & = 
 \frac{1}{2}\left(|t_l-t_{l'}+2\delta_n|^{2H} + |t_l-t_{l'}-2\delta_n|^{2H}- 2|t_l-t_{l'}|^{2H}\right) (\delta_n)^{d+\alpha}\\
 & \times Cov\left( W^{H} \left( \car_{\{\|u-u_l/\delta_n\|\leq 1\}} \right), W^{H} \left( \car_{\{\|u-u_{l'}/\delta_n\|\leq 1\}}  \right) \right)\\
 &= \frac{1}{2}\left( \int_{t_l-\delta_n}^{t_l+\delta_n} \int_{t_{l'}-\delta_n}^{t_{l'}+\delta_n} 2H(2H-1)|t-t'|^{2H-2} dt' dt \right)(\delta_n)^{d+\alpha} \\
 & \times Cov\left( W^{H} \left( \car_{\{\|u-u_l/\delta_n\|\leq 1\}} \right), W^{H} \left( \car_{\{\|u-u_{l'}/\delta_n\|\leq 1\}}  \right) \right) \\
 & \leq H(2H-1) (\delta_n)^{2H-2}\left( \int_{t_l-\delta_n}^{t_l+\delta_n} \int_{t_{l'}-\delta_n}^{t_{l'}+\delta_n} dt' dt \right) (\delta_n)^{d+\alpha} \\
 & \times Var^{1/2} \left( W^{H} \left(  \car_{\|u-u_l/\delta_n\|\leq 1} \right) \right) Var^{1/2} \left( W^{H} \left(  \car_{\|u-u_{l'}/\delta_n\|\leq 1} \right) \right)\\
  & \leq  4H(2H-1)\sigma^2(\delta_n)^{2H+d+\alpha} .
\end{split}
\end{equation}

Then, from \eqref{A3} and \eqref{bound_cov_epsilon_c4} we have 
\begin{equation}\label{A32}
\begin{split}
A^{(3,2)}_{j,n} & = \frac{1}{n^2h^{2(d+1)}(\delta_n)^{2H+d+\alpha}} \sum_{1\leq l\neq l'\leq n} \chi_{jj}(z_l,z_{l'})K_h(z_l-z_i)K_h(z_{l'}-z_i)\\
& \times \E(\epsilon_l\epsilon_{l'})\car_{\{ |t_l-t_{l'}| > 3\delta_n, \|u_l-u_{l'}\|\leq 3\delta_n\}}\\
& \leq \frac{4H(2H-1)\sigma^2} 
{n^2h^{2(d+1)}} \sum_{1\leq l\neq l'\leq n} \chi_{jj}(z_l,z_{l'})K_h(z_l-z_i)K_h(z_{l'}-z_i)\\
& \preceq \frac{2\sigma^2 2H(2H-1) }{h^{2(d+1)}} \int_{\mathbb{R}^{d+1}} \int_{\mathbb{R}^{d+1}} \chi_{jj}(z,z')K_h(z-z_i)K_h(z'-z_i)dzdz'\\
& \approx 4H(2H-1)\sigma^2\chi_{jj}(z_i,z_i) + \mathcal{O}(|h|^{\alpha_{\chi}}). 
\end{split}
\end{equation}
Again, the last inequality comes from the regularity of $\chi$ from Condition \ref{C1} and notations defined in Remark \ref{defis}. For the case $|t_l-t_{l'}|> 3\delta_n$ and  $\|u_l-u_{l'}\|> 3\delta_n$, we proceed analogously to the previous cases
\begin{equation}\label{bound_cov_epsilon_c5}
\begin{split}
\E(\epsilon_l \epsilon_{l'} ) & = 
 \frac{1}{2}\left(|t_l-t_{l'}+2\delta_n|^{2H} + |t_l-t_{l'}-2\delta_n|^{2H}- 2|t_l-t_{l'}|^{2H}\right)\\
 & \times \left( \int_{V(u_l)}\int_{V(u_{l'})} \gamma_{\alpha,d}\|u-u'\|^{-d+\alpha}dudu'\right)\\
 &= \frac{1}{2}\left( \int_{t_l-\delta_n}^{t_l+\delta_n} \int_{t_{l'}-\delta_n}^{t_{l'}+\delta_n} 2H(2H-1)|t-t'|^{2H-2} dt' dt \right)\\
 & \times \left( \int_{V(u_l)}\int_{V(u_{l'})} \gamma_{\alpha,d}\|u-u'\|^{-d+\alpha}dudu'\right)\\
 & = 4H(2H-1) (\delta_n)^{2H} \gamma_{\alpha,d}\left(\delta_n\right)^{d+\alpha} \lambda^2(\mathcal{S}^{d-1})\\
  & \leq 4H(2H-1) \gamma_{\alpha,d} \lambda^2(\mathcal{S}^{d-1})(\delta_n)^{2H+d+\alpha}.
\end{split}
\end{equation}

Thus, from \eqref{A3} and \eqref{bound_cov_epsilon_c5} 
\begin{equation}\label{A33}
\begin{split}
A^{(3,3)}_{j,n} & = \frac{1}{n^2h^{2(d+1)}(\delta_n)^{2H+d+\alpha}} \sum_{1\leq l\neq l'\leq n} \chi_{jj}(z_l,z_{l'})K_h(z_l-z_i)K_h(z_{l'}-z_i)\\
& \times \E(\epsilon_l\epsilon_{l'})\car_{\{ |t_l-t_{l'}| > 3\delta_n, \|u_l-u_{l'}\|> 3\delta_n\}}\\
& \leq \frac{4H(2H-1)\gamma_{\alpha,d}\lambda^2(\mathcal{S}^{d-1})} 
{n^2h^{2(d+1)}} \sum_{1\leq l\neq l'\leq n} \chi_{jj}(z_l,z_{l'})K_h(z_l-z_i)K_h(z_{l'}-z_i)\\
& \preceqnh\frac{4H(2H-1)\gamma_{\alpha,d}\lambda^2(\mathcal{S}^{d-1}) }{h^{2(d+1)}} \int_{\mathbb{R}^{d+1}} \int_{\mathbb{R}^{d+1}} \chi_{jj}(z,z')K_h(z-z_i)K_h(z'-z_i)dzdz'\\
& \preceqh 4H(2H-1)\gamma_{\alpha,d}\lambda^2(\mathcal{S}^{d-1})\chi_{jj}(z_i,z_i) + \mathcal{O}(|h|^{\alpha_{\chi}}). 
\end{split}
\end{equation}

Then, from \eqref{A3}, \eqref{A31}, \eqref{A32} and \eqref{A33} we have 

\begin{equation}\label{A3_1}
\begin{split}
\lefteqn{\frac{1}{n^2h^{2(d+1)} (\delta_n)^{2H+d+\alpha}} A^{(3)}_{j,n}(z_i) }\\
&
\preceqnh
\;
 \left(2^{2H}\gamma_{\alpha,d}\lambda^2(\mathcal{S}^{d-1}) + 4H(2H-1)\sigma^2\right) \chi_{jj}(z_i,z_i)\\
 & + \left(4H(2H-1)\gamma_{\alpha,d}\lambda^2(\mathcal{S}^{d-1}) \right) \chi_{jj}(z_i,z_i)  + \mathcal{O}\left(|h|^{\alpha_{\chi}}\right)\\
  & \preceqh C_3(H) + \mathcal{O}\left(|h|^{\alpha_{\chi}}\right),
 \end{split}
\end{equation}
where $ C_3(H)=2^{2H}\gamma_{\alpha,d}\lambda^2(\mathcal{S}^{d-1}) + 4H(2H-1)\sigma^2+ 4H(2H-1)\gamma_{\alpha,d}\lambda^2(\mathcal{S}^{d-1})$.
Substituting \eqref{A1}, \eqref{A2_2} and \eqref{A3_1} into the equation \eqref{split_var}, and using that $2\lambda(\mathcal{S}^{d-1})(\delta_n)^{d+1}=1/n$ we obtain
\begin{eqnarray*}
\frac{1}{n^2h^{2(d+1)} }\E\left(\left(X^{T}\W(z_{i})\epsilon\right)_j^2\right)  & \preceq & \frac{\left(C_1(H)+C_2(H)\right)(\delta_n)^{2H+d+\alpha}}{nh^{d+1}} + C_3(H)(\delta_n)^{2H+d+\alpha} \\
&\approx & \frac{C(H)}{n^{1+\nu'}},
\end{eqnarray*}
where $\nu' = \frac{2H+\alpha-1}{d+1}>0$ if $2H+\alpha-1 >0$, and also we should consider $\gamma<1+\frac{\theta}{d+1}$ to obtain $nh^{d+1} \to \infty$. Thus, the convergence in $L^2$, and therefore in probability, is ensured for $2H+d+\alpha>0$.
For $2H+\alpha-1>0$,  the $L^2$ rate of $\frac{1}{nh^{d+1}}\left(X^{T}\W(z_{i})\epsilon\right)_j$ is faster than $1/n$; for instance, when $H>\frac{1}{2}$ and $\alpha\geq 0$. A direct application of Borell-Cantelli lemma allow us to obtain 
$$\frac{1}{nh^{d+1}}\left(X^{T}\W(z_{i})\epsilon\right)_j\xrightarrow[n \to \infty]{a.s.}  0.$$

By Slutsky Theorem and Lemma \ref{conv_denom}, the convergence of $|\hat{\beta}(z_i)- \beta(z_i)| \to 0$ is:

\begin{itemize}
    \item In probability, for $2H+d+\alpha>0$ and $d+1+\theta>0$. In particular for 
$H>0$, $\alpha>-d$, and $\theta>-(d+1)$.
\item Almost surely, for $2H+\alpha-1>0$, $\theta>0$ y $\frac{\theta}{d+1}>\gamma>1+\frac{\theta}{d+1}$. In particular condition $2H+\alpha-1>0$ hold for $H>1/2$ and $\alpha\geq 0$.
\end{itemize}
\end{proof}

\section{Simulation study}
This section reviews the theoretical results presented in the previous sections, this part of the work was performed using the software R. To represent the spatial location, we considered a grid defined on $\mathbb{R}^{2} \in [0,1]$; as points we defined the center of each pixel, i.e., ordered pairs defined by $(0.05, 0.05), \dots, (0.95, 0.95)$, a graphical representation of the locations can be seen in the following figure.

\begin{figure}[h!]
\centering
\includegraphics[width=0.35\textwidth]{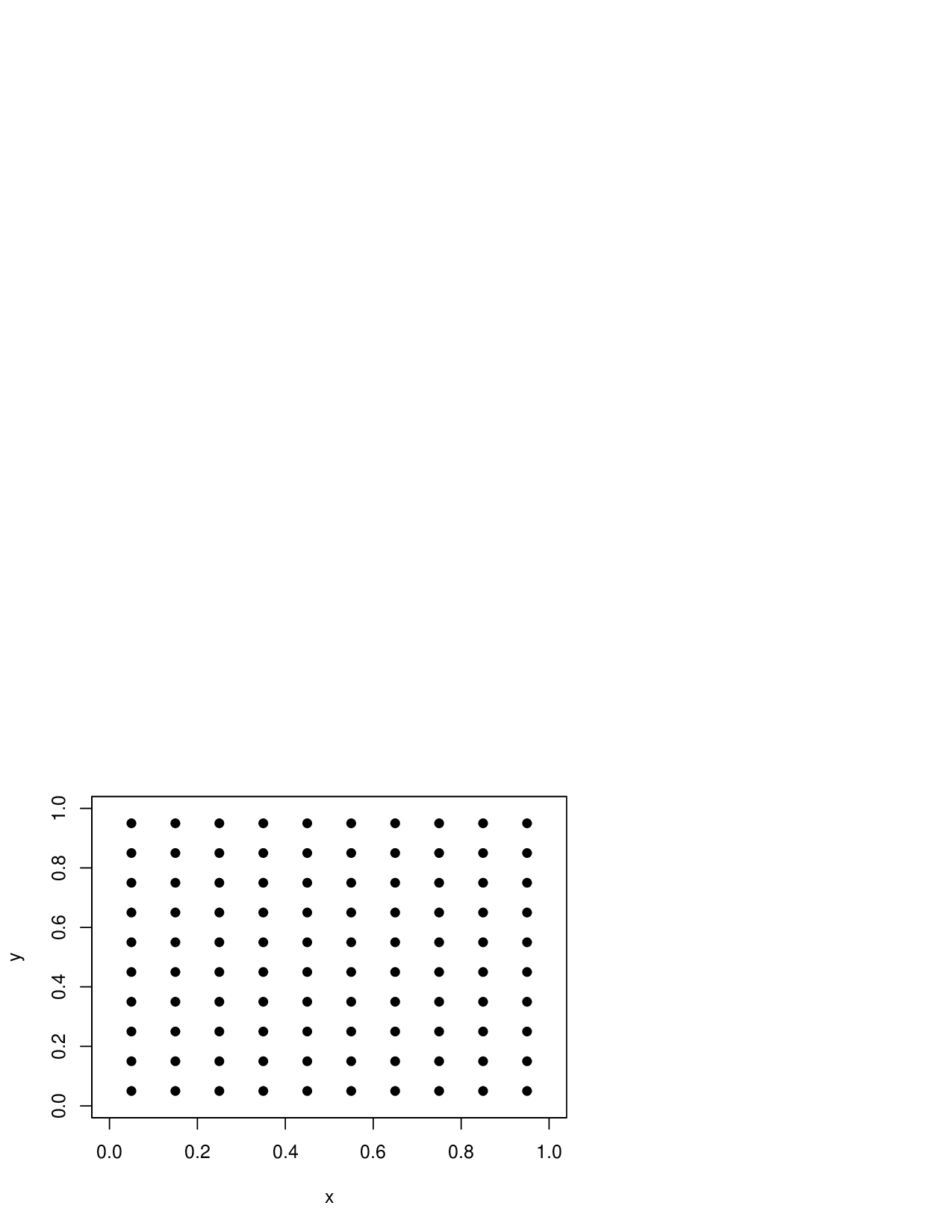}
         \caption{Spatial location.}
         \label{sites}
\end{figure}

We will start by representing the Colored noise in space and time, which represents the noise of our model. Then the four different models studied will be presented, along with the estimation of the response surface to analyze the residuals for the different models considered.

\subsubsection*{Colored noise in space and time}
For the noise, the following values of $H$ were considered: $H_{s}=0.40$ for space, $H_{t}=0.65$ and $H_{t}=0.90$. A representation for different time instants, $t_{1}$, $t_{50}$ and $t_{100}$, is shown in the figure below.
\begin{figure}[h!]
     \centering
     \begin{subfigure}[h]{0.32\textwidth}
         \centering
         \includegraphics[width=1\textwidth]{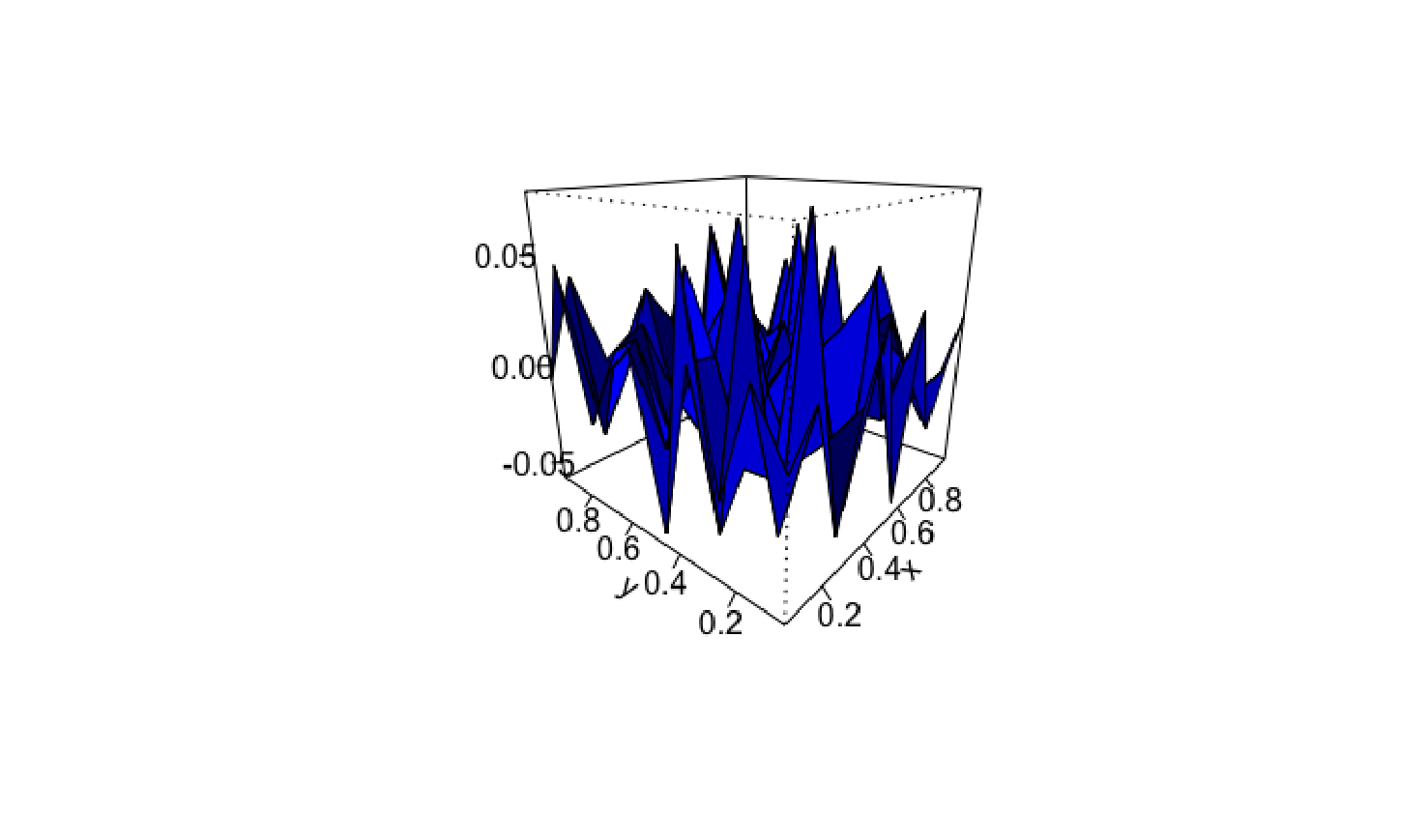}
         \caption{$t_{1}$}
         \label{Ht065t1}
     \end{subfigure}
     \hfill
     \begin{subfigure}[h]{0.32\textwidth}
         \centering
         \includegraphics[width=1\textwidth]{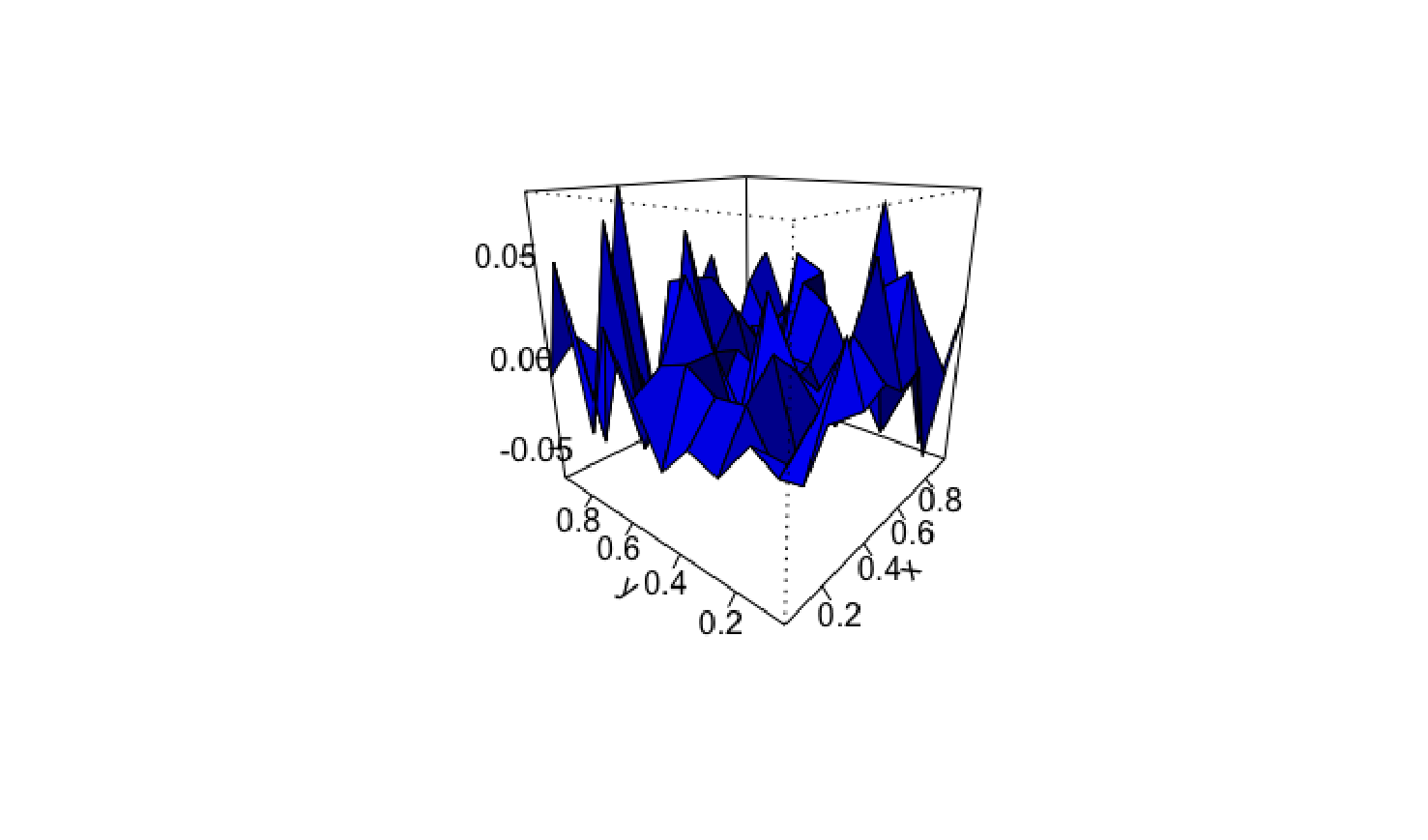}
         \caption{$t_{50}$}
         \label{Ht065t50}
     \end{subfigure}
     \hfill
     \begin{subfigure}[h]{0.32\textwidth}
         \centering
         \includegraphics[width=1\textwidth]{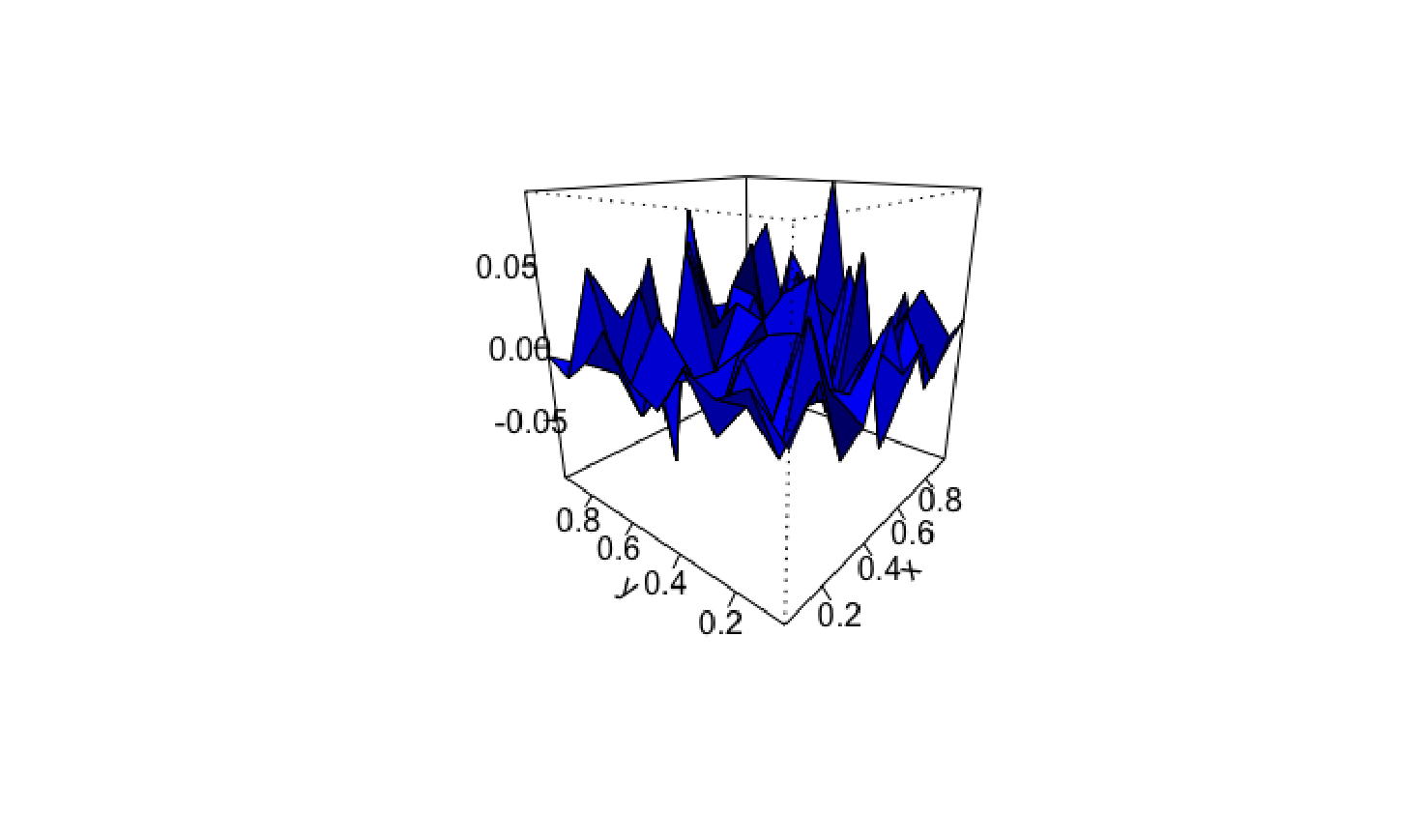}
         \caption{$t_{100}$}
         \label{Ht065t100}
     \end{subfigure}
	\hfill
     \begin{subfigure}[h]{0.32\textwidth}
         \centering
         \includegraphics[width=1\textwidth]{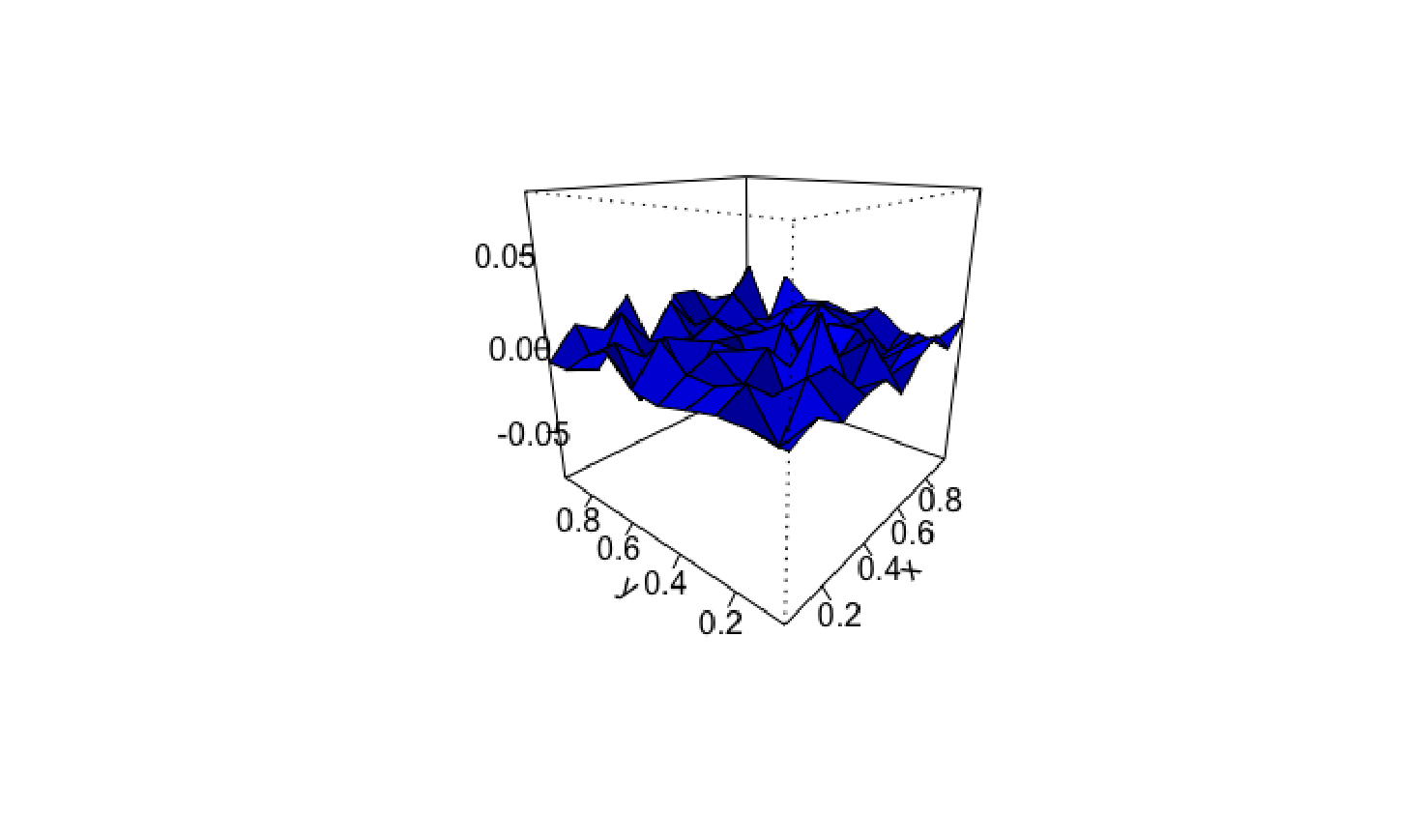}
         \caption{$t_{1}$}
         \label{Ht090t1}
     \end{subfigure}
     \hfill
     \begin{subfigure}[h]{0.32\textwidth}
         \centering
         \includegraphics[width=1\textwidth]{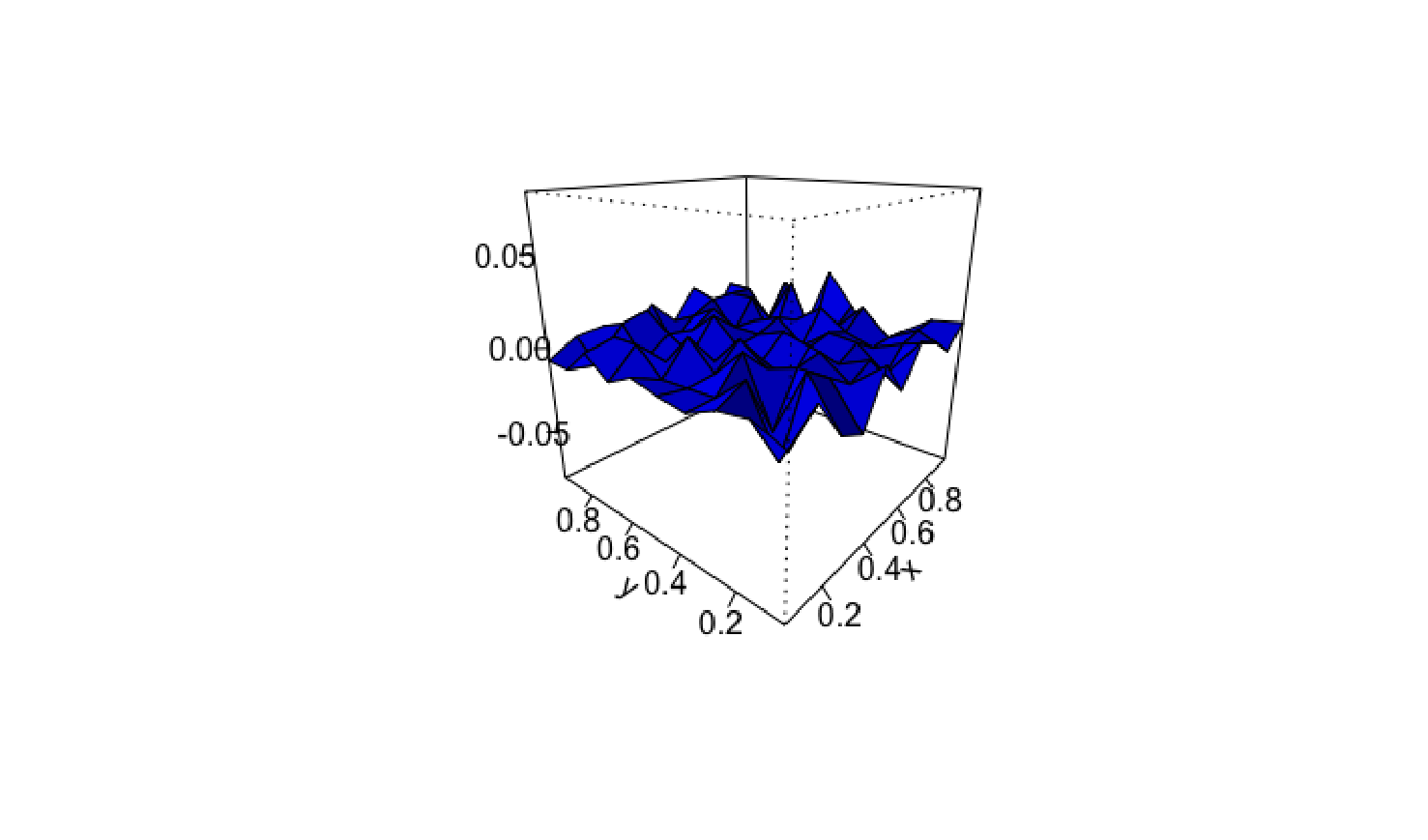}
         \caption{$t_{50}$}
         \label{Ht090t50}
     \end{subfigure}
     \hfill
     \begin{subfigure}[h]{0.32\textwidth}
         \centering
         \includegraphics[width=1\textwidth]{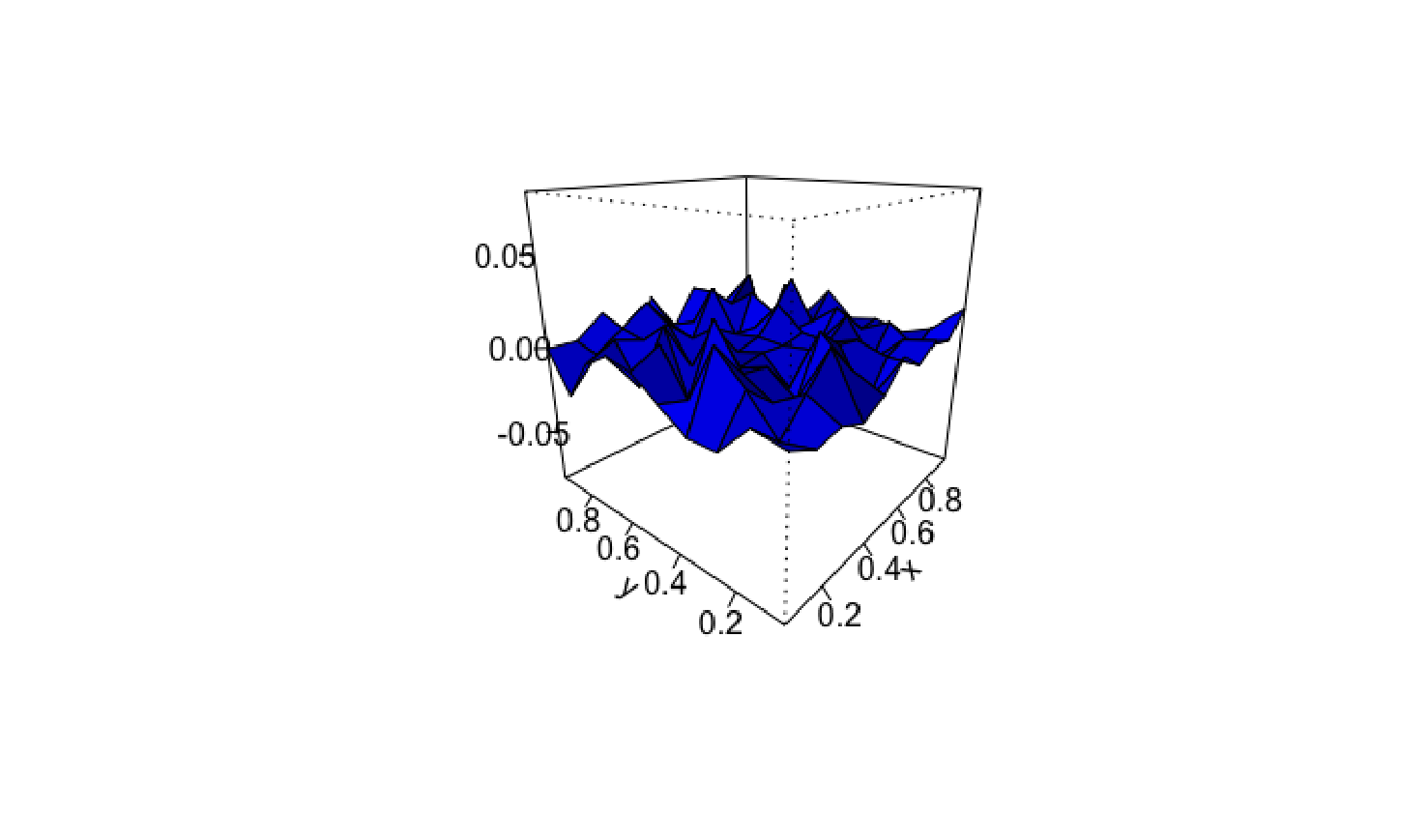}
         \caption{$t_{100}$}
         \label{Ht090t100}
     \end{subfigure}
        \caption{Colored noise, in space with $H_{s}=0.40$ and time with $H_{t}=0.65$ (a, b, c) and $H_{t}=0.90$  (d, e, f) in three different instants of time.}
        \label{colnoiseH04}
\end{figure}

It is possible to appreciate in Figure \ref{colnoiseH04} that for values of $H_{t}$ close to 1 (\ref{Ht090t1}, \ref{Ht090t50} and \ref{Ht090t100}) a lower roughness is observed on the simulated surface. In contrast to the case when the value of $H_{t}$ is close to 0.5 ((\ref{Ht065t1}, \ref{Ht065t50} and \ref{Ht065t100})). This is a common behavior of fractional Brownian motion.

\subsubsection*{Model}
We simulate four different versions of the model presented in \eqref{model}, where we consider that the intercept $\beta_{0}$ is zero, and a covariate represented by a Spatio - Temporal Auto Regressive Moving Average (STARMA) sampled at  the sites defined in \ref{sites}. To represent the covariates, we have decided to use the STARMA models since they have attracted great interest due to their flexibility to represent the relationship between observation sites and their neighbors; some of the research areas where the relevance of these models can be appreciated are renewable energies \cite{dambreville2014, zou2018}, environmental data \cite{deluna2005}, disease mapping \cite{martinez2008}, and regional studies \cite{ramajo2017}, among others. (for a detailed revision of STARMA we recommend to review \cite{pfei1980}, and to simulate this process we recommend the R package STARMA \cite{chey2015}). It is important to note that in the model \eqref{model}, the values of the coefficients $\beta$ accompanying the covariate, also depend on the location of the points.  As examples of different situations, we consider the values of $\beta$ presented in the work of \cite{que2020}. The models considered are the following:
\begin{align*}
&\text{Model 1:} Y_i=  \beta_{1}(z_i)X(z_{i}) + \epsilon_i^{H_{s}=0.40, H_{t}= 0.65},\quad i=1,\dots,n \\
&\text{Model 2:} Y_i=  \beta_{1}(z_i)X(z_{i}) + \epsilon_i^{H_{s}=0.40, H_{t}= 0.90},\quad i=1,\dots,n \\
&\text{Model 3:} Y_i=  \beta_{2}(z_i)X(z_{i}) + \epsilon_i^{H_{s}=0.40, H_{t}= 0.65},\quad i=1,\dots,n \\
&\text{Model 4:} Y_i=  \beta_{2}(z_i)X(z_{i}) + \epsilon_i^{H_{s}=0.40, H_{t}= 0.90},\quad i=1,\dots,n 
\end{align*}
where, $\beta_{1}=1 + (4(x+y)/12)$ represents a plane with a slight inclination
and $\beta_{2}=1 + (36- (6- (25x)/2)^2)(36- (6- (25y)/2)^2)/(324*8)$ a curved surface. $X(z_{i})$ corresponds to a Spatio - Temporal Auto Regressive model of order $(1,1)$. $\epsilon_i^{H_{s}=0.40, H_{t}= 0.65}$ and $\epsilon_i^{H_{s}=0.40, H_{t}= 0.90}$ corresponds to two different noises considered. In the following graphic we present three different times, $t_{1}$, $t_{50}$ and $t_{100}$ for the four different models considered

\begin{figure}[h!]
     \centering
     \begin{subfigure}[h]{0.32\textwidth}
         \centering
         \includegraphics[width=1\textwidth]{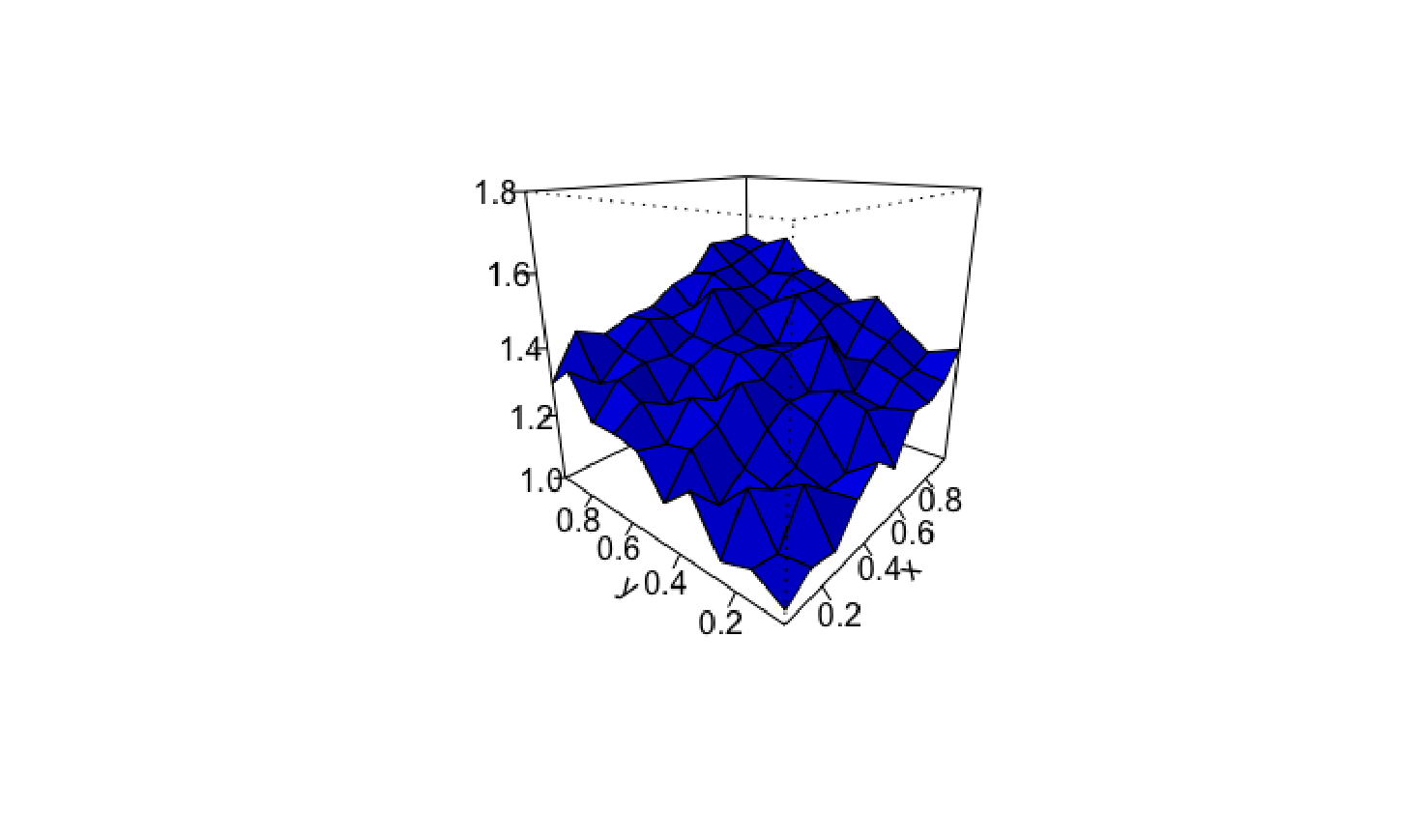}
         \caption{$t_{1}$}
         \label{m1t1}
     \end{subfigure}
     \hfill
     \begin{subfigure}[h]{0.32\textwidth}
         \centering
         \includegraphics[width=1\textwidth]{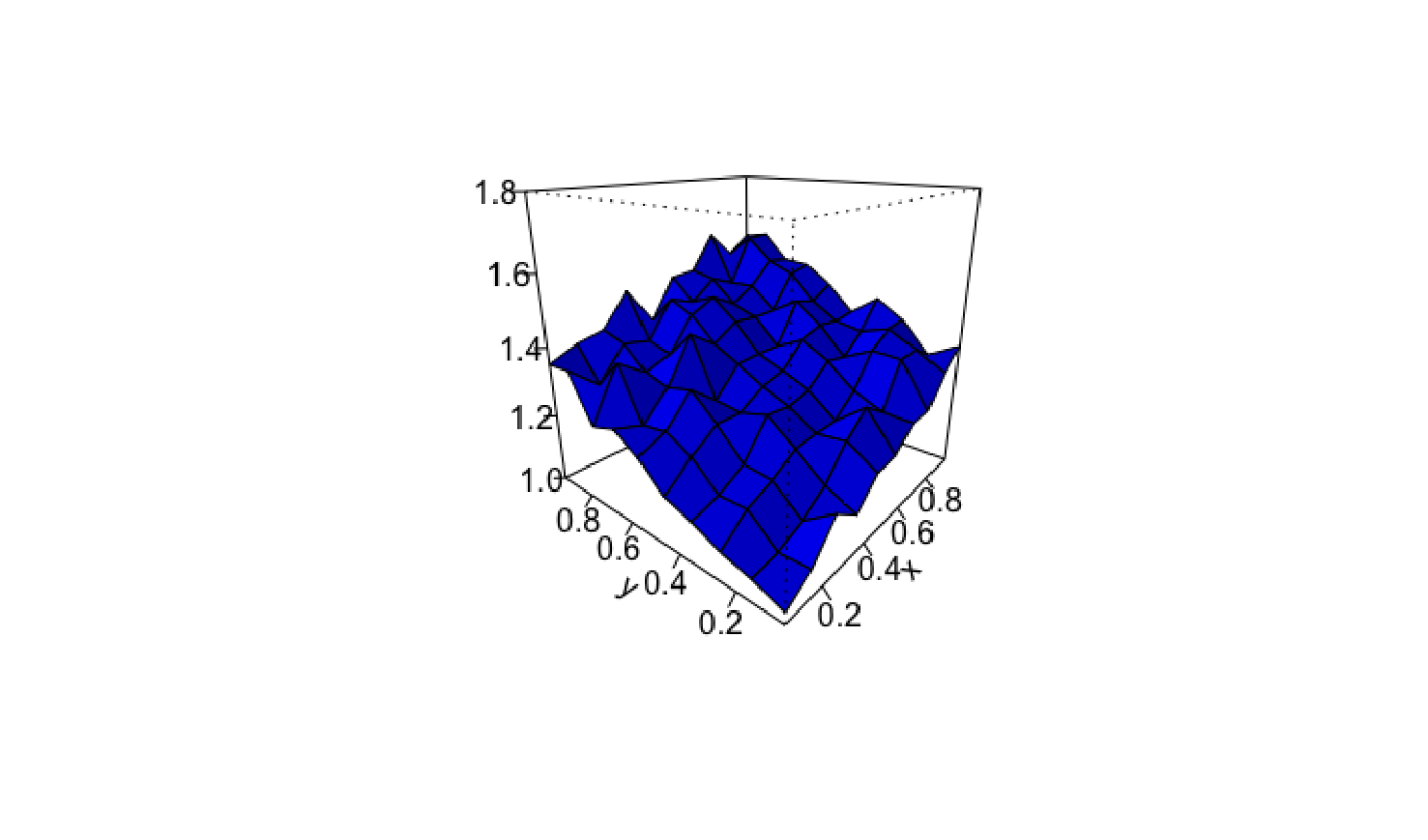}
         \caption{$t_{50}$}
         \label{m1t50}
     \end{subfigure}
     \hfill
     \begin{subfigure}[h]{0.32\textwidth}
         \centering
         \includegraphics[width=1\textwidth]{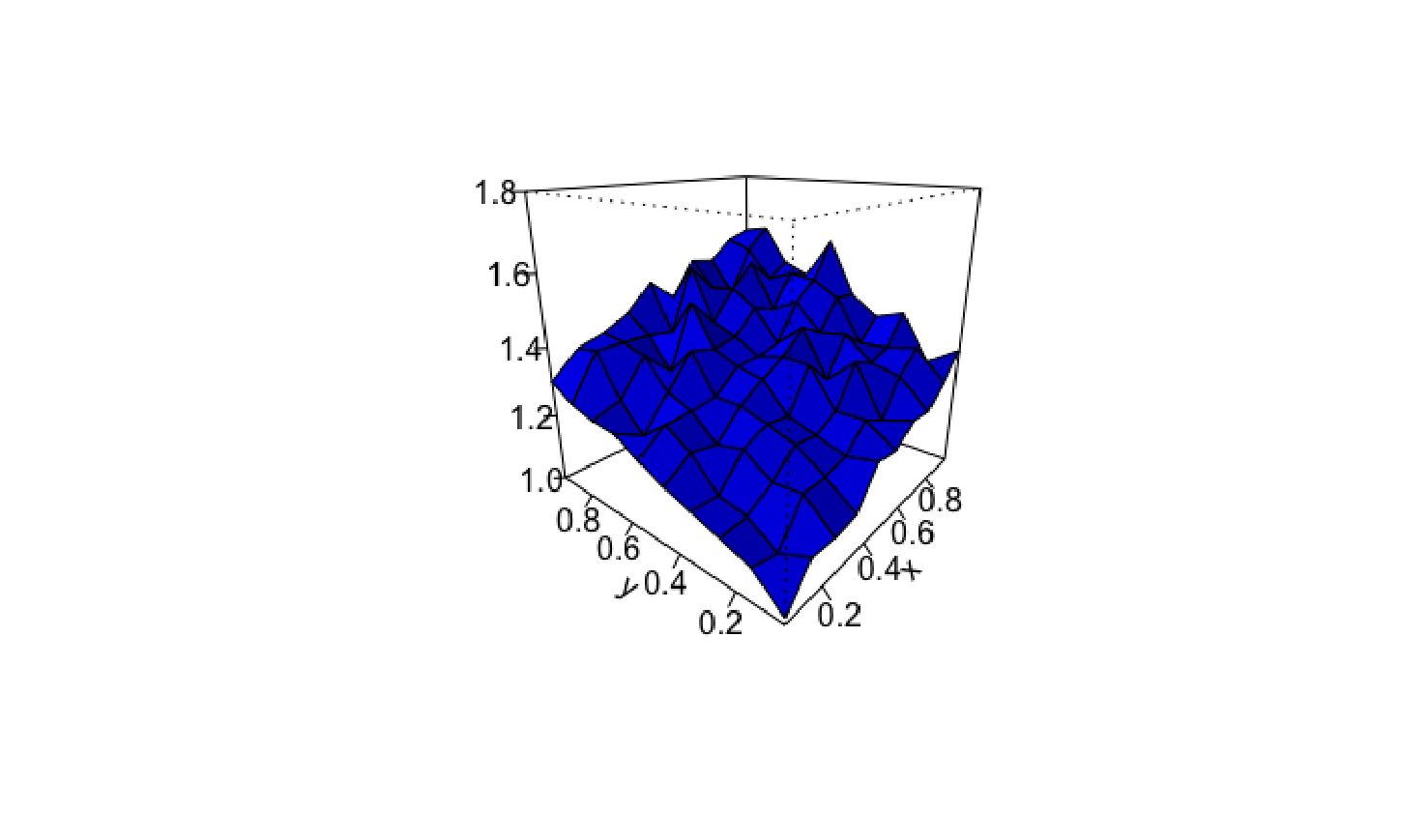}
         \caption{$t_{100}$}
         \label{m1t100}
     \end{subfigure}
     \hfill
     \begin{subfigure}[h]{0.32\textwidth}
         \centering
         \includegraphics[width=1\textwidth]{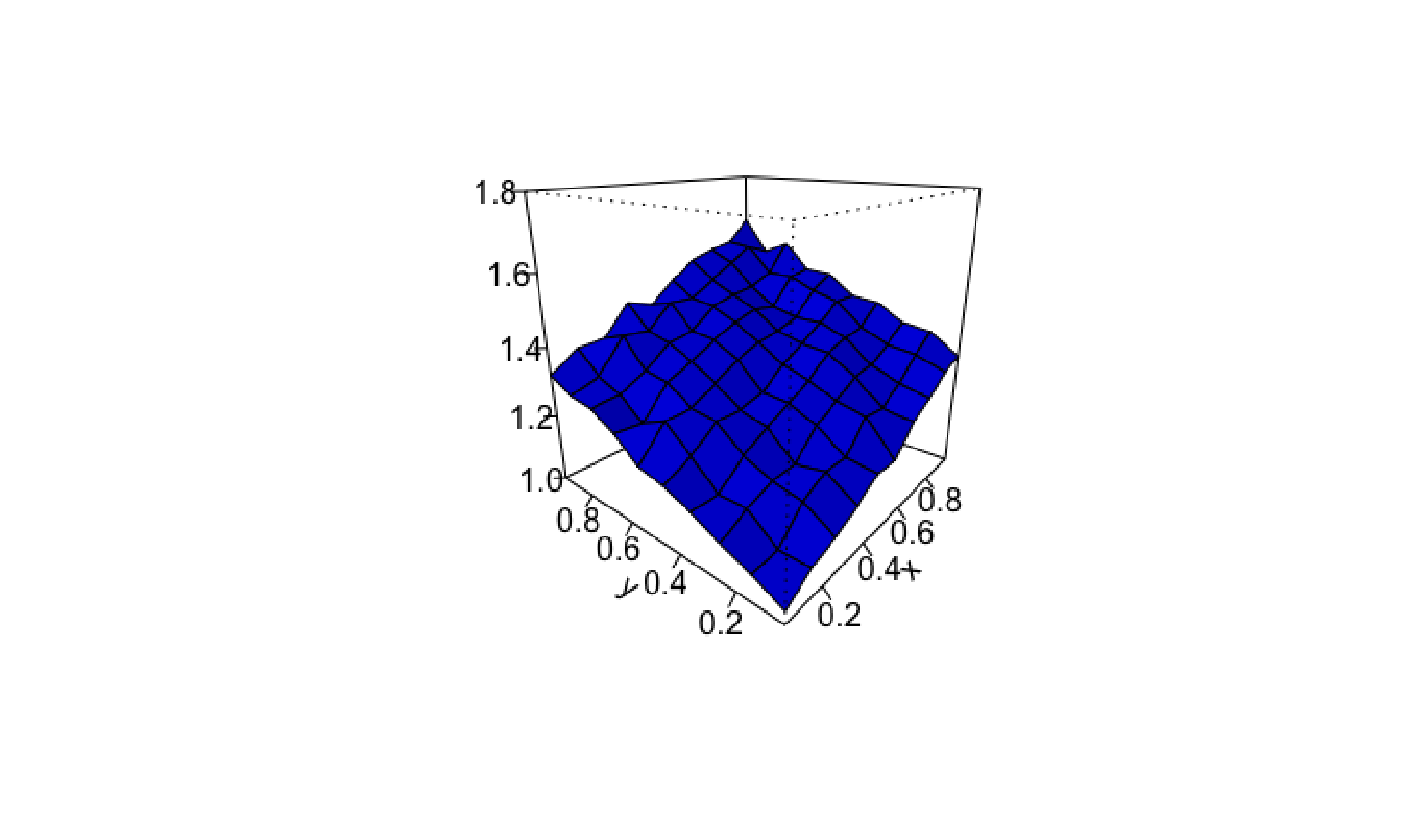}
         \caption{$t_{1}$}
         \label{m2t1}
     \end{subfigure}
     \hfill
     \begin{subfigure}[h]{0.32\textwidth}
         \centering
         \includegraphics[width=1\textwidth]{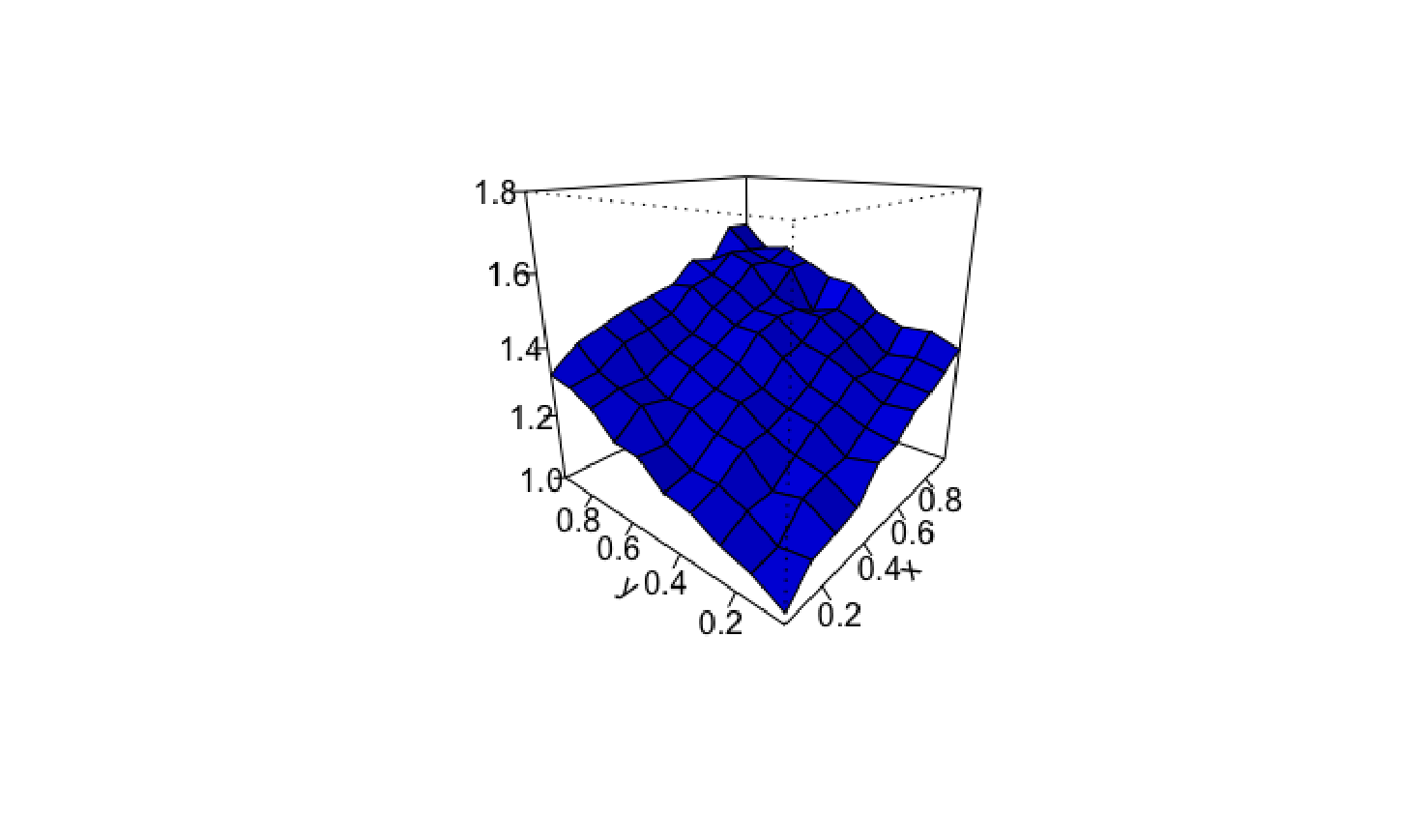}
         \caption{$t_{50}$}
         \label{m2t50}
     \end{subfigure}
     \hfill
     \begin{subfigure}[h]{0.32\textwidth}
         \centering
         \includegraphics[width=1\textwidth]{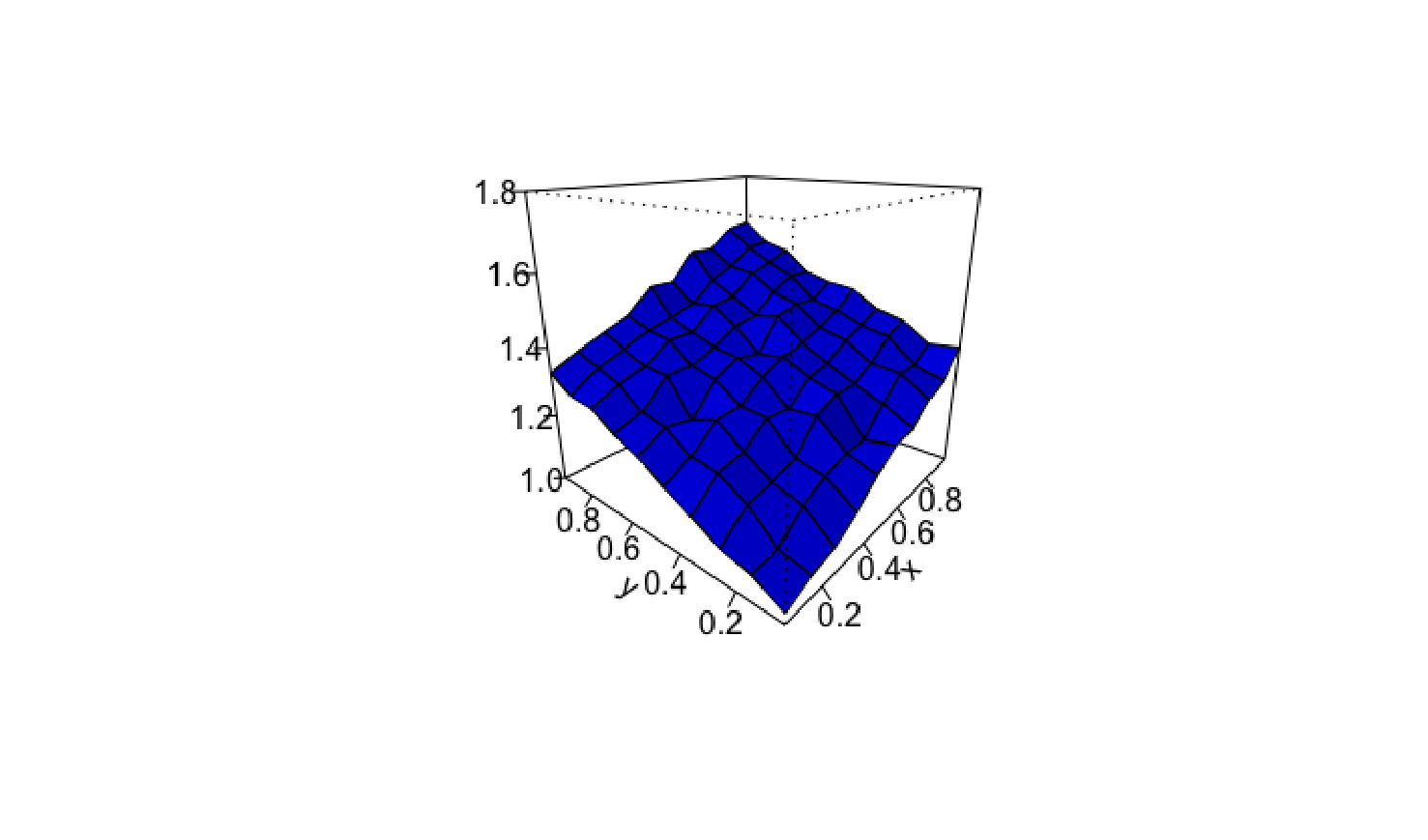}
         \caption{$t_{100}$}
         \label{m2t100}
     \end{subfigure}
        \caption{Model 1 and 2, in space with $H_{s}=0.40$ and time with $H_{t}=0.65$ (a, b, c) and $H_{t}=0.90$ (d, e, f) in three different instants of time.}
        \label{model1-2}
\end{figure}

\begin{figure}[h!]
     \centering
     \begin{subfigure}[h]{0.32\textwidth}
         \centering
         \includegraphics[width=1\textwidth]{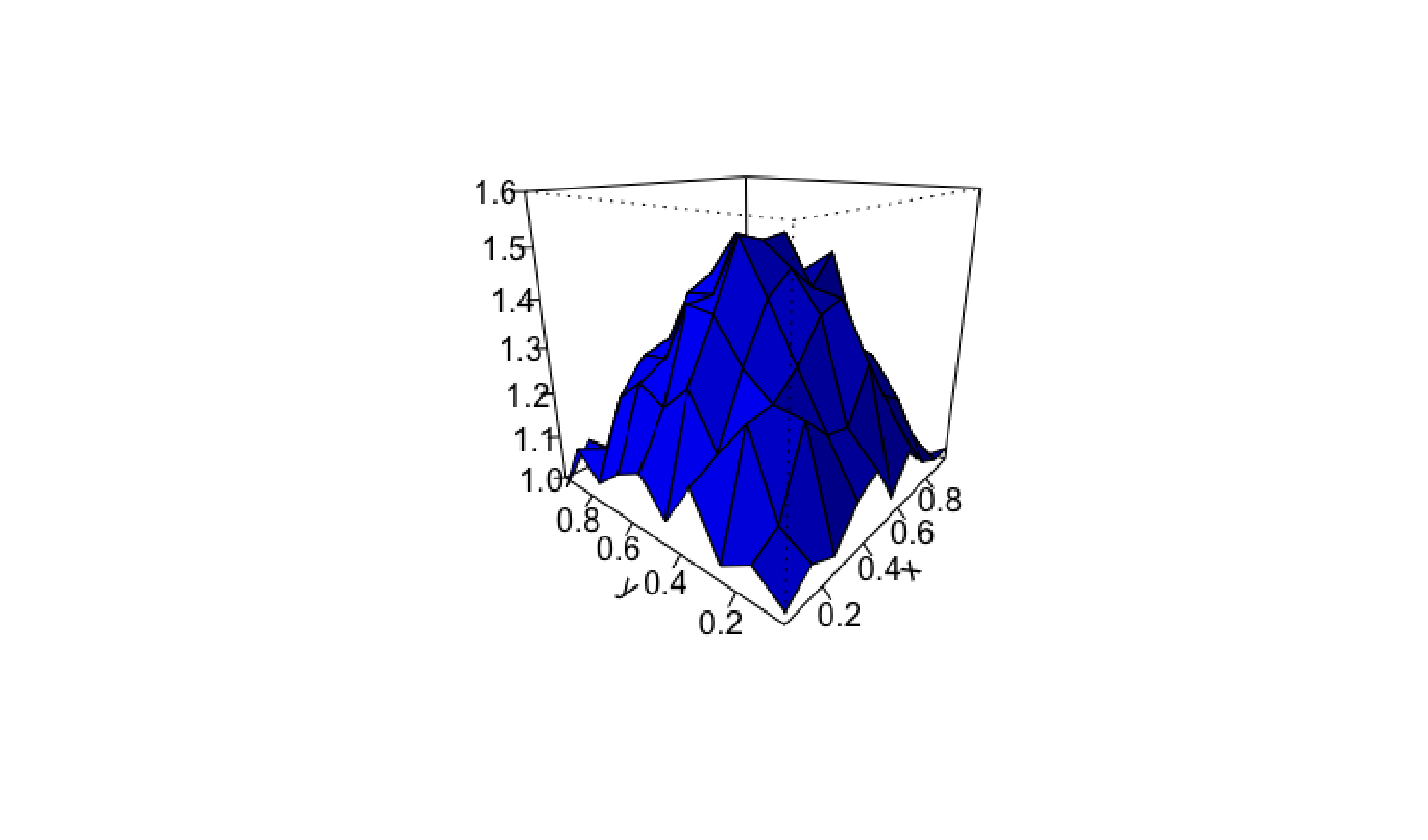}
         \caption{$t_{1}$}
         \label{m3t1}
     \end{subfigure}
     \hfill
     \begin{subfigure}[h]{0.32\textwidth}
         \centering
         \includegraphics[width=1\textwidth]{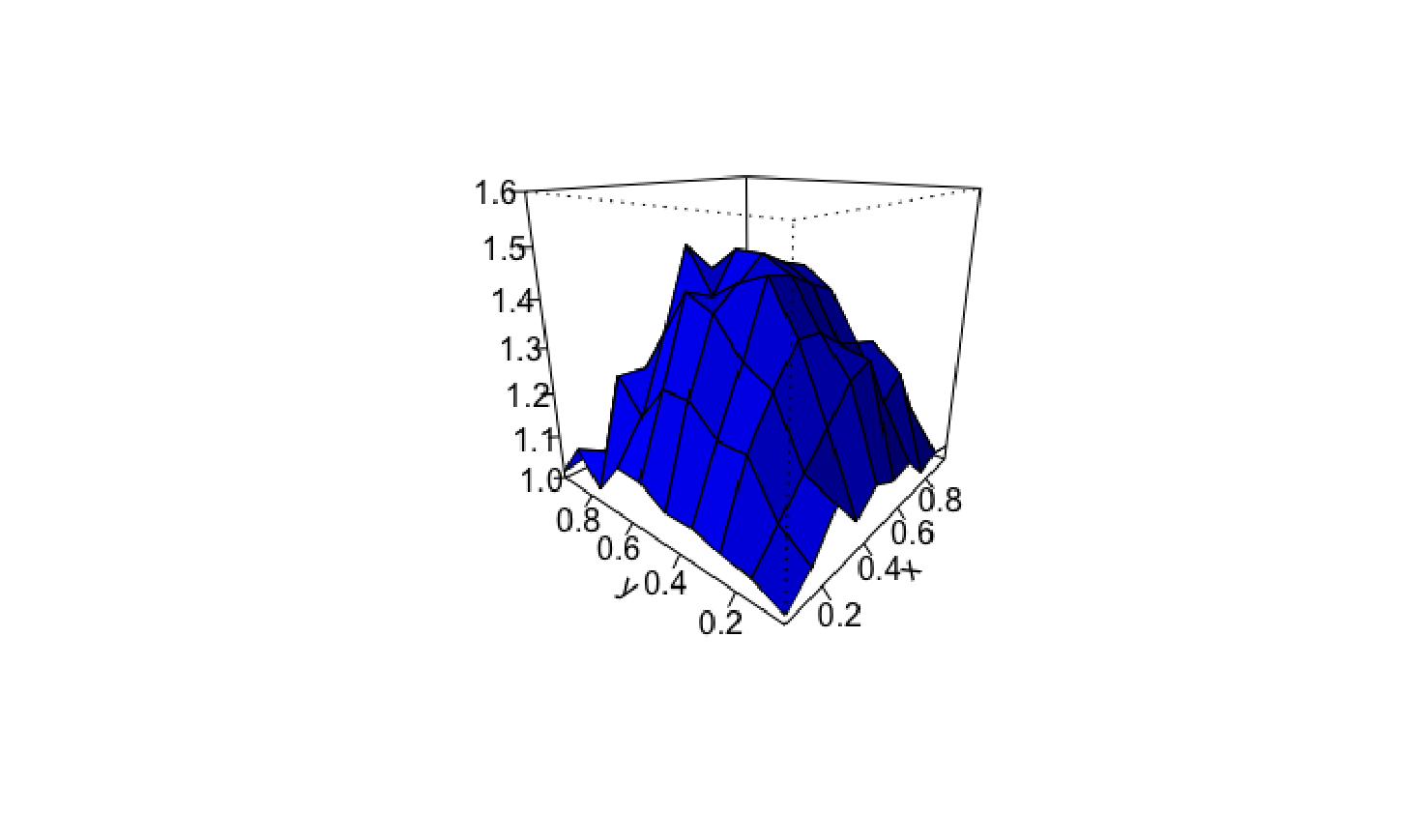}
         \caption{$t_{50}$}
         \label{m3t50}
     \end{subfigure}
     \hfill
     \begin{subfigure}[h]{0.32\textwidth}
         \centering
         \includegraphics[width=1\textwidth]{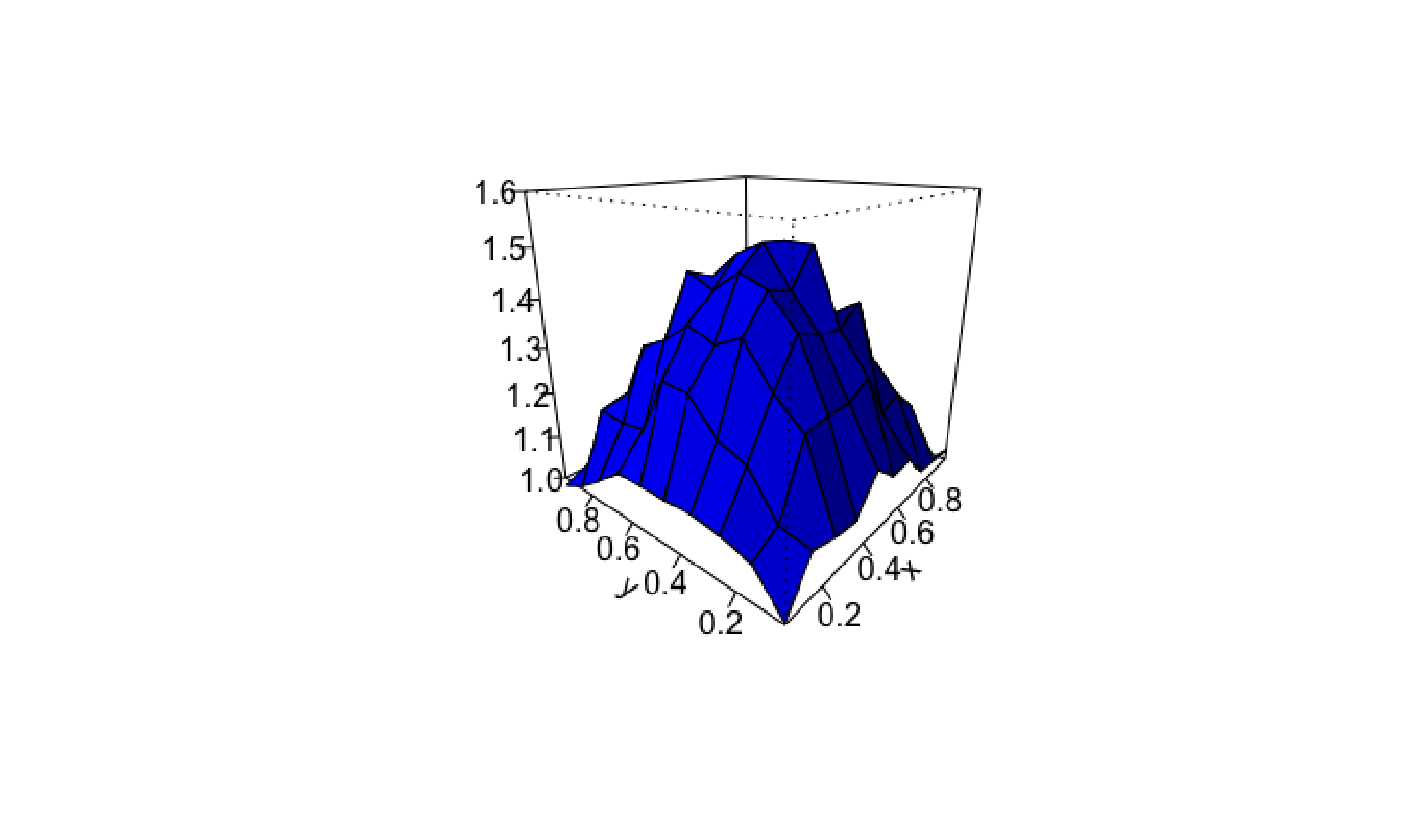}
         \caption{$t_{100}$}
         \label{m3t100}
     \end{subfigure}
\end{figure}
\clearpage
\begin{figure} \ContinuedFloat
     \begin{subfigure}[h]{0.32\textwidth}
         \centering
         \includegraphics[width=1\textwidth]{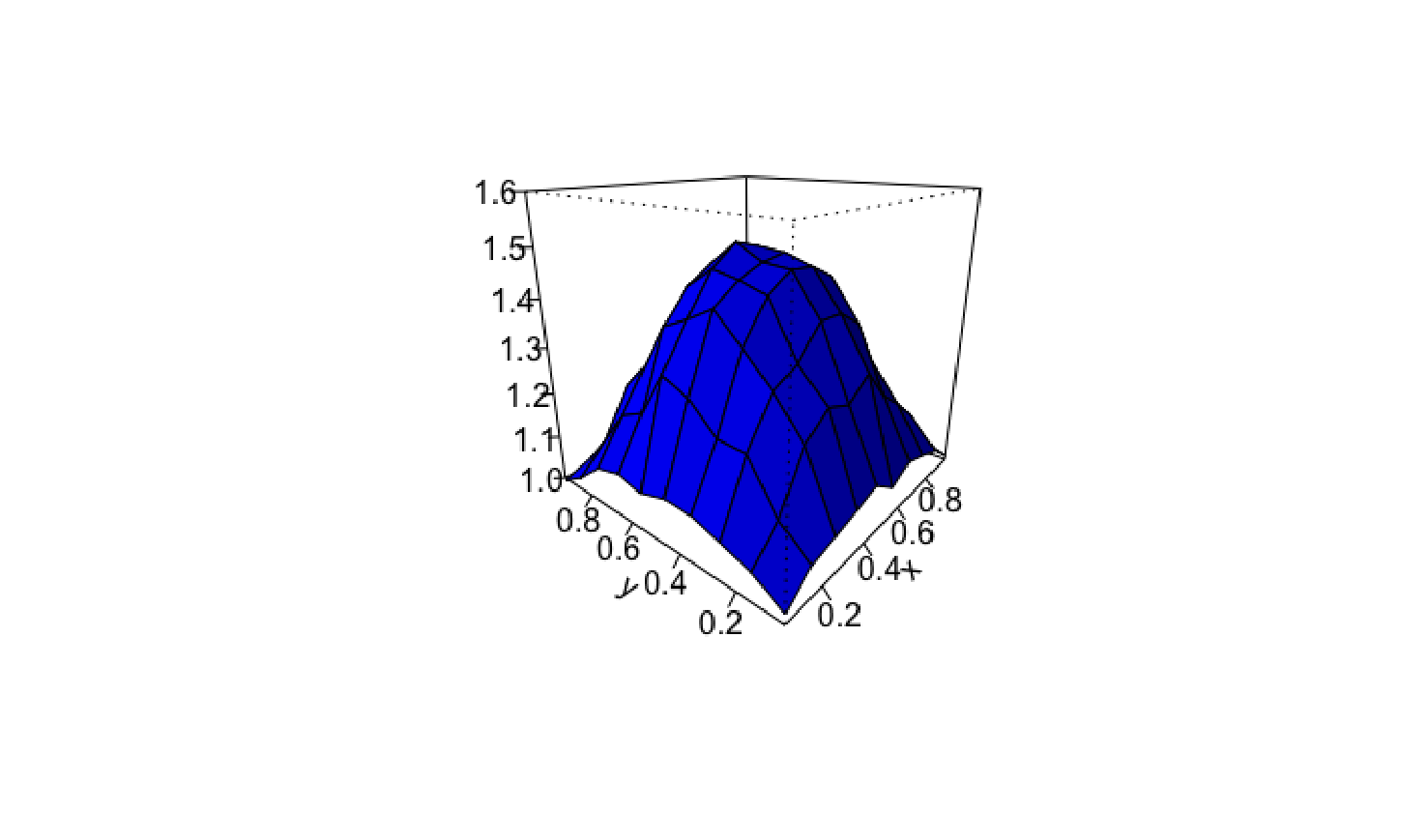}
         \caption{$t_{1}$}
         \label{m4t1}
     \end{subfigure}
     \hfill
     \begin{subfigure}[h]{0.32\textwidth}
         \centering
         \includegraphics[width=1\textwidth]{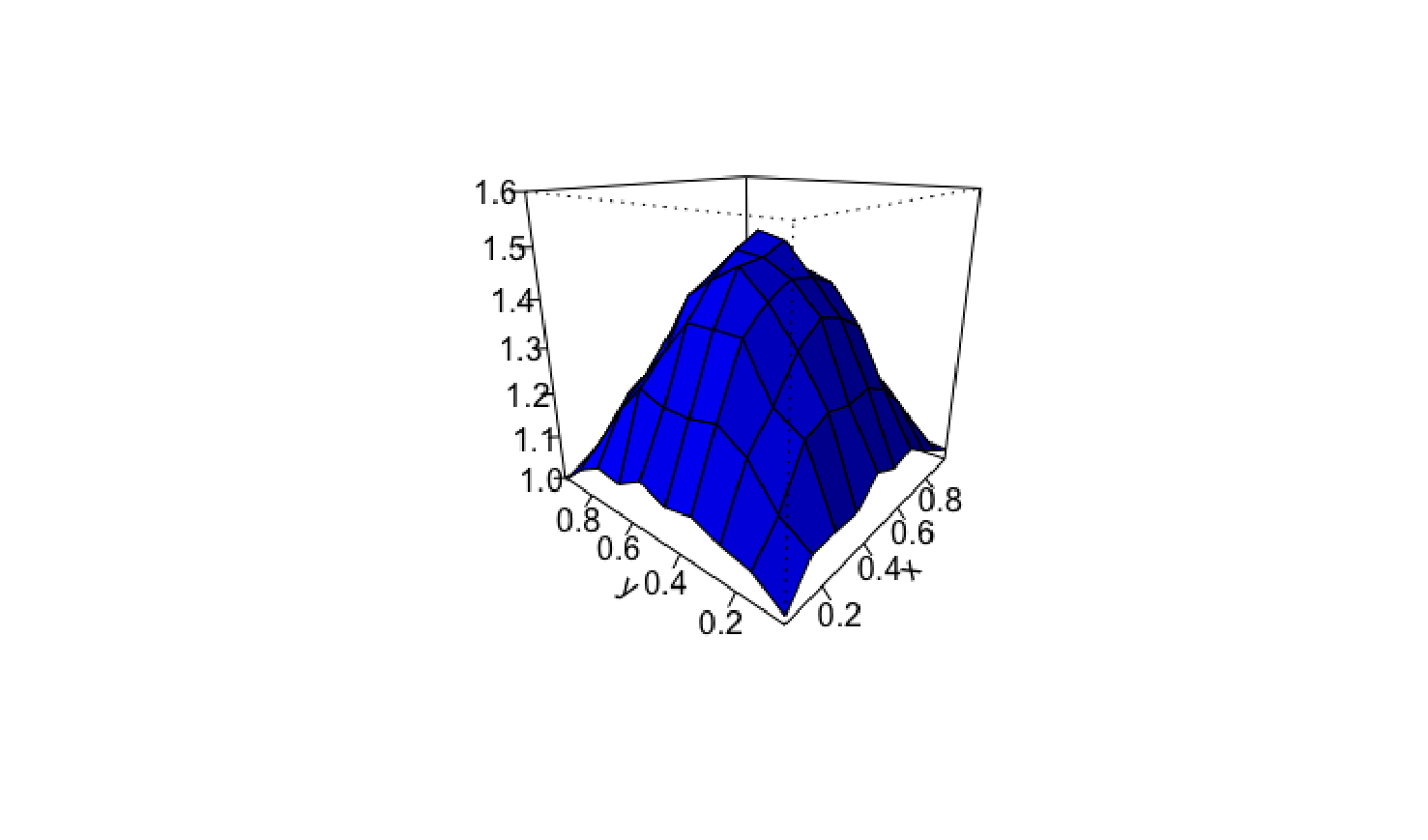}
         \caption{$t_{50}$}
         \label{m4t50}
     \end{subfigure}
     \hfill
     \begin{subfigure}[h]{0.32\textwidth}
         \centering
         \includegraphics[width=1\textwidth]{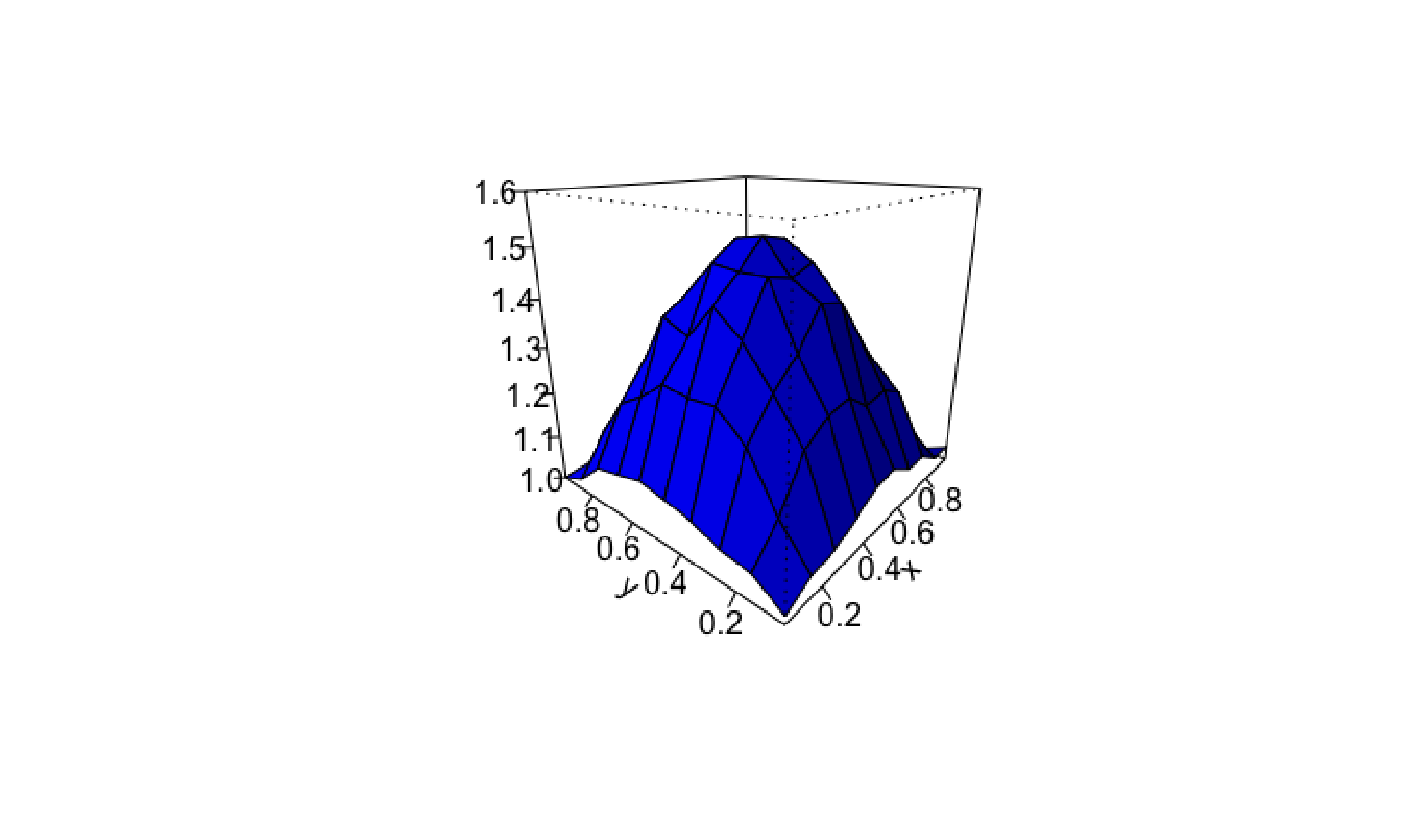}
         \caption{$t_{100}$}
         \label{m4t100}
     \end{subfigure}
        \caption{Model 3 and 4, in space with $H_{s}=0.40$ and time with $H_{t}=0.65$ (a, b, c) and $H_{t}=0.90$ (d, e, f) in three different instants of time.}
        \label{model3-4}
\end{figure}

Figures \ref{m1t1}, \ref{m1t50}, \ref{m1t100}, \ref{m3t1}, \ref{m3t50} and \ref{m3t100}, present different scenarios where a bigger variability, in time , is considered, this is a consequence of $H_{t}=0.65$. Meanwhile, figures \ref{m2t1}, \ref{m2t50}, \ref{m2t100}, \ref{m4t1}, \ref{m4t50} and \ref{m4t100} a decrease in variance is seen over time. On the other hand, regarding the spatial heterogeneity of the parameters, similar to the work of \cite{Foth2017}, in the models \ref{m1t1}, \ref{m1t50}, \ref{m1t100}, \ref{m2t1}, \ref{m2t50}, and \ref{m2t100} medium spatial heterogeneity is observed; in contrast to a high spatial heterogeneity for the models \ref{m3t1}, \ref{m3t50} and \ref{m3t100}, \ref{m4t1}, \ref{m4t50}, and \ref{m4t100}. These are the models that will be considered to estimate the parameter $\beta_{1}(z_{i})$ of the model \eqref{model}.

\subsection*{Estimator performance}
The estimation result, $\hat{Y}_{i} = \hat{\beta}_{1}(z_{i}) X (z_{i})$, in conjunction with the model defined by \eqref{model}, is presented below for the four different models simulated at $t_{1}, t_{50}$ and $t_{100}$.
\begin{figure}[h!]
     \centering
     \begin{subfigure}[h]{0.32\textwidth}
         \centering
         \includegraphics[width=1\textwidth]{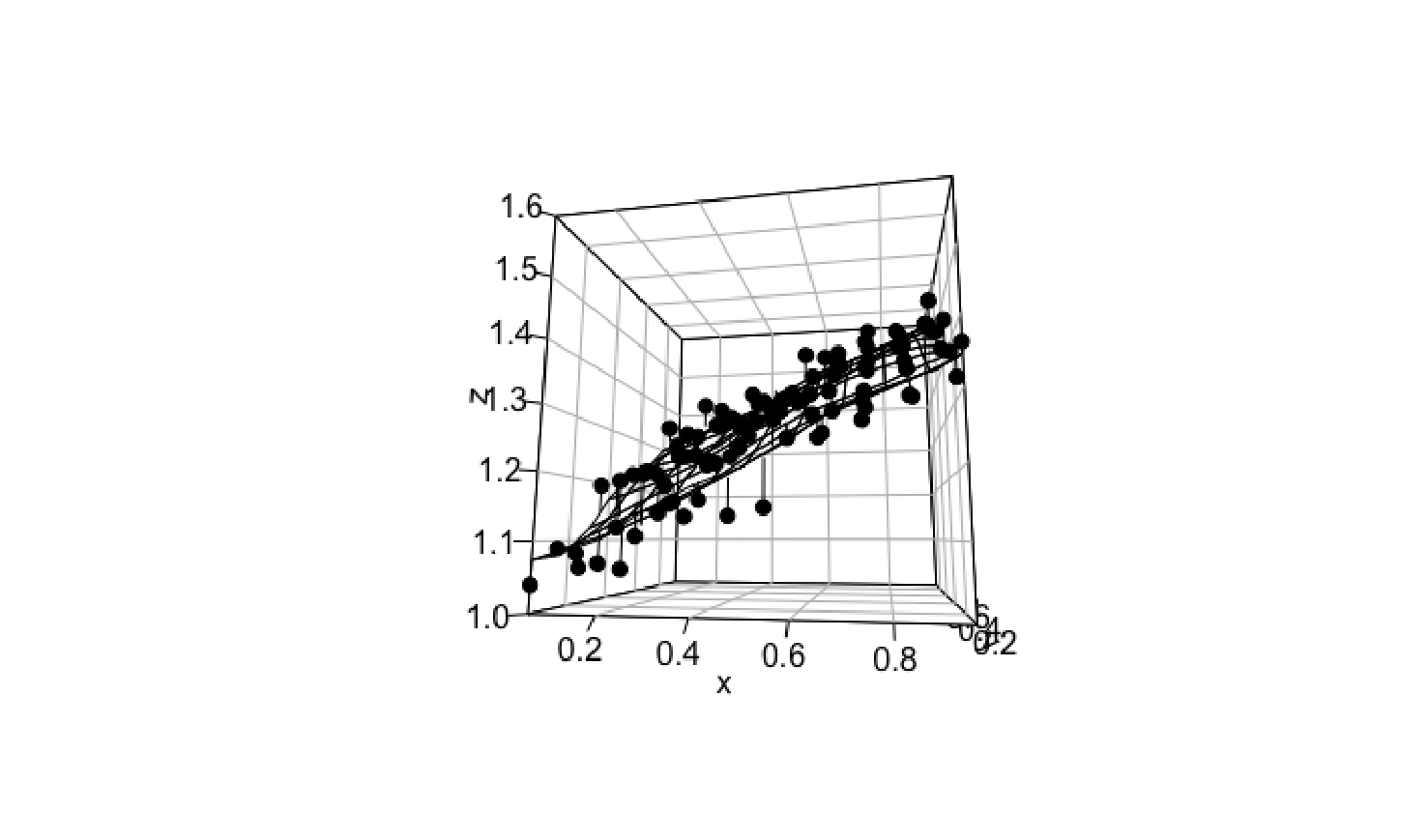}
         \caption{$t_{1}$}
         \label{estm1t1}
     \end{subfigure}
     \hfill
     \begin{subfigure}[h]{0.32\textwidth}
         \centering
         \includegraphics[width=1\textwidth]{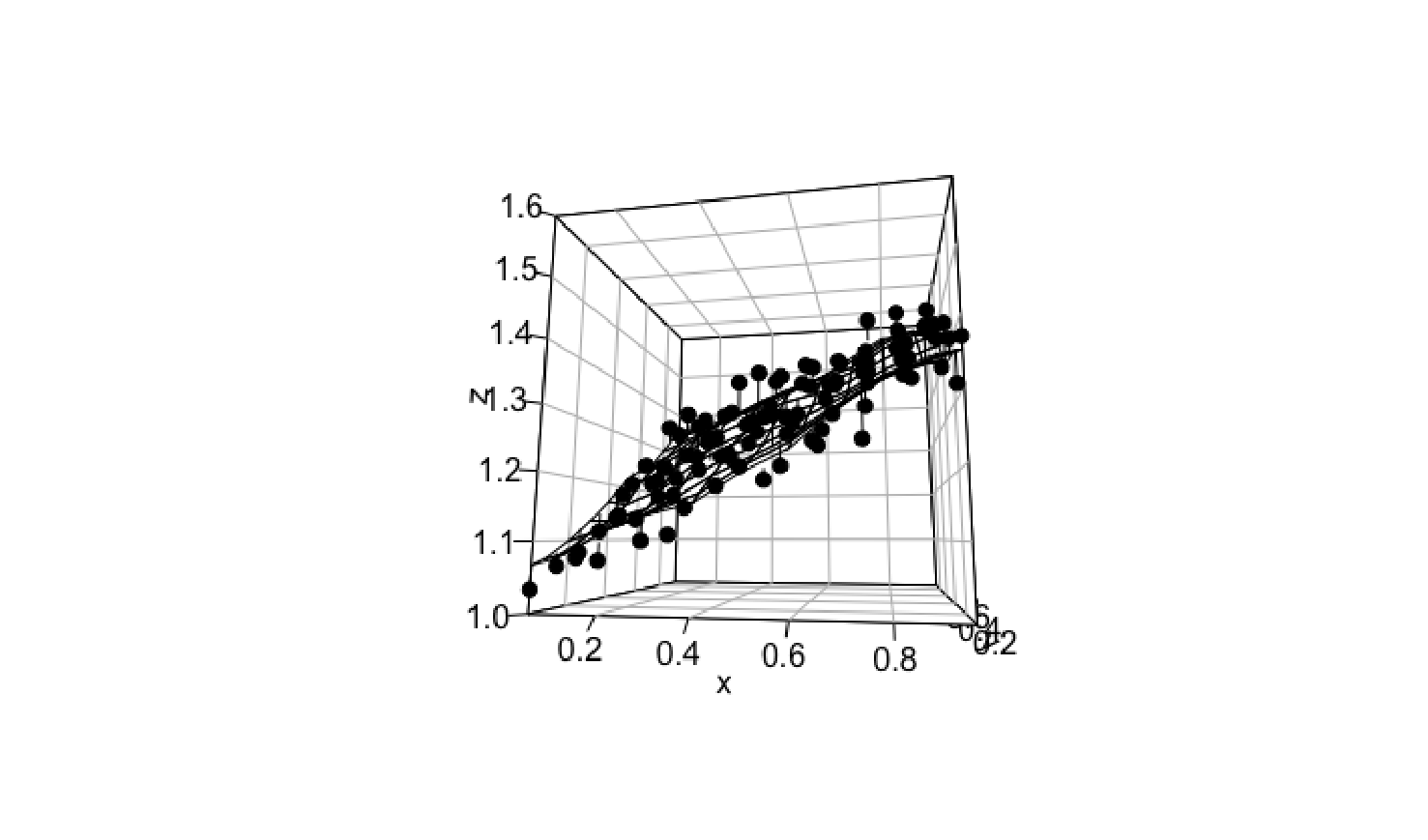}
         \caption{$t_{50}$}
         \label{estm1t50}
     \end{subfigure}
     \hfill
     \begin{subfigure}[h]{0.32\textwidth}
         \centering
         \includegraphics[width=1\textwidth]{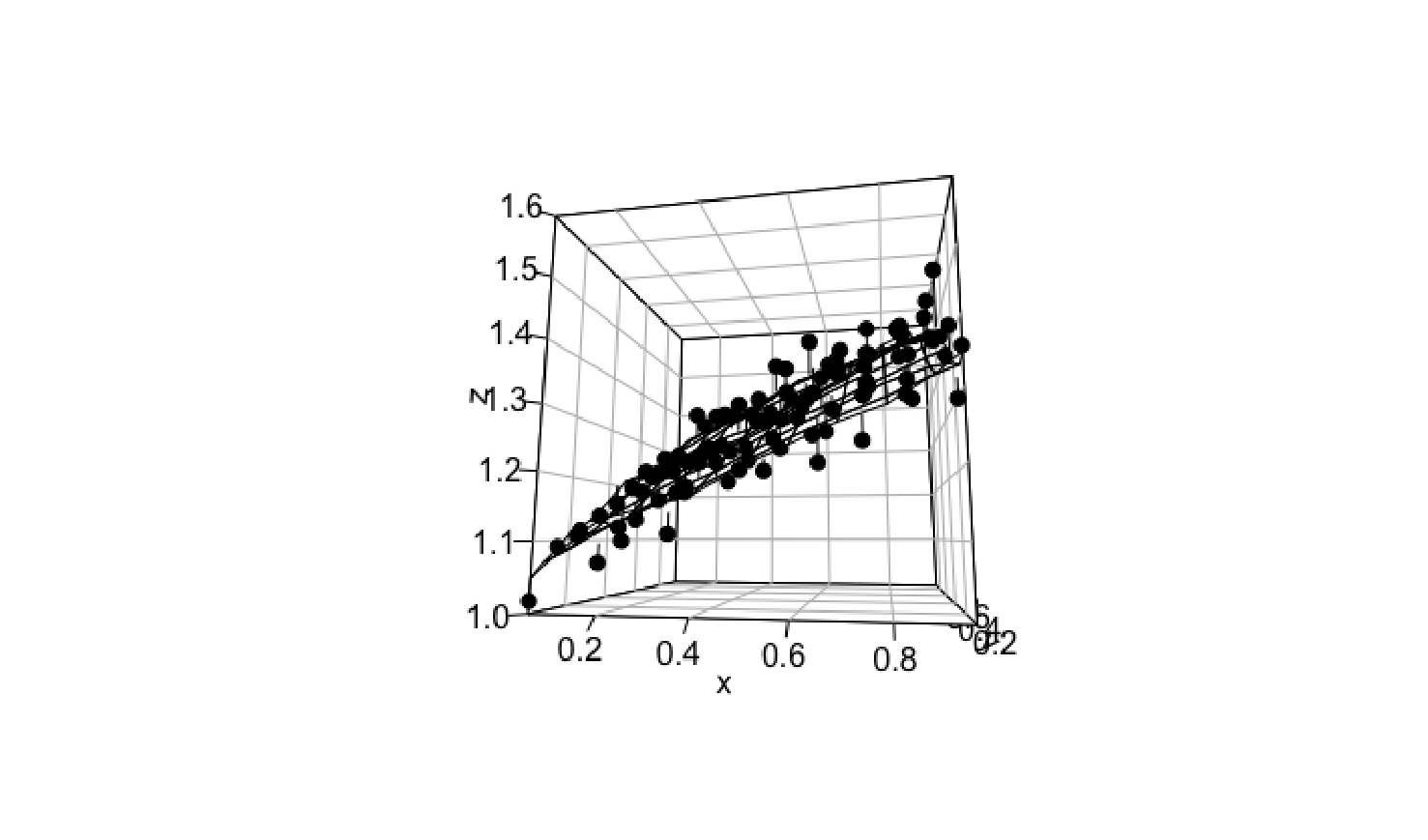}
         \caption{$t_{100}$}
         \label{estm1t100}
         \end{subfigure}
         \hfill
          \centering
     \begin{subfigure}[h]{0.32\textwidth}
         \centering
         \includegraphics[width=1\textwidth]{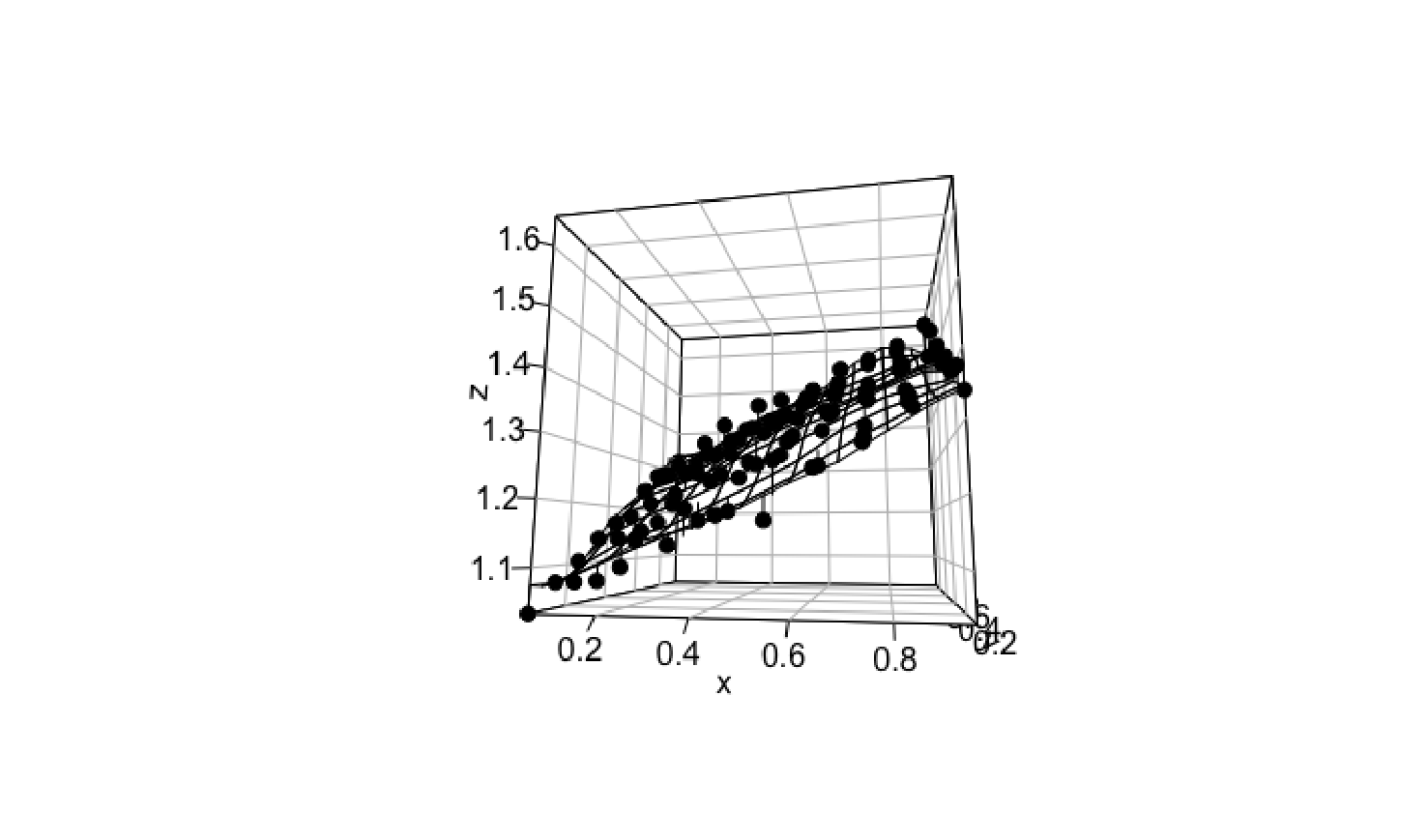}
         \caption{$t_{1}$}
         \label{estm2t1}
     \end{subfigure}
     \hfill
     \begin{subfigure}[h]{0.32\textwidth}
         \centering
         \includegraphics[width=1\textwidth]{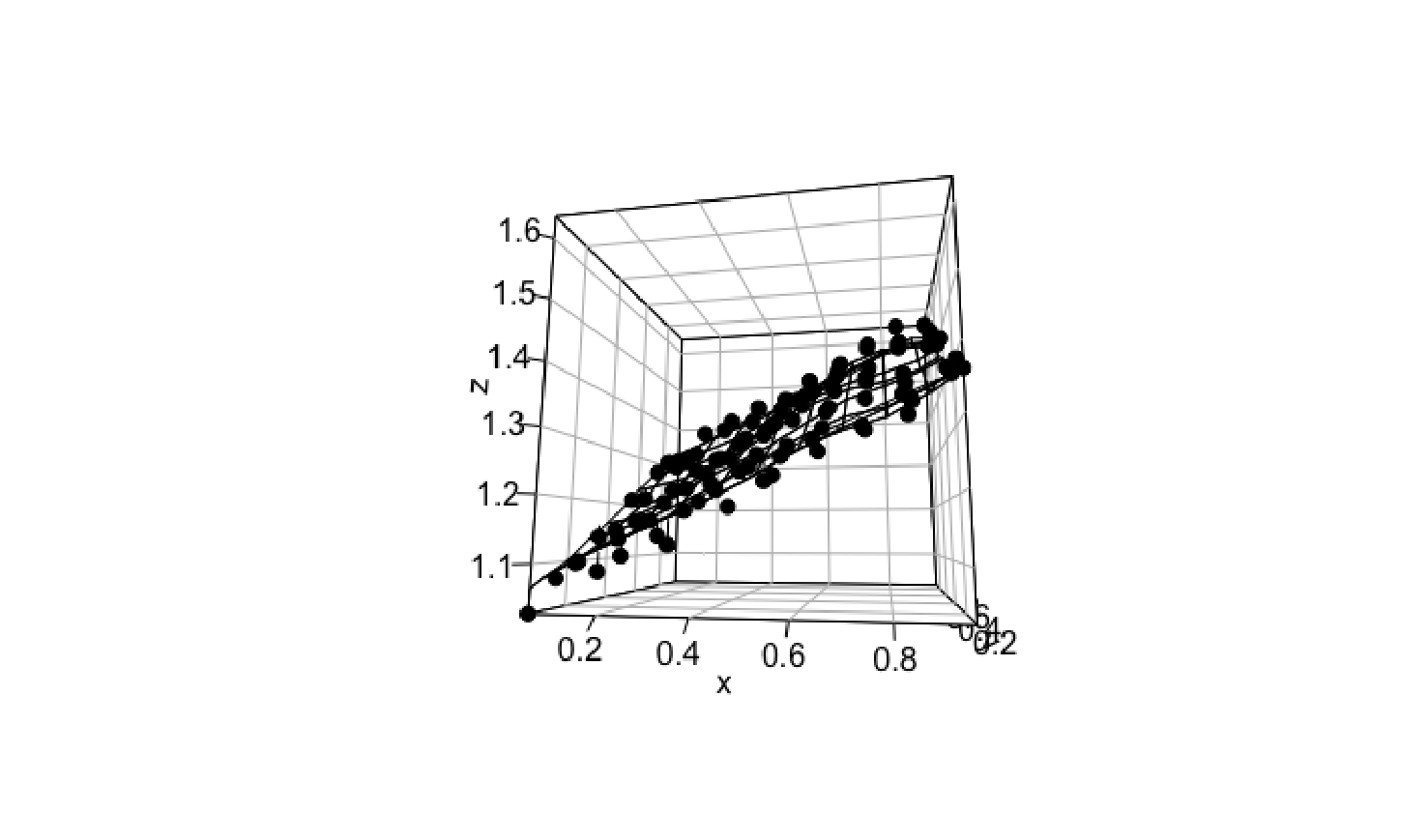}
         \caption{$t_{50}$}
         \label{estm2t50}
     \end{subfigure}
     \hfill
     \begin{subfigure}[h]{0.32\textwidth}
         \centering
         \includegraphics[width=1\textwidth]{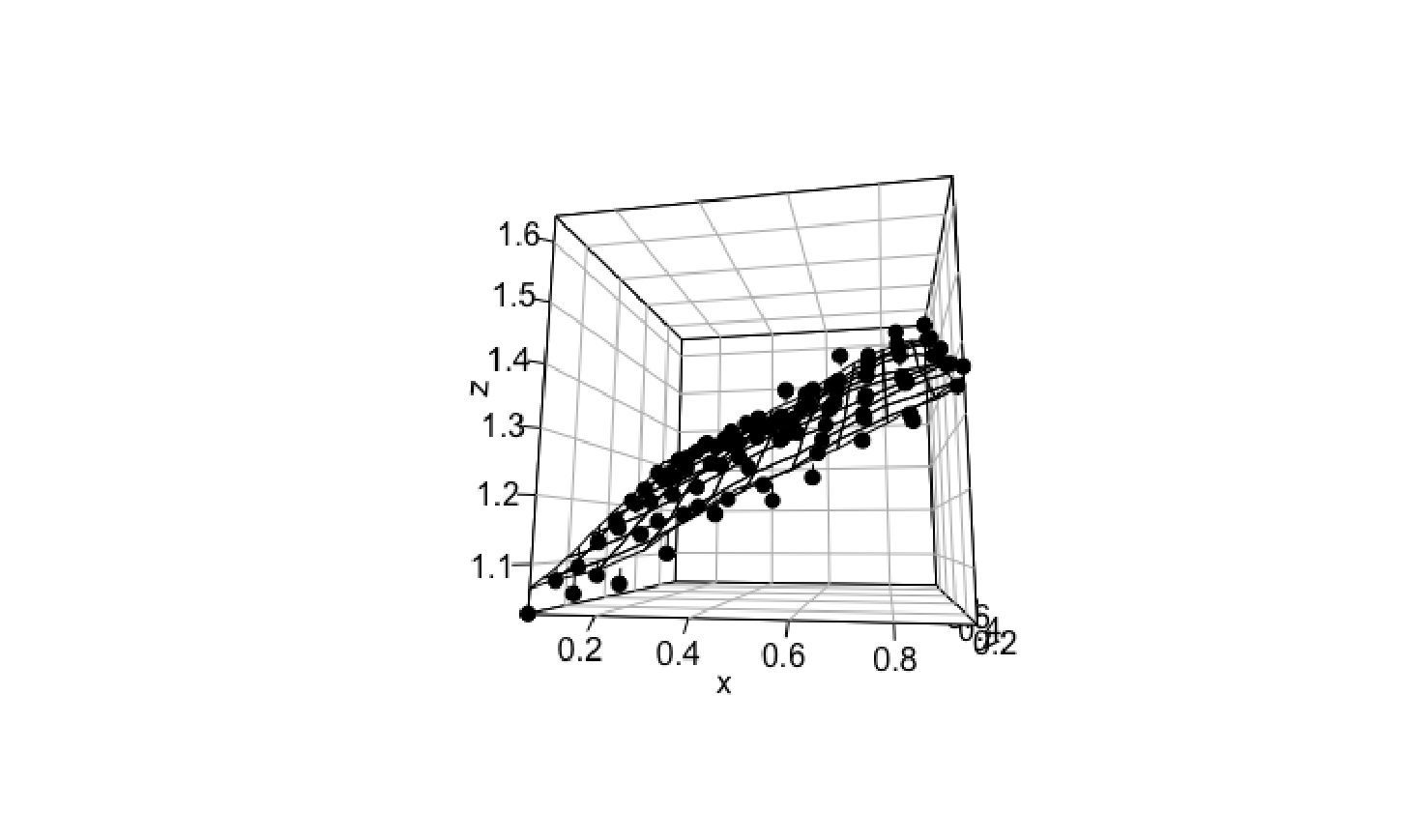}
         \caption{$t_{100}$}
         \label{estm2t100}
     \end{subfigure}
        \caption{Model 1 and 2, in space with $H_{s}=0.40$ and time with $H_{t}=0.65$ (a, b, c) and $H_{t}=0.90$ (d, e, f) in three different instants of time.}
        \label{model3-4}
\end{figure}
\clearpage
\begin{figure}[h!]
     \centering
     \begin{subfigure}[h]{0.32\textwidth}
         \centering
         \includegraphics[width=1\textwidth]{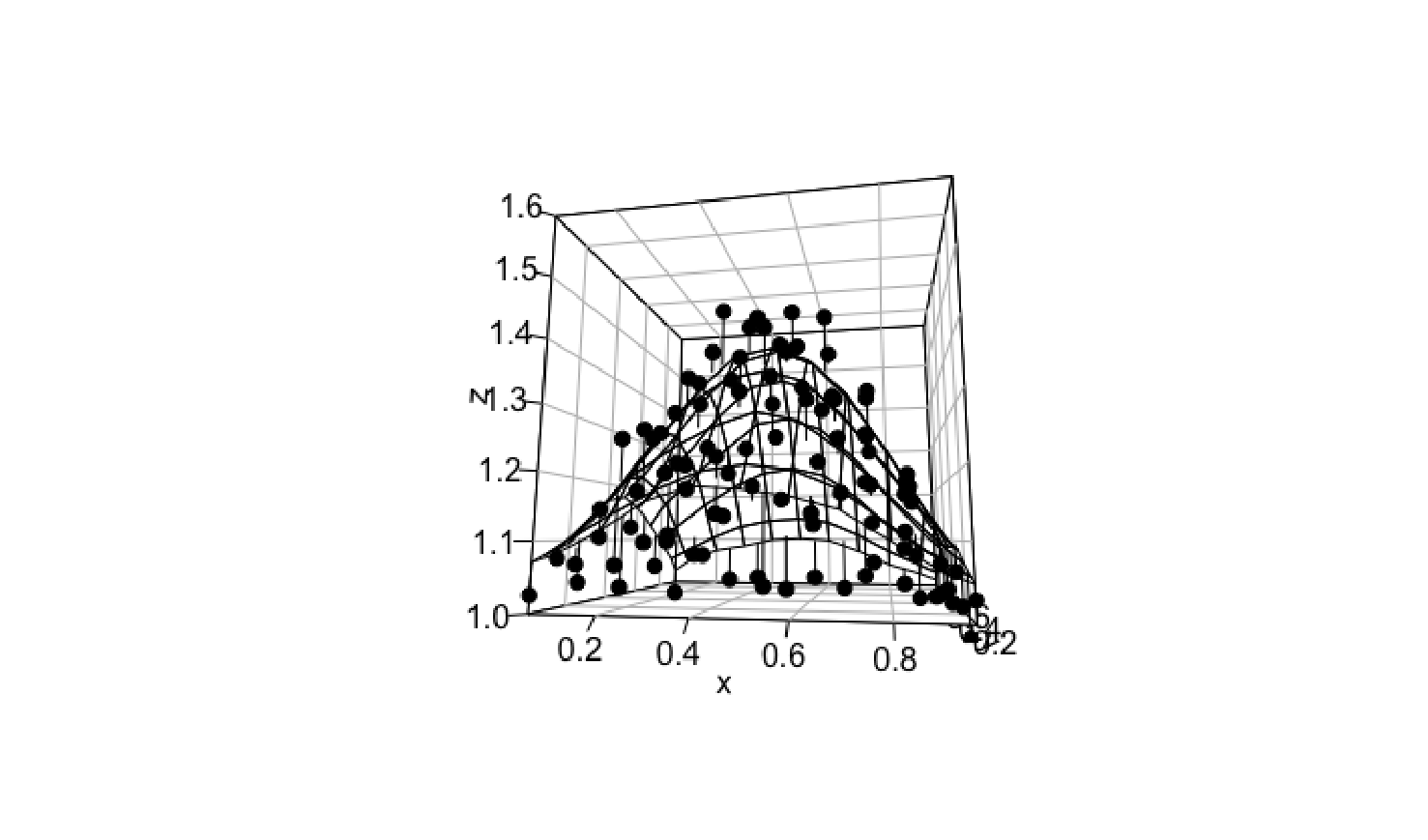}
         \caption{$t_{1}$}
         \label{estm1t1}
     \end{subfigure}
     \hfill
     \begin{subfigure}[h]{0.32\textwidth}
         \centering
         \includegraphics[width=1\textwidth]{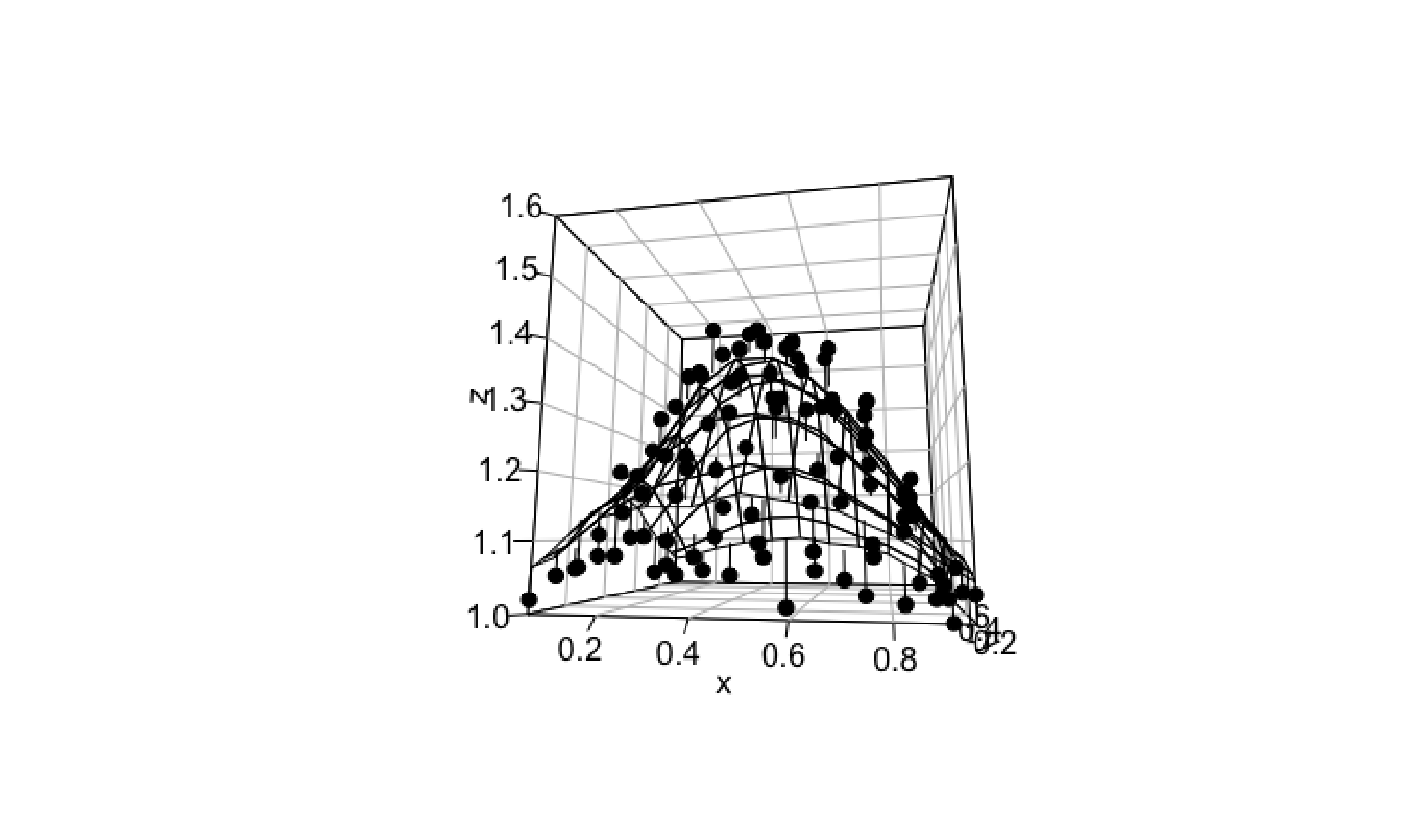}
         \caption{$t_{50}$}
         \label{estm1t50}
     \end{subfigure}
     \hfill
     \begin{subfigure}[h]{0.32\textwidth}
         \centering
         \includegraphics[width=1\textwidth]{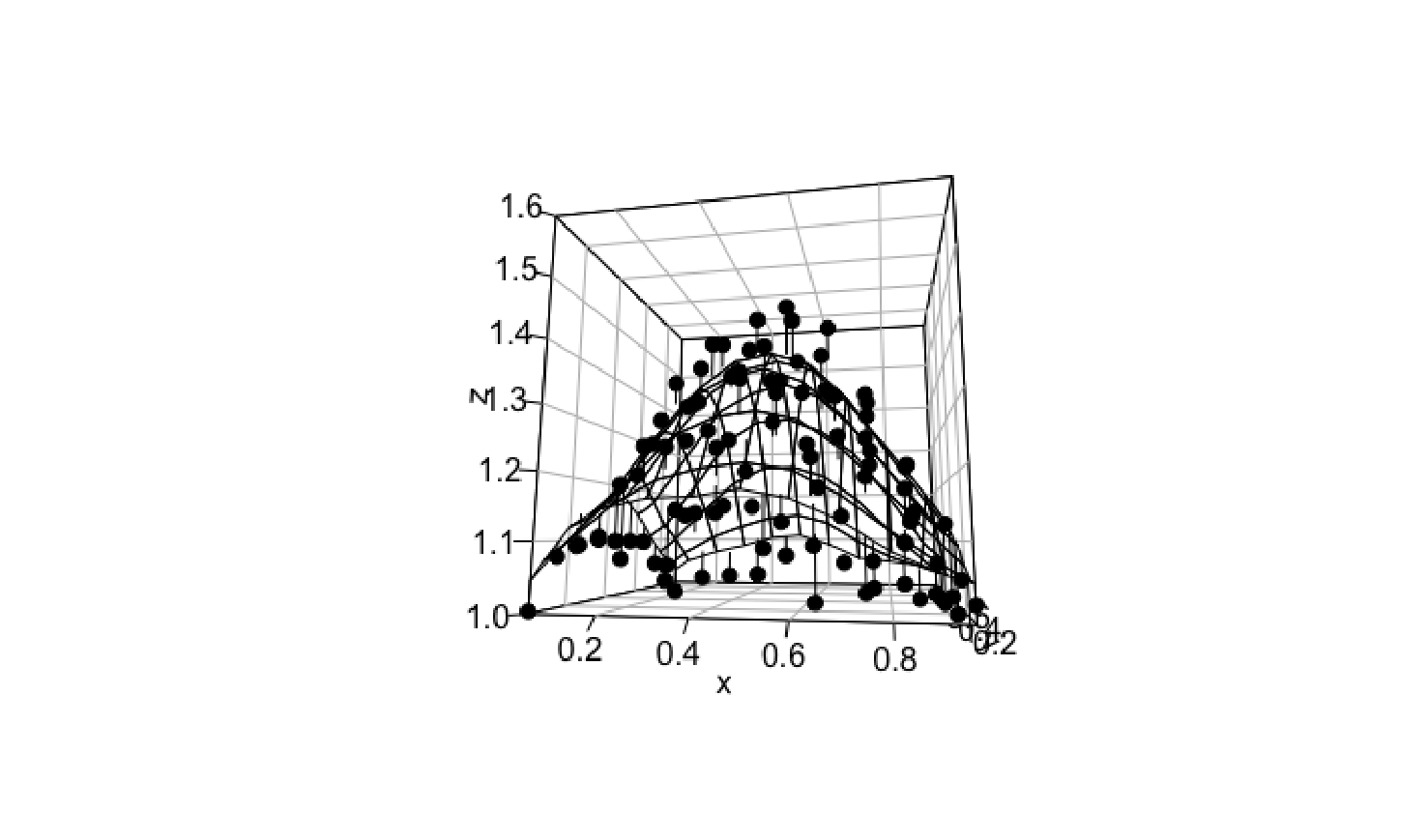}
         \caption{$t_{100}$}
         \label{estm1t100}
     \end{subfigure}
     \hfill
     \begin{subfigure}[h]{0.32\textwidth}
         \centering
         \includegraphics[width=1\textwidth]{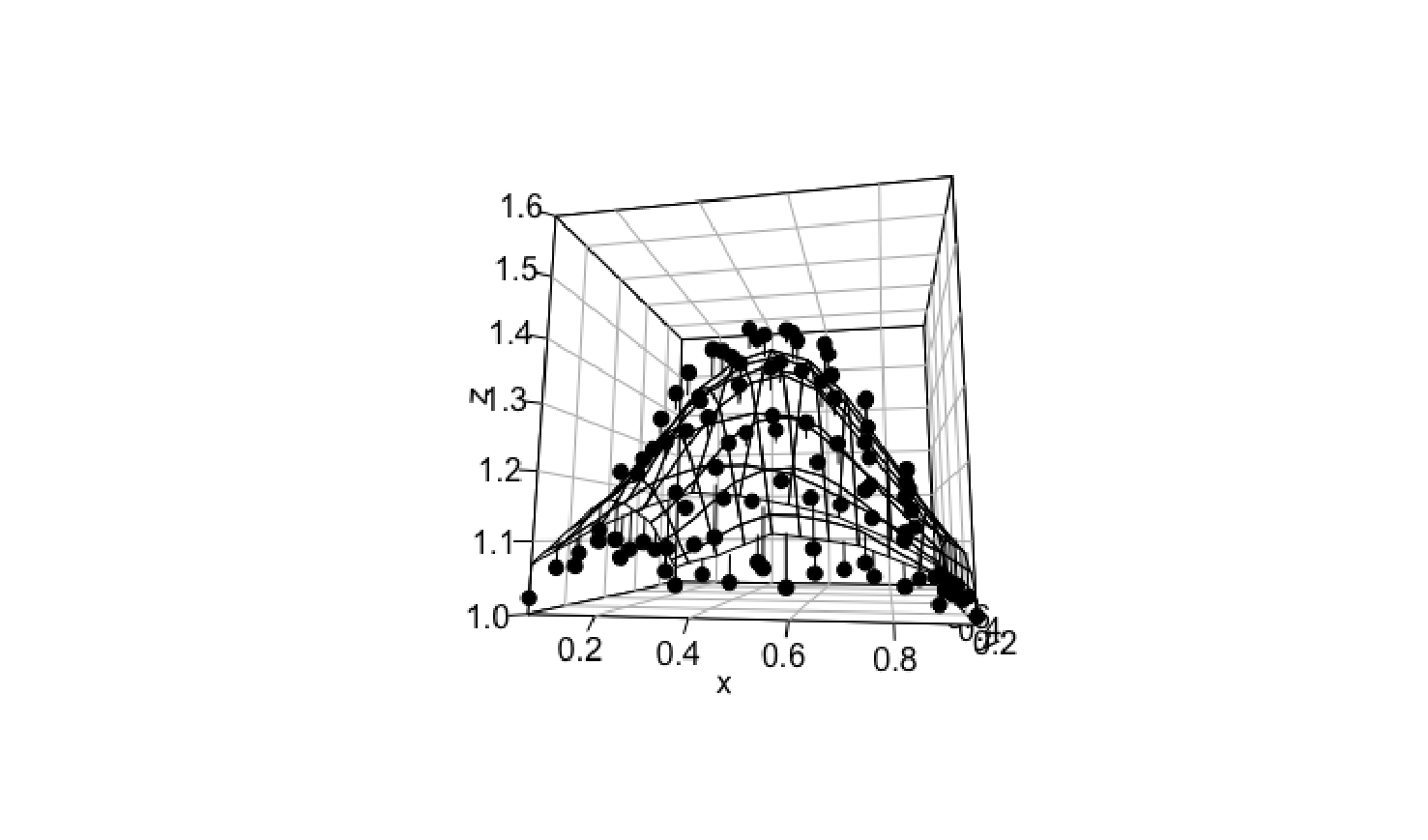}
         \caption{$t_{1}$}
         \label{estm2t1}
     \end{subfigure}
     \hfill
     \begin{subfigure}[h]{0.32\textwidth}
         \centering
         \includegraphics[width=1\textwidth]{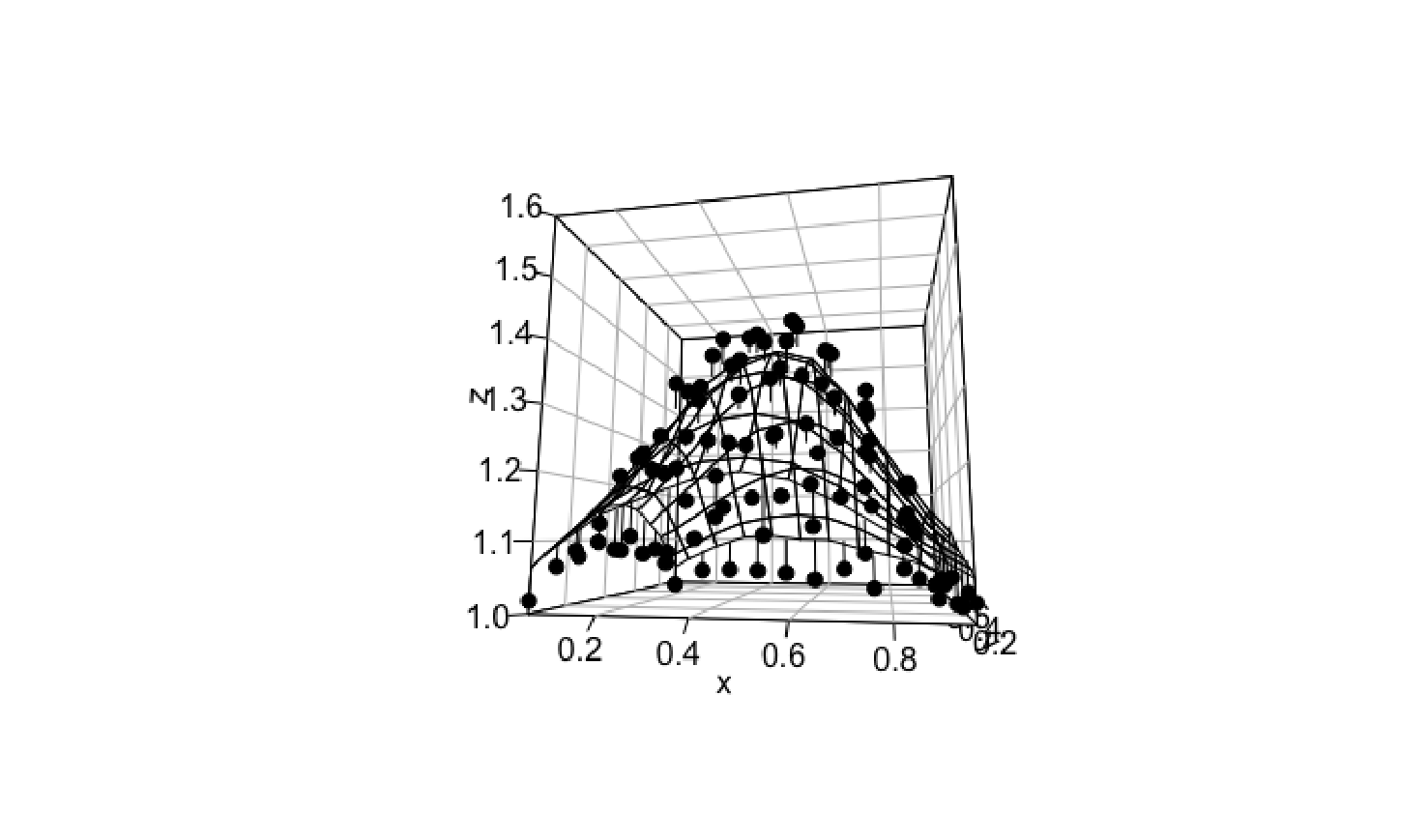}
         \caption{$t_{50}$}
         \label{estm2t50}
     \end{subfigure}
     \hfill
     \begin{subfigure}[h]{0.32\textwidth}
         \centering
         \includegraphics[width=1\textwidth]{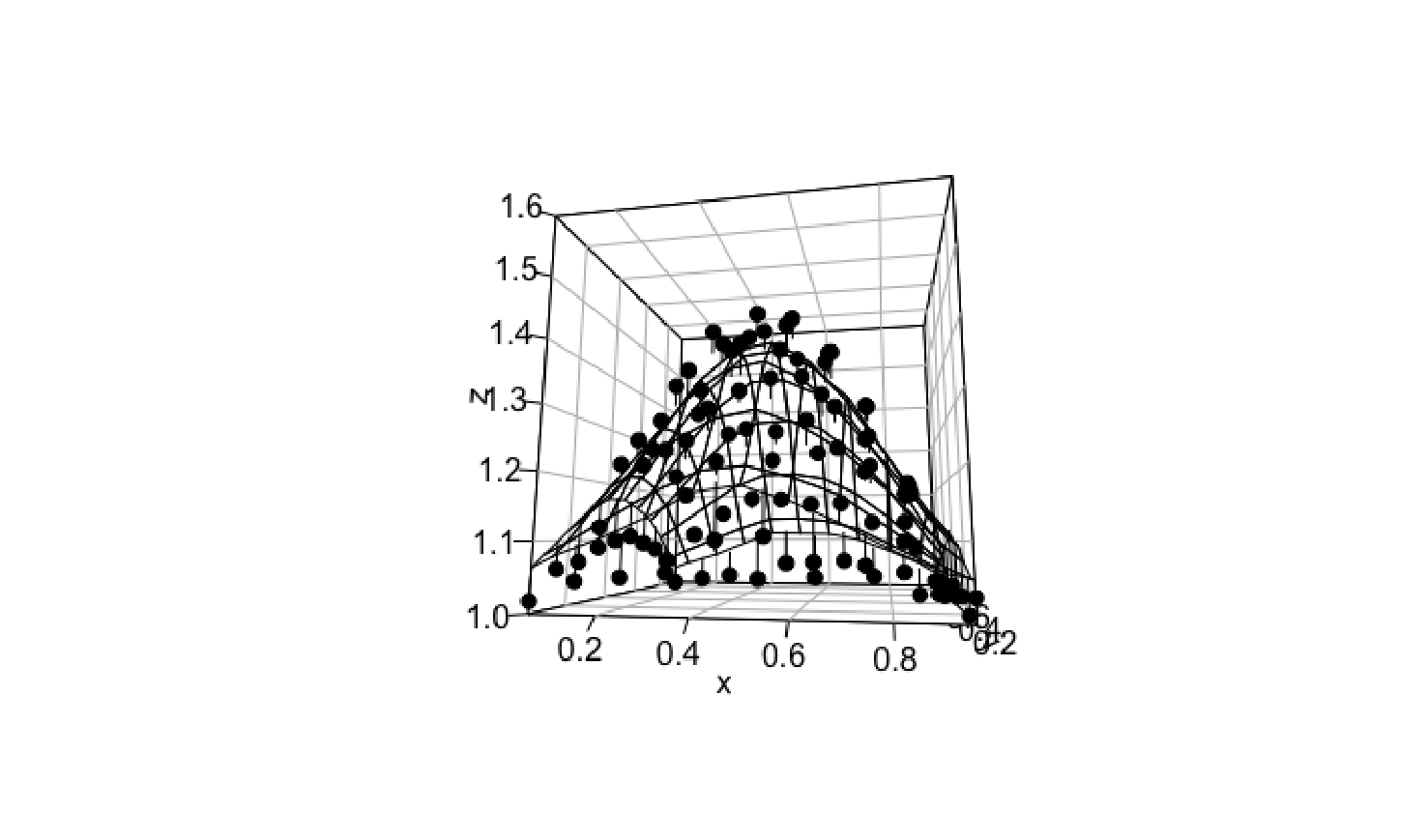}
         \caption{$t_{100}$}
         \label{estm2t100}
     \end{subfigure}
        \caption{Model 3 and 4, in space with $H_{s}=0.40$ and time with $H_{t}=0.65$ (a, b, c) and $H_{t}=0.90$ (d, e, f) in three different instants of time}
        \label{model3-4}
\end{figure}

In the figures above, the points represent $Y_{i}(z_{i})$, $i=1, \dots, 100$, while the surfaces represent $\hat{Y}_{i}(z_{i})$, $i=1, \dots, 100$. It is important to note that the estimation is slightly different for each time, in our example we consider $t_{1}, \dots t_{100}$ (for more details check Remark \ref{sim-rem}). For all the models considered, it is possible to appreciate a similarity between what was simulated and the estimation performed. To verify the performance of the proposed estimation, the following table presents indexes to quantify the goodness of fit.

\begin{table}[h!]
\centering
\begin{tabular}{llllll}
\hline
\multicolumn{6}{|c|}{Model 1}                                                                                                                                                                      \\ \hline
\multicolumn{1}{|l|}{Minimum} & \multicolumn{1}{l|}{$Q_{1}$} & \multicolumn{1}{l|}{Median} & \multicolumn{1}{l|}{$Q_{3}$} & \multicolumn{1}{l|}{Maximum} & \multicolumn{1}{l|}{Adjusted $R^{2}$} \\ \hline
\multicolumn{1}{|l|}{1.0667}  & \multicolumn{1}{l|}{1.2388}  & \multicolumn{1}{l|}{1.3331} & \multicolumn{1}{l|}{1.4284}  & \multicolumn{1}{l|}{1.5995}  & \multicolumn{1}{l|}{0.9419423}        \\ \hline
\multicolumn{6}{|c|}{Model 2}                                                                                                                                                                    \\ \hline
\multicolumn{1}{|l|}{Minimum} & \multicolumn{1}{l|}{$Q_{1}$} & \multicolumn{1}{l|}{Median} & \multicolumn{1}{l|}{$Q_{3}$} & \multicolumn{1}{l|}{Maximum} & \multicolumn{1}{l|}{Adjusted $R^{2}$} \\ \hline
\multicolumn{1}{|l|}{1.0683}  & \multicolumn{1}{l|}{1.2383}  & \multicolumn{1}{l|}{1.3330} & \multicolumn{1}{l|}{1.4291}  & \multicolumn{1}{l|}{1.5986}  & \multicolumn{1}{l|}{0.9877432}        \\ \hline
\multicolumn{6}{|c|}{Model 3}                                                                                                                                                                    \\ \hline
\multicolumn{1}{|l|}{Minimum} & \multicolumn{1}{l|}{$Q_{1}$} & \multicolumn{1}{l|}{Median} & \multicolumn{1}{l|}{$Q_{3}$} & \multicolumn{1}{l|}{Maximum} & \multicolumn{1}{l|}{Adjusted $R^{2}$} \\ \hline
\multicolumn{1}{|l|}{1.0234}  & \multicolumn{1}{l|}{1.1055}  & \multicolumn{1}{l|}{1.1872} & \multicolumn{1}{l|}{1.3032}  & \multicolumn{1}{l|}{1.4518}  & \multicolumn{1}{l|}{0.8695787}        \\ \hline
\multicolumn{6}{|c|}{Model 4}                                                                                                                                                                    \\ \hline
\multicolumn{1}{|l|}{Minimum} & \multicolumn{1}{l|}{$Q_{1}$} & \multicolumn{1}{l|}{Median} & \multicolumn{1}{l|}{$Q_{3}$} & \multicolumn{1}{l|}{Maximum} & \multicolumn{1}{l|}{Adjusted $R^{2}$} \\ \hline
\multicolumn{1}{|l|}{1.0244}  & \multicolumn{1}{l|}{1.1084}  & \multicolumn{1}{l|}{1.1882} & \multicolumn{1}{l|}{1.3043}  & \multicolumn{1}{l|}{1.4534}  & \multicolumn{1}{l|}{0.9182242}        \\ \hline
\end{tabular}
\caption{Minimums, Quartiles 1, 2 (Median), and 3, maximums and Adjusted $R^{2}$}
\end{table}

Considering that $\beta$ is estimated for each time instant, we can see that the values of the minima, the respective quartiles, and maxima, presented for model 1 together with model 2, and for model 3 together with model 4, are very similar. As for the adjusted $R^{2}$, which is an index of the goodness of fit, and which indicates the amount of variability explained by the explanatory variable, it is possible to notice that it decreases in models 2 and 4, with respect to models 1 and 3, respectively. These results are a consequence of the lower variability in models 2 and 4.

\subsection*{Quadratic Mean Error - QME}
The way to build them was by iterations where the observations were accumulated according to time, i.e., in the first iteration the parameter was estimated with 100 observations (first regular grid or $t_1$). The second iteration considered 200 points ($t_1$ and $t_2$) and so on until the 10000 observations were reached (100 observations for each of the 100 different times considered).

\clearpage
\begin{figure}[h!]
     \begin{subfigure}[h]{0.5\textwidth}
         \centering
         \includegraphics[width=1\textwidth]{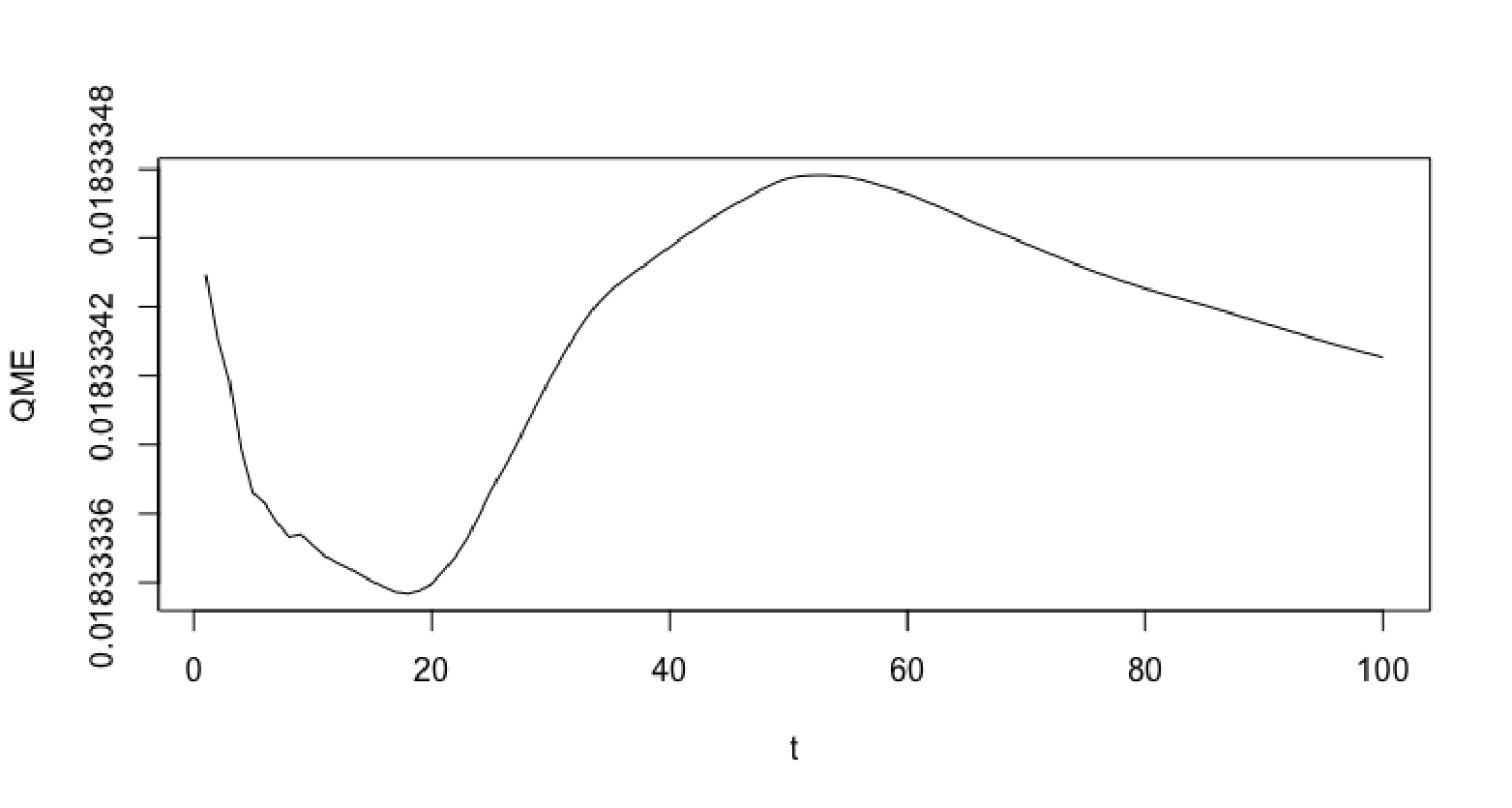}
         \caption{QME for model 1}
         \label{ecm-m1}
     \end{subfigure}
	\hfill
     \begin{subfigure}[h]{0.5\textwidth}
         \centering
         \includegraphics[width=1\textwidth]{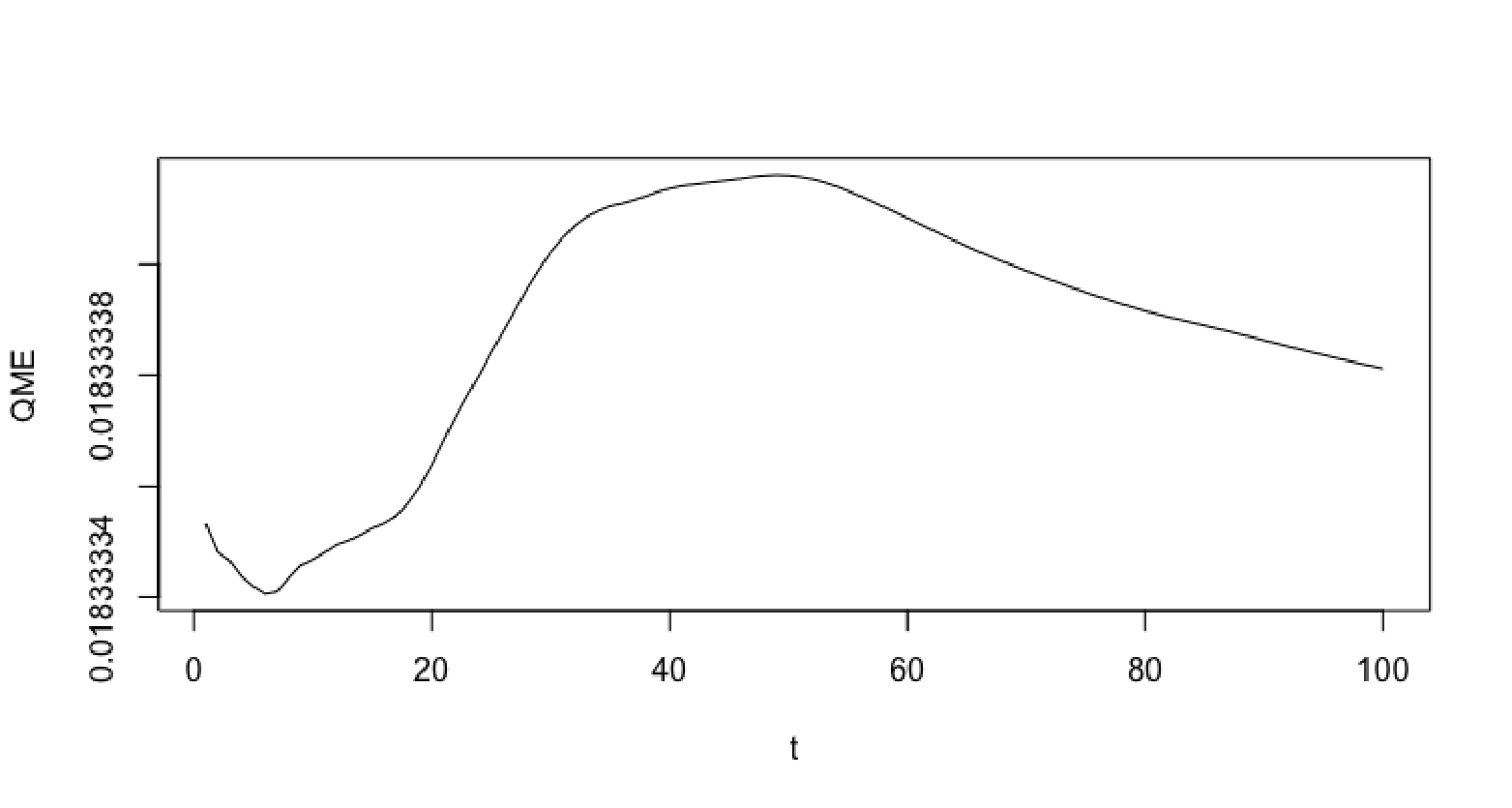}
         \caption{QME for model 2}
         \label{ecm-m2}
     \end{subfigure}
     \hfill
     \begin{subfigure}[h]{0.5\textwidth}
         \centering
         \includegraphics[width=1\textwidth]{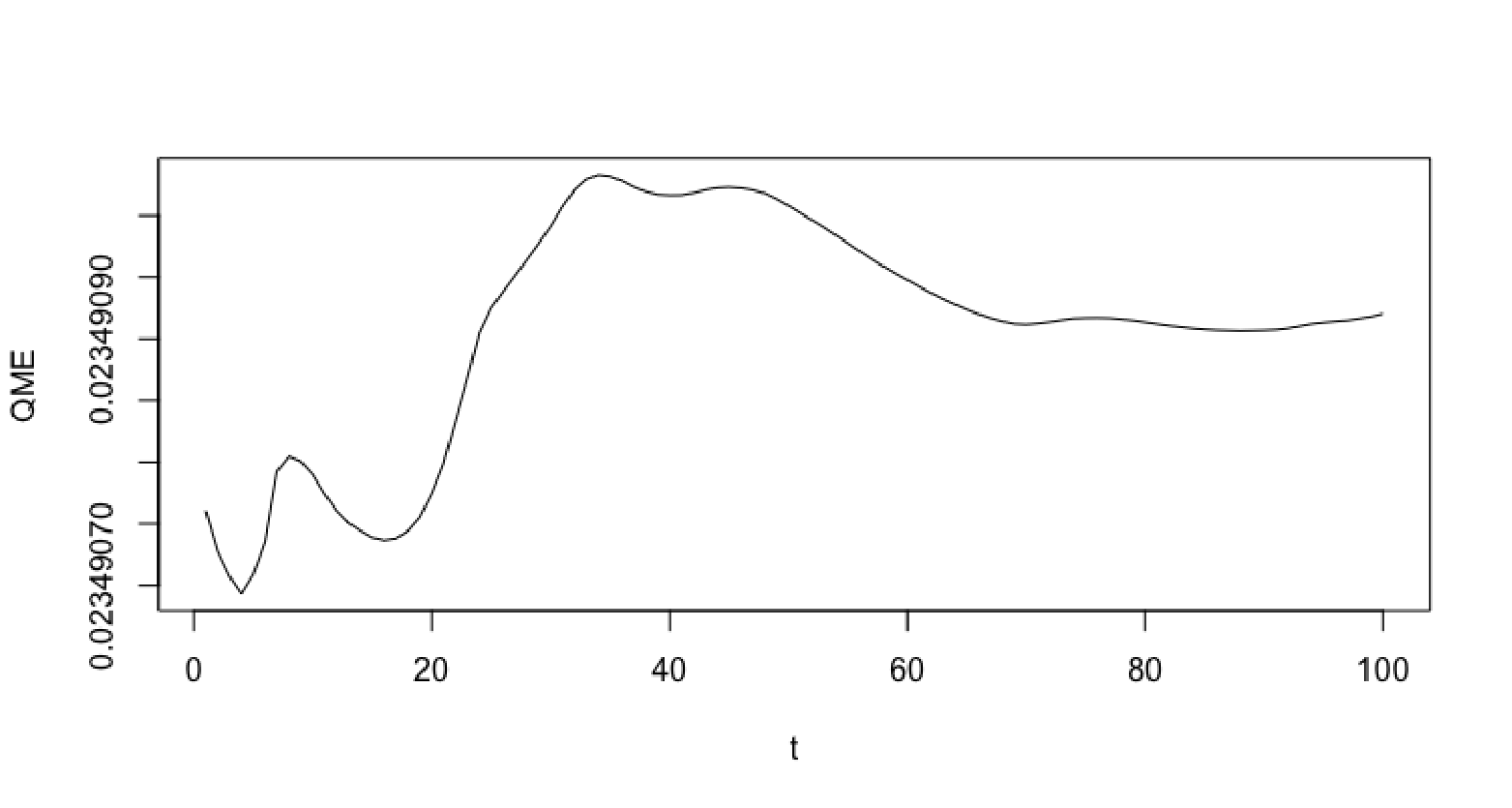}
         \caption{QME for model 3}
         \label{ecm-m3}
     \end{subfigure}
	\hfill
     \begin{subfigure}[h]{0.5\textwidth}
         \centering
         \includegraphics[width=1\textwidth]{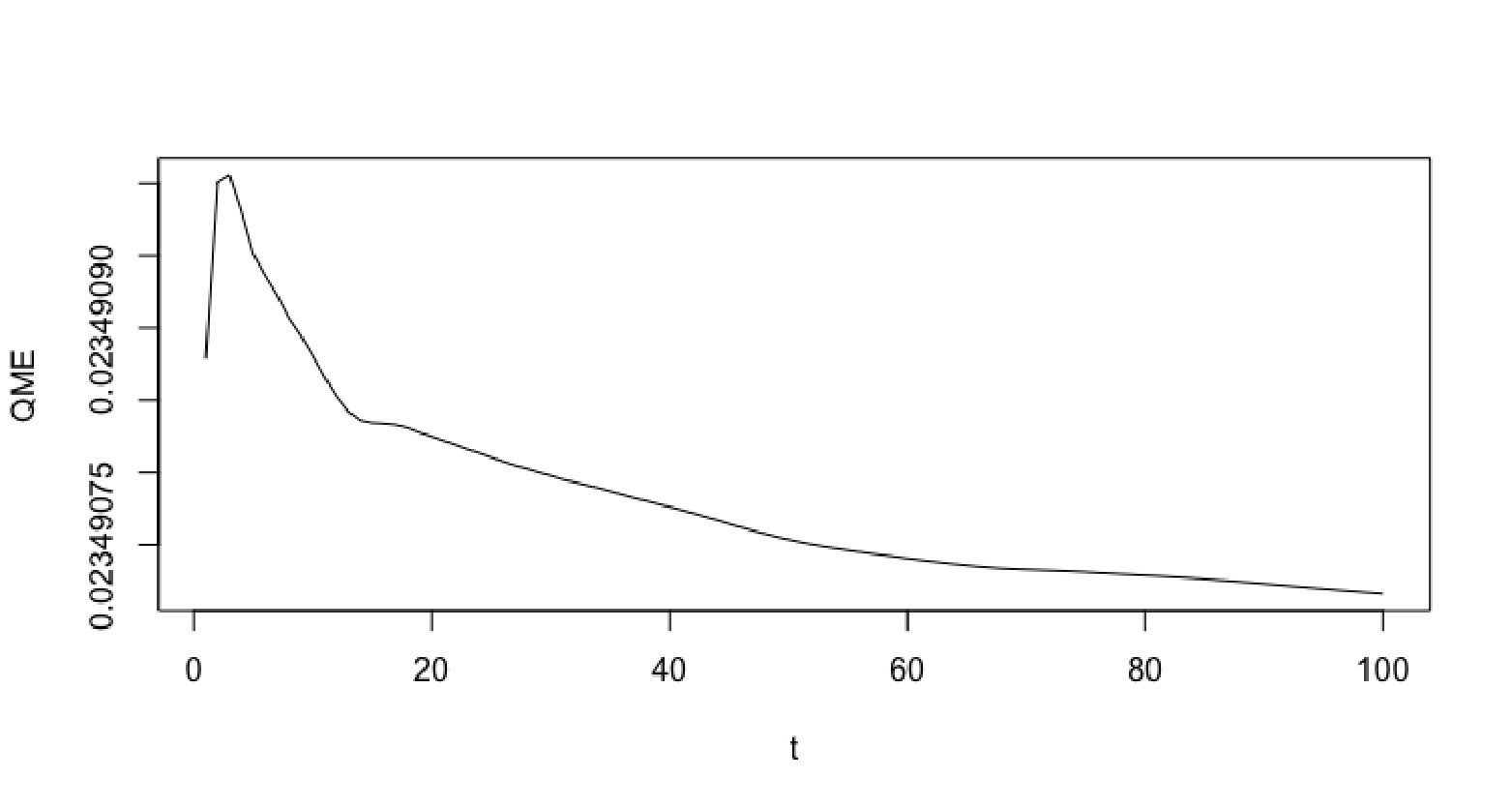}
         \caption{QME for model 4}
         \label{ecm-m4}
     \end{subfigure}
        \caption{Quadratic Mean Error for each model simulated.}
        \label{qme}
\end{figure}

The above graphs show the variation of QME as a function of the number of observations over time. The first thing to note is that the range of the QME is quite small, indicating that, on average, the quadratic difference between the estimated parameter and the true parameter is very small. The second is that the behavior is quite similar for models 1, 2, and 3, where  around $t_{60}$ the behavior of the QME starts to stabilize. Meanwhile, in model 4, around $t_{5}$ the QME values are bigger, and then decrease as the number of observations increases.

\begin{remark} \label{sim-rem}
A GIF file of $t_{1}, t_{2}, \dots, t_{100}$ for each figure presented, can be found in the following links \url{https://github.com/TaniaRoaRojas/GTWR-Simulations}

\end{remark}

\appendix
\section{Appendix}\label{appendix}
\subsection{Proof of Lemma \ref{cov_noise}} \label{ap-cov-noise}
\begin{proof}
From equations \eqref{cov_noise_0} and \eqref{def_noise} we have that the covariance function of $\epsilon=(\epsilon_l)_{l=1:n}$ is 

\begin{equation}\label{cov_epsilon}
\begin{split}
\mathbb{E}(\epsilon_l \epsilon_{l'}) & =\mathbb{E}\left( W^H(V(z_l)) W^H(V(z_{l'})) \right)\\
 & =\mathbb{E}\left( \left( W^H_{t_l^+}(V(u_l)) - W^H_{t_l^-}(V(u_l)) \right) 
\left( W^H_{t_{l'}^+}(V(u_{l'})) - W^H_{t_{l'}^-}(V(u_{l'})) \right) \right) \\
 & = \frac{1}{2}\left[ |t_l -t_{l'} + 2\delta_n|^{2H}+ |t_l -t_{l'} - 2\delta_n|^{2H} - 2|t_l -t_{l'}|^{2H} \right]\\
 & \times  \int_{\Rd} \int_{\mathbb{R}^d} \car_{V(u_l)}(u) f(u-v) \car_{V(u_{l'})}(v) du dv\\
& =\frac{1}{2}\left( \int_{t^{-}_l}^{t_l^+} \int_{t_{l'}^-}^{t_{l'}^+} 2H(2H-1)|t-t'|^{2H-2} dt' dt \right)\\
& \times \left(\int_{\Rd} \int_{\mathbb{R}^d} \car_{V(u_l)}(u) \gamma_{\alpha,d}\|u-v\|^{-d+\alpha} \car_{V(u_{l'})}(v) du dv\right)
 \end{split}
\end{equation}

From \eqref{def_Fourier} we have that the spatial covariance function can be rewritten as 
\begin{equation}\label{cov_colored}
\begin{split}
\lefteqn{\int_{\Rd} \int_{\mathbb{R}^d} \car_{V(u_l)}(u) f(u-v) \car_{V(u_{l'})}(v) du dv}\\ 
& = \int_{\Rd}  \mathcal{F}\car_{V(u_l)}(\xi) \overline{\mathcal{F}\car_{V(u_{l'})}(\xi)} \mu(d\xi)\\
&= \int_{\Rd}  
\left(\int_{\|u-u_l\|\leq \delta_n}e^{-i\xi\cdot u} du\right)
\left(\int_{\|v-u_{l'}\|\leq \delta_n}e^{i\xi\cdot v} dv\right)
\mu(d\xi)\\
&=
\int_{\Rd}  
\left((\delta_n)^d\int_{\|u\|\leq 1}e^{-i\xi\cdot (u_l+\delta_n u)} du\right)
\left((\delta_n)^d\int_{\|v\|\leq 1}e^{i\xi\cdot (u_{l'}+\delta_n v)} dv\right)
\mu(d\xi)\\
&=
(\delta_n)^{2d} 
\int_{\Rd} 
e^{-i\xi\cdot (u_l-u_{l'})}
\left(\int_{\|u\|\leq 1}e^{-i\xi\cdot\delta_n u} du\right)
\left(\int_{\|v\|\leq 1}e^{i\xi\cdot \delta_n v} dv\right)
\mu(d\xi)\\
&=
(\delta_n)^{2d}
\int_{\Rd}  
e^{-i\xi (u_l-u_{l'})/\delta_n}
\left(\int_{\|u\|\leq 1}e^{-i\xi u} du\right)
\left(\int_{\|v\|\leq 1}e^{i\xi v} dv\right)
\left\|\frac{\xi}{\delta_n}\right\|^{-\alpha}(\delta_n)^{-d}d\xi\\
&=
(\delta_n)^{d+\alpha}
\int_{\Rd}  
e^{-i\xi (u_l-u_{l'})/\delta_n}
\left(\int_{\|u\|\leq 1}e^{-i\xi u} du\right)
\left(\int_{\|v\|\leq 1}e^{i\xi v} dv\right)
\left\|\xi\right\|^{-\alpha} d\xi\\
&=
(\delta_n)^{d+\alpha}
\int_{\Rd}  
\left(\int_{\|u-u_l/\delta_n\|\leq 1}e^{-i\xi u} du\right)
\left(\int_{\|v-u_{l'}/\delta_n\|\leq 1}e^{i\xi v} dv\right)\mu(d\xi)\\
& = (\delta_n)^{d+\alpha} Cov\left( W^H \left(\car_{\{\|u-u_l/\delta_n\|\leq 1\}} \right), W^H \left( \car_{\{\|u-u_{l'}/\delta_n\|\leq 1\}}  \right) \right)
\end{split}
\end{equation}

In particular, for $l=l'$ we obtain the spatial variance 
\begin{equation}\label{cov_colored}
\begin{split}
\lefteqn{\int_{\Rd} \int_{\mathbb{R}^d} \car_{V(u_l)}(u) f(u-v) \car_{V(u_{l})}(v) du dv}\\ 
&=
(\delta_n)^{d+\alpha}
\int_{\Rd}  
\left(\int_{\|u\|\leq 1}e^{-i\xi u} du\right)
\left(\int_{\|v\|\leq 1}e^{i\xi v} dv\right)
\left\|\xi\right\|^{-\alpha} d\xi\\
&=
(\delta_n)^{d+\alpha}
\int_{\Rd}  
\left(\int_{\|u\|\leq 1}e^{-i\xi u} du\right)
\left(\int_{\|v\|\leq 1}e^{i\xi v} dv\right)\mu(d\xi)\\
& = (\delta_n)^{d+\alpha} Var\left( W^H \left( \car_{\{\|u\|\leq 1\}} \right) \right)\\
& = \sigma^2 (\delta_n)^{d+\alpha},
\end{split}
\end{equation}
where $\sigma^2=Var\left( W^H \left( \car_{\{\|u\|\leq 1\}} \right) \right)$.
Thus, the variance of the fractional colored noise is
\begin{equation}\label{var_epsilon}
\begin{split}
\mathbb{E}(\epsilon_l^2) 
 & =\sigma^2  2^{2H} (\delta_n)^{2H+d+\alpha} .
 \end{split}
\end{equation}
\end{proof}
\vspace{-2cm}

\subsection{Proof of Lemma \ref{conv_denom}} \label{ap-conv-denom}
\begin{proof}
The $jk_{th}$ component of the matrix $X^T\W(z_i)X$ is
\begin{equation}\label{XWXjk}
  (X^{T}\W(z_i)X)_{jk}= \sum_{l=1}^{n}X_{lj}X_{lk}\W_{il}.
\end{equation}

We study the asymptotic expectation of \eqref{XWXjk}, from assumption \ref{C1} we obtain 

\begin{equation}\label{Exp_XWXjk}
\begin{split}
\lefteqn{\frac{1}{nh^{d+1}}\E\left(\left(X^{T}\W(z_i)X\right)_{jk}\right)}\\
& = \frac{1}{nh^{d+1}}\sum_{l=1}^{n}\ \chi_{jk}(z_l,z_l) K_h\left(z_{l}-z_{i}\right)\\
& = \frac{1}{h^{d+1}}\int_{\mathbb{R}^{d+1}}\sum_{l=1}^{n} \chi_{jk}(z_l,z_l) K_h\left(z_{l}-z_{i}\right)\car_{V(z_l)}(z) dz\\
&\approx \; \frac{1}{h^{d+1}}\int_{\mathbb{R}^{d+1}}\chi_{jk}(z,z) K_h\left(z-z_{i}\right)dz\\
&= \int_{\mathbb{R}^{d+1}}\chi_{jk}(z_i+hz,z_i+hz) K\left(z\right)dz\\
&\approx \;  \chi_{jk}(z_i,z_i) + \mathcal{O}\left(|h|^{\alpha_{\chi}}\right).
\end{split}
\end{equation}


\begin{remark}
Note that condition \eqref{C2} implies that the covariance matrix $\chi(z_i,z_i)=\left(\chi_{jk}(z_i,z_i)\right)_{j,k=1:n}$ is an invertible matrix.
\end{remark}

Continuing, we calculate the variance of \eqref{XWXjk}.

\begin{eqnarray}\label{split_var_D}
Var\left(\left(X^{T}\W(z_{i})X\right)_{jk}\right)  &=& \sum_{l,l'=1}^n \Gamma_{jk}(z_l,z_{l'})K_h(z_l-z_i) K_h(z_{l'}-z_i)\nonumber\\
    &=& \sum_{l=1}^n \Gamma_{jk}(z_l,z_{l})\W_{il}^2\nonumber\\
    &+& \sum_{\substack{1\leq l\neq l'\leq n \\ \|z_l-z_{l'}\|\leq k\delta_n}} \Gamma_{jk}(z_l,z_{l'})\W_{il}\W_{il'} \nonumber\\ 
    &+& \sum_{\substack{1\leq l\neq l'\leq n \\ \|z_l-z_{l'}\| > k\delta_n}} \Gamma_{jk}(z_l,z_{l'})\W_{il}\W_{il'}\nonumber\\
    &:=& D^{(1)}_{jk,n}(z_i) + D^{(2)}_{jk,n}(z_i)  + D^{(3)}_{jk,n}(z_i),
\end{eqnarray}

First, we study the term $D^{(1)}_{jk,n}(z_i) $ in \eqref{split_var_D}. Let us consider the case $0<\|z_l-z_{i}\| < \delta_n$ 

\begin{equation*}
 \frac{1}{n^{2}h^{2(d+1)}} D^{(1,1)}_{jk,n}(z_i)  = \frac{1}{n^{2}h^{2(d+1)}}  \sum_{\substack{1\leq l \leq n \\ \|z_l-z_{i}\| < \delta_n}}  \Gamma_{jk}(z_l,z_l)K^2_h\left(z_l-z_i\right),
\end{equation*}

Using \ref{C2} (iv) and Remark \ref{grilla}, we can obtain
\begin{eqnarray}
\frac{1}{n^{2}h^{2(d+1)}} D^{(1,1)}_{jk,n}(z_i) & \preceq & 2^{(d+1)} \frac{C_{k,d}}{n^{2}h^{2(d+1)}}   \int_{\mathbb{R}^{d+1}}   K^2_h\left(z -z_i\right)dz \nonumber \\
 &=& 2^{(d+1)} \frac{C_{k,d}}{n^{2}h^{(d+1)}}  \|K\|_{2}^{2} \label{d12}
\end{eqnarray}\\

Now, we consider the case $0<\|z_l-z_{i}\|\geq \delta_n$. Using  Assumption \ref{K1} (iv), we obtain

\begin{align}
 \frac{1}{n^{2}h^{2(d+1)}} D^{(1,2)}_{jk,n}(z_i)  &= \frac{1}{n^{2}h^{2(d+1)}}  \sum_{\substack{1\leq l \leq n \\ \|z_l-z_{i}\| \geq \delta_n}}  \Gamma_{jk}(z_l,z_l)K^2_h\left(z_l-z_i\right) \nonumber \\
&=   \frac{1}{n^{2}h^{2(d+1)}}   \sum_{\substack{1\leq l \leq n \\ \|z_l-z_{i}\| \geq \delta_n}}  \Gamma_{jk}(z_l,z_l)f_K\left( \Vert z_l-z_i \Vert \right) K_h\left(z_l-z_i\right)  \nonumber \\
&\leq     \frac{1}{n^{2}h^{2(d+1)}} \dfrac{L(n)}{n^{\gamma}}   \sum_{\substack{1\leq l \leq n \\ \|z_l-z_{i}\| \geq \delta_n}}  \Gamma_{jk}(z_l,z_l) K_h\left(z_l-z_i\right) \nonumber \\
&\leq    \frac{L(n)}{n^{1+\gamma}} \frac{1}{ h^{2(d+1)}} \int_{\mathbb{R}^{d+1}}  \sum_{l=1}^{n} \left\vert  \Gamma_{jk}(z_l,z_l) \right\vert  K_h\left(z_l-z_i\right)  \car_{V(z_l)}(z) dz
 \nonumber \\
&\approx   \frac{L(n)}{n^{1+\gamma}} \frac{1}{ h^{2(d+1)}}  \int_{\mathbb{R}^{d+1}} \left\vert  \Gamma_{jk}(z,z)  \right\vert  K_h\left(z-z_{i}\right)dz  \nonumber \\
&\preceq \frac{L(n)}{n^{1+\gamma}} \frac{C_{k,d}}{ h^{(d+1)}}  \int_{\mathbb{R}^{d+1}}  K\left(z \right)dz =  \frac{L(n)}{n^{1+\gamma}} \frac{C_{k,d}}{ h^{(d+1)}},   \label{d13}
\end{align}
where in the last inequality we use \ref{C2} (iv).  Consequently,  by \eqref{d12} and \eqref{d13}, we can get 
\begin{equation}
 \frac{1}{n^{2}h^{2(d+1)}} D^{(1,1)}_{jk,n}(z_i)  \preceq  C \frac{L(n)}{n^{1+\gamma}} \frac{1}{ h^{(d+1)}} = \mathcal{O} \left( \frac{L(n)}{n^{1+\gamma}} \frac{1}{ h^{(d+1)}} \right) \label{d01}
 \end{equation}

Secondly, we consider the term $D^{(2)}_{jk,n}(z_i)$ in \eqref{split_var_D}, i.e. when $0<\|z_l-z_{l'}\|\leq k\delta_n$

\begin{equation}\label{D2_0}
\begin{split}
\lefteqn{   \frac{1}{nh^{(d+1)}} D^{(2)}_{jk,n}(z_i) }\\
 & =   \frac{1}{n h^{(d+1)}} \sum_{\substack{1\leq l\neq l'\leq n \\ \|z_l-z_{l'}\|\leq k\delta_n}} \Gamma_{jk}(z_l,z_{l'})K_h(z_l-z_i)K_h(z_{l'}-z_i) \\
 & =    \frac{1}{n h^{(d+1)}} \left[ \sum_{\substack{1\leq l\neq l'\leq n \\ \|z_l-z_{l'}\|\leq k\delta_n}} \Gamma_{jk}(z_l,z_{l})K_h(z_l-z_i)K_h(z_{l}-z_i)\right.\\
 & + \sum_{\substack{1\leq l\neq l'\leq n \\ \|z_l-z_{l'}\|\leq k\delta_n}}\Gamma_{jk}(z_l,z_{l}) K_h(z_l-z_i)\left(K_h(z_{l'}-z_i) - K_h(z_{l}-z_i)\right)\\
& + \sum_{\substack{1\leq l\neq l'\leq n \\ \|z_l-z_{l'}\|\leq k\delta_n}} \left(\Gamma_{jk}(z_l,z_{l'})-\Gamma_{jk}(z_l,z_l)\right)K_h(z_l-z_i)K_h(z_{l}-z_i)\\
&+ \left.\sum_{\substack{1\leq l\neq l'\leq n \\ \|z_l-z_{l'}\|\leq k\delta_n}} \left(\Gamma_{jk}(z_l,z_{l'})-\Gamma_{jk}(z_l,z_l)\right)K_h(z_l-z_i)\left(K_h(z_{l'}-z_i)-K_h(z_{l}-z_i)\right)\right]
\end{split}
\end{equation}

From regularity condition \eqref{C2} and \eqref{K1}

\begin{equation}\label{D2}
\begin{split}
\lefteqn{  \frac{1}{n h^{(d+1)}}D^{(2)}_{jk,n}(z_i) }\\
& \leq   \frac{1}{n h^{(d+1)}}  \left[ \sum_{\substack{1\leq l\neq l'\leq n \\ \|z_l-z_{l'}\|\leq k\delta_n}} \Gamma_{jk}(z_l,z_{l})K^2_h(z_l-z_i)\right.
 \\
 & +  C_K\sum_{\substack{1\leq l\neq l'\leq n \\ \|z_l-z_{l'}\|\leq k\delta_n}}\Gamma_{jk}(z_l,z_{l}) K_h(z_l-z_i)\left\|z_{l} - z_{l'}\right\|^{\alpha_{K}}\\
& + C_{\Gamma} \sum_{\substack{1\leq l\neq l'\leq n \\ \|z_l-z_{l'}\|\leq k\delta_n}} \left\|z_l-z_{l'}\right\|^{\alpha_{\Gamma}}K_h^2(z_l-z_i)\\
& +C_k C_{\Gamma}\left. \sum_{\substack{1\leq l\neq l'\leq n \\ \|z_l-z_{l'}\|\leq k\delta_n}} \left\|z_l-z_{l'}\right\|^{\alpha_{\Gamma}}K_h(z_l-z_i)\left\|z_{l} - z_{l'}\right\|^{\alpha_{K}}\right]\\
&= D^{(2,1)}_{jk,n}+D^{(2,2)}_{jk,n}+D^{(2,3)}_{jk,n}+D^{(2,4)}_{jk,n}.
\end{split}
\end{equation}

Note that, similarly to \eqref{volume_V_zl}  we have 
$\frac{1}{n}\sum_{l'=1}^n\car_{\{0<\|z_l-z_{l'}\|\leq k \delta_n\}} \appn \frac{k^{d+1}}{n}$.
Therefore, by \eqref{D2}, we can get 
\begin{equation}\label{D21}
\begin{split}
D^{(2,1)}_{jk,n} & =  \frac{1}{n h^{(d+1)}} \sum_{l=1}^n \Gamma_{jk}(z_l,z_{l})K^2_h(z_l-z_i)\sum_{l'=1}^n\car_{\{0<\|z_l-z_{l'}\|\leq k\delta_n\}}\\
& \approx  \frac{k^{d+1}}{n h^{(d+1)}} \sum_{l=1}^n \Gamma_{jk}(z_l,z_{l})K^2_h(z_l-z_i)
\end{split}
\end{equation}
We split the sum in two cases $0<\|z_l-z_{i}\| < \delta_n$ and $0<\|z_l-z_{i}\| \geq  \delta_n$, then  the same arguments as in the case of the term $D^{(1)}_{jk,n} $,  allow us to obtain
   
\begin{equation}
\begin{split}
 \frac{1}{n h^{(d+1)}} D^{(2,1)}_{jk,n} \; & \preceq  \frac{k^{d+1}}{n^{2} h^{2(d+1)}} \sum_{l=1}^n \Gamma_{jk}(z_l,z_{l})K^2_h(z_l-z_i) \\
 & \preceq \frac{k^{d+1}}{n^{2}h^{2(d+1)}}  \sum_{\substack{1\leq l \leq n \\ \|z_l-z_{i}\| < \delta_n}}  \Gamma_{jk}(z_l,z_l)K^2_h\left(z_l-z_i\right) \nonumber \\
 & + \frac{k^{d+1}}{n^{2}h^{2(d+1)}} \dfrac{L(n)}{n^{\gamma}}   \sum_{\substack{1\leq l \leq n \\ \|z_l-z_{i}\| \geq \delta_n}}  \Gamma_{jk}(z_l,z_l) K_h\left(z_l-z_i\right) \nonumber \\
& \preceq 2^{(d+1)} \frac{C_{k,d}}{n^{2}h^{2(d+1)}}   \int_{\mathbb{R}^{d+1}}   K^2_h\left(z -z_i\right)dz \nonumber \\  
& + \frac{L(n)}{n^{1+\gamma}} \frac{k^{d+1}}{ h^{2(d+1)}} \int_{\mathbb{R}^{d+1}}  \sum_{l=1}^{n} \left\vert  \Gamma_{jk}(z_l,z_l) \right\vert  K_h\left(z_l-z_i\right)  \car_{V(z_l)}(z) dz \nonumber \\ 
 & \preceq \; 2^{(d+1)} \frac{C_{k,d}}{n^{2}h^{(d+1)}}  \|K\|_{2}^{2}  + C_{k,d} \frac{L(n)}{n^{1+\gamma}} \frac{k^{d+1}}{ h^{2(d+1)}}  \int_{\mathbb{R}^{d+1}}   K_h\left(z-z_{i}\right)dz  \nonumber \\
& \leq C \frac{L(n)}{n^{1+\gamma}} \frac{k^{d+1}}{ h^{(d+1)}} 
\end{split}
\end{equation}

Continuing, we have that for $D^{(2,2)}_{jk,n}$ similarly to the previous terms 

\begin{equation}\label{D22}
\begin{split}
\frac{1}{(\delta_n)^{\alpha_k}}D^{(2,2)}_{jk,n}& = \frac{C_K}{nh^{d+1}(\delta_n)^{\alpha_k}}\sum_{l=1}^n \Gamma_{jk}(z_l,z_{l}) K_h(z_l-z_i)\sum_{l'=1}^n\left\|z_{l} - z_{l'}\right\|^{\alpha_{K}}\car_{\{0<\|z_l-z_{l'}\|\leq k\delta_n\}}\\
&\leq \frac{ C_K (k\delta_n)^{\alpha_k}}{nh^{d+1}(\delta_n)^{\alpha_k}}\sum_{l=1}^n \Gamma_{jk}(z_l,z_{l}) K_h(z_l-z_i)\sum_{l'=1}^n\car_{\{0<\|z_l-z_{l'}\|\leq k\delta_n\}}\\
& \preceq  \frac{ k^{\alpha_K+d+1}C_K}{h^{d+1}} \int_{\mathbb{R}^{d+1}} \Gamma_{jk}(z,z) K_h(z-z_i) dz\\
& =  k^{\alpha_K+d+1}C_K\int_{\mathbb{R}^{d+1}} \Gamma_{jk}(z_i+hz,z_i+hz) K(z) dz\\
& \approx   k^{\alpha_K+d+1}C_K \Gamma_{jk}(z_i,z_i) +
\mathcal{O}(|h|^{\alpha_{\Gamma}}).
\end{split}
\end{equation}

\begin{equation}\label{D23}
\begin{split}
\frac{1}{(\delta_n)^{\alpha_{\Gamma}}} D^{(2,3)}_{jk,n} & = \frac{C_{\Gamma}}{nh^{d+1}(\delta_n)^{\alpha_{\Gamma}}} \sum_{l=1}^n K_h^2(z_l-z_i) \sum_{l'=1}^n \left\|z_{l} - z_{l'}\right\|^{\alpha_{\Gamma}}\car_{\{0<\|z_l-z_{l'}\|\leq k\delta_n\}}\\
&\leq \frac{C_{\Gamma}(k \delta_n)^{\alpha_{\Gamma}} }{nh^{d+1}(\delta_n)^{\alpha_{\Gamma}}} \sum_{l=1}^n K_h^2(z_l-z_i) \sum_{l'=1}^n \car_{\{0<\|z_l-z_{l'}\|\leq k\delta_n\}}\\
& \approx \frac{k^{\alpha_{\Gamma}+d+1}C_{\Gamma}}{h^{d+1}} \int_{\mathbb{R}^{d+1}} K^2_h(z-z_i) dz\\
&=  k^{\alpha_{\Gamma}+d+1}C_{\Gamma} \|K\|_\infty^2.
\end{split}
\end{equation}

\begin{equation}\label{D24}
\begin{split}
\frac{1}{(\delta_n)^{\alpha_{\Gamma}+\alpha_k}} D^{(2,4)}_{jk,n}& = \frac{ C_K C_{\Gamma}}{nh^{d+1}(\delta_n)^{\alpha_{\Gamma}+\alpha_K}} \sum_{l=1}^n K_h(z_l-z_i) \sum_{l'=1}^n \left\|z_{l} - z_{l'}\right\|^{\alpha_{\Gamma}+\alpha_K}\car_{\{0<\|z_l-z_{l'}\|\leq k\delta_n\}}\\
&\leq \frac{ C_K C_{\Gamma}(k\delta_n)^{\alpha_{\Gamma}+\alpha_K} }{nh^{d+1}(\delta_n)^{\alpha_{\Gamma}+\alpha_K}} \sum_{l=1}^n K_h(z_l-z_i) \sum_{l'=1}^n \car_{\{0<\|z_l-z_{l'}\|\leq k\delta_n\}}\\
& \approx  \frac{k^{\alpha_{\Gamma}+\alpha_K+d+1}C_K C_{\Gamma}
}{h^{d+1}} \int_{\mathbb{R}^{d+1}} K_h(z-z_i) dz\\
&=  k^{\alpha_{\Gamma}+\alpha_K+d+1}C_K C_{\Gamma}  \|K\|_\infty.
\end{split}
\end{equation}

Thus, from \eqref{D21}, \eqref{D22}, \eqref{D23}, and \eqref{D24} we have
\begin{equation}\label{D2_2}
\begin{split}
\lefteqn{\frac{1}{n^{2}h^{2(d+1)}} D^{(2)}_{jk,n}(z_i) }\\
& \approx  C \frac{L(n)}{n^{1+\gamma}} \frac{k^{d+1}}{ h^{(d+1)}} 
  + \mathcal{O}\left(|h|^{\alpha_{\Gamma}}\vee (\delta_n)^{\alpha_K}\vee (\delta_n)^{\alpha_{\Gamma}}\right).
 \end{split}
\end{equation}

Finally we consider the term $D^{(3)}_{jk,n}(z_i)$, using Assumptions \eqref{C2} we obtain
\begin{equation}\label{D3}
\begin{split}
\lefteqn{ \frac{1}{n^2h^{2(d+1)}} D^{(3)}_{jk,n} }\\
& = \frac{1}{n^2h^{2(d+1)}} \sum_{1\leq l\neq l'\leq n} \Gamma_{jk}(z_l,z_{l'})K_h(z_l-z_i)K_h(z_{l'}-z_i)\car_{\{ \|z_l-z_{l'}\| > k\delta_n \}} \\
& \leq \frac{C_{k,d,\beta} (\delta_n)^{d+1+\beta} }{n^2h^{2(d+1)}} \sum_{1\leq l\neq l'\leq n} K_h(z_l-z_i)K_h(z_{l'}-z_i)\\
& \approx  \frac{ C_{k,d,\beta} (\delta_n)^{d+1+\beta} }{h^{2(d+1)}} \int_{\mathbb{R}^{d+1}} \int_{\mathbb{R}^{d+1}}K_h(z-z_i)K_h(z'-z_i)dzdz'\\
& = C_{k,d,\beta} (\delta_n)^{d+1+\beta}
\end{split}
\end{equation}

Substituting \eqref{d01}, \eqref{D2_2} and \eqref{D3} into the Equation \eqref{split_var_D}, and using that $2\lambda(\mathcal{S}^{d-1})(\delta_n)^{d+1}=1/n$, we obtain
\begin{align*}
\frac{1}{n^2h^{2(d+1)} }Var\left(\left(X^{T}\W(z_{i})X\right)_{jk}\right)  &\preceq  C \frac{\left(1+k^{(d+1)}\right) L(n) }{n^{1+ \gamma }h^{(d+1)}} + C_{k,d,\theta} (\delta_n)^{d+1+\theta} \\
&\approx \frac{C'}{n^{1+\nu}},
\end{align*}
 where $\gamma >\nu=\frac{\theta}{d+1}$, and $h$ such that  
 $L(n) n^{-1-\gamma}h^{(d+1)} = n^{-1-\theta}$.
 
Whether $\theta>0$ then $\nu>0$ and the $L^2$ rate of $\frac{1}{nh^{d+1}}\left(X^{T}\W(z_{i})X\right)_{jk}$ is faster than $1/n$, therefore Borell-Cantelli lemma allows us to obtain 
$$\frac{1}{nh^{d+1}}\left(X^{T}\W(z_{i})X\right)_{jk}\xrightarrow[n \to \infty]{a.s.} \chi_{j,k}(z_i,z_i).$$

Note that $\theta >-d-1$, thus $1+\nu>0$ and we obtain the $L^2$ convergence and therefore the convergence in probability when $\theta\leq 0$. 

\end{proof}

\begin{remark}
Let us note that the equality $L(n) n^{-1-\gamma}h^{(d+1)}= n^{-1-\theta}$ imposes a condition on the speed at which $h^{(d+1)}$ decreases to zero. In fact, we need that $h^{(d+1)} = \frac{L(n)}{n^{\gamma - \theta/(d+1)}}$ with $\gamma > \theta/(d+1)$. 
\end{remark}

\section*{Acknowledgments}
Héctor Araya was partially supported by FONDECYT 11230051 project.
Lisandro Fermín was partially supported by MathAmSud Tomcat 22-math-10.
Tania Roa was partially supported by FONDECYT 3220043 Postdoc project.
Soledad Torres was partially supported by Basal Project FB210005 and FONDECYT project 1221373.
Lisandro Fermín and Soledad Torres were partially supported by FONDECYT projects 1230807.
Héctor Araya, Tania Roa and Soledad Torres  were partially supported by ECOS210037(C21E07) and Mathamsud AMSUD210023 projects.


\newpage
\textbf{Declaration of generative AI and AI-assisted technologies in the writing process} \\

During the preparation of this work the authors used Google translator and DeepL in order to check grammar. After using this tool/service, the authors reviewed and edited the content as needed and takes full responsibility for the content of the publication.

%
%
%
%
%

\end{document}